\newtheorem*{thm*}{Theorem}
\newtheorem{prop}{Proposition}
\newtheorem{lemma}{Lemma}
\newtheorem{cor}{Corollary}
\theoremstyle{definition}
\newtheorem{definition}{Definition}
\providecommand{\norm}[1]{\lVert#1\rVert}
\providecommand{\abs}[1]{\lvert#1\rvert}
\providecommand{\inner}[2]{\langle#1,#2\rangle}
\begin{document}

\title{Classical Limits of Unbounded Quantities by Strict Quantization}
\author{Thomas L. Browning}
\affiliation{Department of Mathematics, University of California, Berkeley, CA, USA}
\author{Benjamin H.~Feintzeig}
\affiliation{Department of Philosophy, University of Washington, Seattle, WA, USA}
\author{Robin Gates-Redburg}
\affiliation{Department of Mathematics, Western Washington University, Bellingham, WA, USA}
\author{Jonah Librande}
\affiliation{Department of Mathematics, University of Washington, Seattle, WA, USA}
\author{Rory Soiffer}
\affiliation{Department of Computer Science, University of Waterloo, Waterloo, Ontario, Canada}
\date{\today}

\begin{abstract}
    This paper extends the tools of C*-algebraic strict quantization toward analyzing the classical limits of unbounded quantities in quantum theories.  We introduce the approach first in the simple case of finite systems.  Then we apply this approach to analyze the classical limits of unbounded quantities in bosonic quantum field theories with particular attention to number operators and Hamiltonians.  The methods take classical limits in a representation-independent manner and so allow one to compare quantities appearing in inequivalent Fock space representations.
\end{abstract}

\maketitle

\section{Introduction}

Powerful C*-algebraic tools have been developed in the last few decades for analyzing the classical limits of quantum theories.  These tools form the theory of \emph{strict quantization} \citep{Ri89,Ri93,Ri94,La98b,La06,La17} (in contrast to \emph{formal quantization},\citep{Wa05,Wa15} which is used in perturbative quantum field theory).  Working with C*-algebras provides a rich structure in which to construct quantum theories and their classical limits as well as provide physical interpretations.  However, it is sometimes said that one cannot use C*-algebras to model all of the systems of physical interest.  In quantum field theory, for example, researchers often employ more flexible types of *-algebras. \citep{Re16,HoWa01,HoWa10}  One reason is that C*-algebras do not allow one to capture unbounded quantities such as the field operators, field momentum operators, or number operators associated with such systems.  Yet recent developments in the theory of unbounded operator algebras have made precise the relationships between algebras of bounded and unbounded quantities.\citep[e.g.,][]{FrInKu10}  Certain algebras of unbounded quantities can be understood as \emph{completions}, in a relevant topology, of C*-algebras.  In this paper, we leverage this fact to make some first steps toward understanding the classical limits of unbounded quantities starting from the framework of strict quantization.  Our central contribution is to develop tools for taking the classical limits of unbounded quantities, and to illustrate these tools in free bosonic quantum field theories by analyzing classical limits of number operators.

Others have analyzed unbounded quantities in C*-algebraic terms by working in specified Hilbert space representations.  This allows one to consider unbounded operators affiliated with a represented C*-algebra.  This technique is useful for many purposes, but its dependence on a Hilbert space representation has some drawbacks for systems whose kinematical C*-algebras have unitarily inequivalent representations, including quantum field theories.\cite{Ha92}  By contrast, the methods we develop in this paper work directly at the level of the abstract algebras, thus providing a framework one can use to simultaneously compare even unbounded quantities that appear in inequivalent representations.  

The plan of the paper is as follows.  In \S\ref{sec:strict}, we provide background on strict quantization.  In \S\ref{sec:finite}, we develop tools for taking the classical limits of unbounded quantities in a system with finitely many degrees of freedom---e.g., an $n$-particle system.  We use this simpler example to outline key features of a strict quantization that allow one to extend it to unbounded quantities.  This serves as a jumping off point for the generalization of these methods in \S\ref{sec:inf} to linear bosonic field theories with infinitely many degrees of freedom.  In \S\ref{sec:KG} we focus specifically on the Klein-Gordon field and establish the classical limits of number operators associated with inequivalent representations of the kinematical C*-algebra.  In \S\ref{sec:em}, we apply these methods to the free Maxwell field.  We conclude in \S\ref{sec:con} with some discussion.

Much previous work on the classical limit precedes the approach of the present paper.  For example, 
\citet{He74} provides a semi-classical analysis of bosonic field theories in a particular Fock space representation.  Recent work of \citet{AmBrNi19} provides a different framework from the current investigation that also allows one to discuss coherent states in bosonic field theories, which are known\citep{La98b,CoRo12} to be closely related to the Berezin quantization map employed in the current paper.  And \citet{Fa18} extends that previous work in a representation-independent manner, although the Wigner measures used in that work are associated with the alternative to Berezin quantization called Weyl quantization; since Weyl quantization is not positive, the corresponding Wigner measures are in general not positive.\citep[][]{La98b}  We will not provide a comprehensive comparison of different approaches to the classical limit here.  We simply note that our work falls in the tradition of strict quantization, which \citet{Fe20} argues provides an appropriate physical interpretation of based on uniform approximations of observables.  We will also aim to obtain uniform approximations in the classical limit here, even though we rely on pointwise approximations to generate unbounded observables.

\section{Strict Quantization and the Weyl algebra}
\label{sec:strict}

A strict quantization provides the mathematical framework for analyzing classical limits of states and quantities within a C*-algebraic setting by giving one a notion of the limit of a family of C*-algebras.  For background and examples, see \citet{Ri89,Ri93,Ri94} and \citeauthor{La93a}.\citep{La93a,La93b,La98a,La98b,La06,La13,La17}

\begin{definition}
A \emph{strict quantization} consists in a family of C*-algebras $\{\mathfrak{A}_\hbar\}_{\hbar\in[0,1]}$ and a family of quantization maps $\{\mathcal{Q}_\hbar: \mathcal{P}\to \mathfrak{A}_\hbar\}_{\hbar\in[0,1]}$, each of whose domain $\mathcal{P}\subseteq\mathfrak{A}_0$ is a Poisson algebra, with $\mathcal{Q}_0$ the embedding map.  We require further that $\mathcal{Q}_\hbar[\mathcal{P}]$ is norm dense in $\mathfrak{A}_\hbar$ for each $\hbar\in[0,1]$, and that the following conditions are satisfied for all $A,B\in\mathcal{P}$:
\begin{enumerate}[(i)]
\item (\emph{Dirac's condition}) $\lim_{\hbar\to 0}\norm{\frac{i}{\hbar}[\mathcal{Q}_\hbar(A),\mathcal{Q}_\hbar(B)] - \mathcal{Q}_\hbar(\{A,B\})}_\hbar = 0$;

\item (\emph{von Neumann's condition}) $\lim_{\hbar\to 0}\norm{\mathcal{Q}_\hbar(A)\mathcal{Q}_\hbar(B) - \mathcal{Q}_\hbar(AB)}_\hbar = 0$;

\item (\emph{Rieffel's condition}) the map $\hbar\mapsto\norm{\mathcal{Q}_\hbar(A)}_\hbar$ is continuous.
\end{enumerate}
\end{definition}

A strict quantization determines the structure of a continuous field of C*-algebras in which all sections of the form $[\hbar\mapsto\mathcal{Q}_\hbar(A)]$ for $A\in\mathcal{P}$ are continuous (See Thm. II.1.2.4, p. 111 in \citet{La98b}). Thus, a strict quantization defines classical limits of quantities and states as follows.  The classical limit of a family of quantities $\{\mathcal{Q}_\hbar(A)\}_{\hbar\in [0,1]}$ is understood to be the classical quantity $A\in\mathcal{P}$.  A family of states $\{\omega_\hbar\in\mathcal{S}(\mathfrak{A}_\hbar)\}_{\hbar\in[0,1]}$ is called a \emph{continuous field of states} when the map $\hbar\mapsto\omega_\hbar(\mathcal{Q}_\hbar(A))$ is continuous for each $A\in\mathcal{P}$.  The classical limit of a continuous field of states is understood to be the classical state $\omega_0$.

An illustrative example of strict quantization is the quantization of the \emph{Weyl algebra}, which we review now and use later on.  First, we define the Weyl algebra itself.  Let $E$ be a vector space of test functions with a symplectic form $\sigma$.  In \S\ref{sec:finite}, we will focus on the case where $E=\mathbb{R}^{2n}$ and $\sigma$ is the standard symplectic form, but in \S\ref{sec:inf} we will deal with the case where $E$ is infinite-dimensional, so we proceed here with some generality.  The Weyl algebra $\mathcal{W}(E,\hbar\sigma)$ is generated by elements $W_\hbar(F)$ for each $F,G\in E$ with
\begin{align*}
    W_\hbar(F)W_\hbar(G) := e^{-\frac{i\hbar}{2}\sigma(F,G)}W_\hbar(F+G) &&
    W_\hbar(F)^* := W_\hbar(-F).
\end{align*}
The elements of the form $W_\hbar(F)$ are linearly independent, and we denote the their linear span as $\Delta(E,\hbar\sigma)$.  There is a unique C*-norm on $\Delta(E,\hbar\sigma)$ called the minimal regular norm.\citep{MaSiTeVe74,BiHoRi04a} We define $\mathcal{W}(E,\hbar\sigma)$ as the completion of $\Delta(E,\hbar\sigma)$ with respect to this norm.

The commutative algebra $\mathcal{W}(E,0)$ is *-isomorphic to the algebra $AP(E')$ of $\sigma(E',E)$-continuous almost periodic functions on some topological dual $E'$ to $E$, when $E$ is given a vector space topology (See p. 2902 of \citet{BiHoRi04a}).  Thus, the algebra of classical quantities can be interpreted in a natural way as an algebra of functions on a phase space $E'$.  Furthermore, the *-algebra $\Delta(E,0)$ carries a Poisson bracket defined as the extension of
\[\{W_0(F),W_0(G)\} := \sigma(F,G)W_0(F+G)\]
(See p. 334 of \citet{BiHoRi04b} or Eq. 2.15, p. 11 of \citet{HoRiSc08}) and so can serve as the domain of a quantization map.  We note that in the special case where $E = \mathbb{R}^{2n}$, the phase space is the dual $E' = \mathbb{R}^{2n}$ and so $\mathcal{W}(\mathbb{R}^{2n},0)\cong AP(\mathbb{R}^{2n})$.

When $E = \mathbb{R}^{2n}$, we have further information about the algebras $\mathcal{W}(\mathbb{R}^{2n},\hbar\sigma)$ for $\hbar>0$.  These algebras have a familiar Hilbert space representation on the Hilbert space $L^2(\mathbb{R}^n)$, which we denote $\pi^S_\hbar:\mathcal{W}(E,\hbar\sigma)\to\mathcal{B}(L^2(\mathbb{R}^n))$ and call the \emph{Schr\"odinger representation}:
\[(\pi^S_\hbar(W_\hbar(a,b))\psi)(x) := e^{\frac{i\hbar a\cdot b}{2}}e^{ib\cdot x}\psi(x + \hbar a)\]
for all $a,b\in\mathbb{R}^{n}$.  Since the families $t\mapsto \pi^S_\hbar(W_\hbar(ta,tb))$ are weak operator continuous, Stone's theorem (See p. 264 of \citet{ReSi80}) implies that these one-parameter unitary groups have self-adjoint generators.  These generators are unbounded operators, corresponding to the standard position and momentum operators for $n$ particles, and so  this Hilbert space representation reproduces the ordinary formulation of quantum mechanics.

To define a strict quantization, we work with the C*-algebras $\mathfrak{A}_\hbar:= \mathcal{W}(E,\hbar\sigma)$ for each $\hbar\in[0,1]$, where $\mathfrak{A}_0 = \mathcal{W}(E,0)$ contains $\mathcal{P}:=\Delta(E,0)$ as a dense Poisson subalgebra.  We define the \emph{Weyl quantization} maps $\mathcal{Q}^W_\hbar:\Delta(E,0)\to\mathcal{W}(E,\hbar\sigma)$ as the linear extension of
\[\mathcal{Q}^W_\hbar(W_0(F)) := W_\hbar(F).\]
\citet{BiHoRi04b} show that this structure indeed forms a strict quantization.  Thus, this structure allows one to analyze classical limits of states and quantities in the Weyl algebra.

\section{Finite Systems}
\label{sec:finite}

If one knows that quantization maps $\mathcal{Q}_\hbar: \mathcal{P}\to\mathfrak{A}_\hbar$ not only satisfy the conditions (i)-(iii) of a strict quantization, but furthermore are continuous in a locally convex topology, then one can continuously extend these maps to the completions of the respective algebras in that topology.  Recent work on algebras of unbounded operators \citep{BaFrInTr06,FrInKu07,BaFrInTr08,BaFrInTr10,FrInKu10} shows that such a completion of a C*-algebra $\mathfrak{A}$, which we will in general denote by $\tilde{\mathfrak{A}}$, will be at least a partial *-algebra containing unbounded operators with some discernible structure.

For what follows, we will not need the details of the rich structure theory that has been developed for algebras of unbounded operators. \citep{Sc90,In98,AnInTr02}  Instead, the issue we encounter in applying these ideas in quantization is that quantization maps may fail to be continuous, and continuity is required to guarantee a unique extension of a quantization map to a completion.  For example, the Weyl quantization maps defined in the previous section fail to be continuous in the norm, and hence weak, topologies.  This implies that one cannot continuously extend the Weyl quantization maps to the completion of the Weyl algebra.  This is unfortunate because the Weyl quantization maps have some nice properties; they can be defined with the minimal algebraic structure of the Weyl algebra even on an infinite dimensional phase space.  However, another quantization prescription called \emph{Berezin quantization} is known to be continuous in the norm, and hence weak, topologies.  Berezin quantization is well-defined for systems with finitely many degrees of freedom with phase space $\mathbb{R}^{2n}$, but the standard definition involves phase space integrals that are not in general well-defined when $E$ is infinite-dimensional.  Our goal in this section is thus to put Berezin quantization into a minimal algebraic form so that it can be applied even to systems whose phase space is infinite-dimensional.  As we proceed, we will use the simplified example of a system with finitely many degrees of freedom to illustrate the basic concepts of our approach to dealing with classical limits of unbounded operators.

\subsection{Positive Quantization}

We begin by defining the Berezin quantization maps for a system with phase space $\mathbb{R}^{2n}$.  We will work with the algebras $\mathfrak{A}_\hbar := \mathcal{K}(L^2(\mathbb{R}^n))$ of compact operators on $L^2(\mathbb{R}^n)$ for each $\hbar\in(0,1]$ and the algebra $\mathfrak{A}_0:= C_0(\mathbb{R}^{2n})$ of continuous functions vanishing at infinity on $\mathbb{R}^{2n}$, the latter of which contains the dense Poisson subalgebra $\mathcal{P}:= C_c^\infty(\mathbb{R}^{2n})$ of smooth, compactly supported functions.  The Berezin quantization maps involve integrals over phase space of certain functions of coherent states, but since these integrals are not in general meaningful on infinite dimensional phase spaces, we will seek to put the quantization maps in a different form.  A \emph{coherent state} for $(p,q)\in\mathbb{R}^{2n}$ is a vector $\psi_\hbar^{(p,q)}\in L^2({\mathbb{R}^n})$ of the form
\[\psi_\hbar^{(p,q)}(x) := \frac{1}{(\pi\hbar)^{n/4}}\exp\bigg(-\frac{ip\cdot q}{2\hbar} + \frac{ip\cdot x}{\hbar} - \frac{(x-q)^2}{2\hbar}\bigg).\]
Here and in what follows $x^2$ denotes the dot product $x\cdot x$ for any $x\in\mathbb{R}^{m}$.  The Berezin quantization maps $\mathcal{Q}_\hbar^B: C_c^\infty(\mathbb{R}^{2n})\to \mathcal{K}(L^2(\mathbb{R}^n))$ are then defined for each $f\in C_c^\infty(\mathbb{R}^{2n})$ by
\[(\mathcal{Q}_\hbar^B(f)\psi)(x) := \frac{1}{(2\pi\hbar)^n}\int_{\mathbb{R}^{2n}}f(p,q)\psi_\hbar^{(p,q)}(x)\Big\langle\psi_\hbar^{(p,q)},\psi\Big\rangle dpdq\]
for each $\psi\in L^2(\mathbb{R}^n)$, where $\inner{\cdot}{\cdot}$ is the $L^2$ inner product.

It is known that this quantization map is positive, which implies that it is continuous in the norm (See Prop. 1.3.7, p. 47 of \citet{La98b}), and hence weak, topologies.  First, this entails that $\mathcal{Q}^B_\hbar$ extends continuously to a map $C_0(\mathbb{R}^{2n})\to\mathcal{K}(L^2(\mathbb{R}^n))$, which we will denote by the same symbol.  Second, this implies that $\mathcal{Q}_\hbar^B$ extends continuously to the weak completions of the domain and range, which we now denote $\tilde{\mathcal{Q}}_\hbar^B: \tilde{C_0}(\mathbb{R}^{2n})\to\tilde{\mathcal{K}}(L^2(\mathbb{R}^n))$.  The algebra $\tilde{C}_0(\mathbb{R}^{2n})$ contains many unbounded and even discontinuous functions (See Ex. 4.1, p. 371 of \citet{BaFrInTr06}), but for our purposes we note that it at least contains as a subalgebra the algebra of all continuous functions $C(\mathbb{R}^{2n})\subseteq \tilde{C}_0(\mathbb{R}^{2n})$, and so contains unbounded functions $\Phi_0(a,b): \mathbb{R}^{2n}\to \mathbb{C}$ for fixed $a,b\in\mathbb{R}^n$ of the form
\[\Phi_0(a,b)(p,q) := a\cdot p + b\cdot q\]
for each $p,q\in\mathbb{R}^n$, where $\cdot$ denotes the usual dot product.  These functions include standard classical position and momentum observables.  Similarly, $\tilde{\mathcal{K}}(L^2(\mathbb{R}^n))$ contains many unbounded operators (See Ex. 4.3, p. 372 of \citet{BaFrInTr06}), including all operators $\Phi_\hbar(a,b)$ on $L^2(\mathbb{R}^n)$ for fixed $a,b\in\mathbb{R}^n$ of the form
\[(\Phi_\hbar(a,b)\psi)(x) := (i\hbar a\cdot\nabla\psi)(x) + (b\cdot x)\psi(x)\]
acting on the dense domain of vectors $\psi\in C_c^\infty(\mathbb{R}^n)\subset L^2(\mathbb{R}^n)$.  Again, these operators include standard quantum position and momentum observables.

Note that $AP(\mathbb{R}^{2n})\subseteq C(\mathbb{R}^{2n})\subseteq \tilde{C}_0(\mathbb{R}^{2n})$ and so one can directly compare the maps $\tilde{\mathcal{Q}}_\hbar^B$ and $\mathcal{Q}_\hbar^W$ on the domain $\Delta(E,0)\subseteq AP(\mathbb{R}^{2n}$) (where we freely identify $\mathcal{W}(\mathbb{R}^{2n},0)$ with $AP(\mathbb{R}^{2n})$).  The comparison actually follows directly from the known relationship of Weyl quantization with Berezin quantization (See Eq. 2.117, p. 144 of \citet{La98b}) or from the representation of Berezin quantization in terms of Toeplitz operators on a Segal-Bargmann space (See p. 294 of \citet{BeCo86}).  Here, we will establish the comparison in the Schr\"odinger representation of the Weyl algebra on $L^2(\mathbb{R}^n)$ by direct computation. (See also \citet{Wa09} for a generalization related to Rieffel's deformation.)

Define $c_\hbar(a,b) = e^{-\frac{\hbar}{4}(a^2+b^2)}$ for all $(a,b)\in\mathbb{R}^{2n}$.  Let $c\mathcal{Q}^W_\hbar: \Delta(E,0)\to\mathcal{W}(\mathbb{R}^{2n},\hbar\sigma)$ be the linear extension of the map defined on the generators $W_0(a,b)\in AP(\mathbb{R}^{2n})$ by
\[c\mathcal{Q}^W_\hbar(W_0(a,b)) := c_\hbar(a,b)\mathcal{Q}^W_\hbar(W_0(a,b)).\]
The following proposition establishes that $c\mathcal{Q}_\hbar^W$ is equivalent to the extension of $\mathcal{Q}_\hbar^B$.

\begin{prop}
\label{prop:schr}
For any $f\in AP(\mathbb{R}^{2n})$, $\pi^S_\hbar\big(c\mathcal{Q}_\hbar^W(f)\big) = \tilde{\mathcal{Q}}_\hbar^B(f)$.  In other words, the diagram in Fig. \ref{fig:1} commutes.

\begin{figure}[h]
 \centerline{
\xymatrix{
{\Delta(E,0)} \ar[drr]_{\tilde{\mathcal{Q}}^B_\hbar} \ar[rr]^{c\mathcal{Q}^W_\hbar} & & {\mathcal{W}(\mathbb{R}^{2n},\hbar\sigma)} \ar[d]^{\pi^S_\hbar}\\
 & & {\mathcal{B}(L^2(\mathbb{R}^n))}
}
}
  \caption{Commutative diagram for Prop. \ref{prop:schr}.}
  \label{fig:1}
\end{figure}
\end{prop}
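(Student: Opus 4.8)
The plan is to prove the identity on the generators $W_0(a,b)$ and then extend by linearity. Both maps in question are linear, and $\Delta(E,0)$ is by definition the linear span of the $W_0(a,b)$; moreover $\tilde{\mathcal{Q}}^B_\hbar$ is the weakly continuous extension of the Berezin map, so once the identity is established on the dense subspace $\Delta(E,0)$ it propagates to all of $AP(\mathbb{R}^{2n})$ on which the common extension is defined. It therefore suffices to fix $a,b\in\mathbb{R}^n$ and prove $\pi^S_\hbar(c\mathcal{Q}^W_\hbar(W_0(a,b))) = \tilde{\mathcal{Q}}^B_\hbar(W_0(a,b))$. The left-hand side is immediate from the definitions: $c\mathcal{Q}^W_\hbar(W_0(a,b)) = c_\hbar(a,b)W_\hbar(a,b)$, so by the Schr\"odinger formula
\[
\big(\pi^S_\hbar(c\mathcal{Q}^W_\hbar(W_0(a,b)))\psi\big)(x) = c_\hbar(a,b)\, e^{\frac{i\hbar a\cdot b}{2}} e^{ib\cdot x}\,\psi(x+\hbar a).
\]
The work is therefore entirely in computing the right-hand side and matching it to this expression.

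To evaluate $\tilde{\mathcal{Q}}^B_\hbar$ on the generator, I would first note that under the identification $\mathcal{W}(\mathbb{R}^{2n},0)\cong AP(\mathbb{R}^{2n})$ the element $W_0(a,b)$ is the character $(p,q)\mapsto e^{i(a\cdot p + b\cdot q)}$, which lies in $C(\mathbb{R}^{2n})\subseteq\tilde{C}_0(\mathbb{R}^{2n})$ but not in $C_0(\mathbb{R}^{2n})$, so the defining integral cannot be applied verbatim. However, because the coherent states furnish a resolution of the identity, $\frac{1}{(2\pi\hbar)^n}\int \abs{\inner{\psi_\hbar^{(p,q)}}{\psi}}^2\,dp\,dq = \norm{\psi}^2$, the sesquilinear form $\inner{\phi}{\tilde{\mathcal{Q}}^B_\hbar(W_0(a,b))\psi}$ is given, for this bounded symbol, by the absolutely convergent integral
\[
\frac{1}{(2\pi\hbar)^n}\int_{\mathbb{R}^{2n}} e^{i(a\cdot p + b\cdot q)}\,\inner{\phi}{\psi_\hbar^{(p,q)}}\inner{\psi_\hbar^{(p,q)}}{\psi}\,dp\,dq,
\]
and this form computes the weak extension $\tilde{\mathcal{Q}}^B_\hbar$. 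Writing it as an integral kernel $K(x,y)$ paired with $\overline{\phi(x)}\psi(y)$ and inserting the explicit coherent-state Gaussians reduces the claim to a pair of Gaussian integrals.

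The computation then runs as follows. The $p$-integral $\int e^{ia\cdot p}e^{ip\cdot(x-y)/\hbar}\,dp$ produces a delta function forcing $y = x+\hbar a$, which reproduces exactly the argument shift $\psi(x+\hbar a)$ of the Schr\"odinger formula and cancels the $(2\pi\hbar)^{-n}$ prefactor. Setting $y = x+\hbar a$, the remaining $q$-integral is Gaussian; completing the square in $q$ yields the factor $e^{-\hbar b^2/4}$ from the imaginary linear term, while collecting the leftover quadratic terms in $x$ and $a$ produces $e^{-\hbar a^2/4}$ together with the phase $e^{ib\cdot x}e^{i\hbar a\cdot b/2}$, and the coherent-state normalization $(\pi\hbar)^{-n/2}$ is exactly absorbed by the Gaussian normalization $(\pi\hbar)^{n/2}$. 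The outcome is $c_\hbar(a,b)\,e^{i\hbar a\cdot b/2}\,e^{ib\cdot x}\,\psi(x+\hbar a)$, which matches the left-hand side.

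The main obstacle is not the algebra of completing the square but the justification that the absolutely convergent sesquilinear-form integral genuinely computes the weak extension $\tilde{\mathcal{Q}}^B_\hbar$ on a symbol that is merely bounded rather than vanishing at infinity, together with the distributional handling of the $p$-integration. I would discharge this by approximating $W_0(a,b)$ in the weak topology by a net in $C_0(\mathbb{R}^{2n})$ (for instance multiplying by Gaussian cutoffs), applying the original integral formula where it is literally valid, and passing to the limit using dominated convergence with the resolution-of-identity bound controlling the tails uniformly. Beyond that, careful bookkeeping of the Gaussian normalization constants is the only place where sign or factor errors can creep in.
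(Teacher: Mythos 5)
Your proposal is correct and follows essentially the same route as the paper's proof: reduce to the generators $W_0(a,b)$, identify them with the characters $(p,q)\mapsto e^{i(a\cdot p+b\cdot q)}$, and evaluate the coherent-state integral by explicit Gaussian computation in the Schr\"odinger representation, arriving at $c_\hbar(a,b)\,e^{i\hbar a\cdot b/2}\,e^{ib\cdot x}\,\psi(x+\hbar a)$. The differences are minor: the paper performs the $q$-Gaussian integral first and then invokes the Fourier inversion theorem to handle the remaining $(y,p)$-integration (which is the rigorous counterpart of your delta-function step), and it silently applies the integral formula to the bounded symbol $e^{i(a\cdot p+b\cdot q)}\notin C_0(\mathbb{R}^{2n})$, a point your cutoff-and-dominated-convergence argument justifies explicitly.
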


\begin{proof}
It suffices to show the identity holds on the generators $W_0(a,b)\in \Delta(E,0)$ for arbitrary $a,b\in\mathbb{R}^n$.

To show this, we first note the Fourier inversion theorem implies that for any $\psi\in L^2(\mathbb{R}^{n})$,
\begin{align*}
\psi(x+\hbar a)\exp\bigg(\frac{i\hbar a\cdot b}{2}-\frac{\hbar a^2}{4}\bigg) = \int_{\mathbb{R}^n}\int_{\mathbb{R}^n}\exp(2\pi i(\hbar a - y)\cdot \xi) \psi(x+y)\exp\bigg(\frac{ib\cdot y}{2}- \frac{y^2}{4\hbar}\bigg) dyd\xi
\end{align*}
Setting $p = 2\pi \hbar\xi$ gives the equation
\begin{align*}
    \psi(x+\hbar a)\exp\bigg(\frac{i\hbar a\cdot b}{2}-\frac{\hbar a^2}{4}\bigg) &= \frac{1}{(2\pi\hbar)^n}\int_{\mathbb{R}^n}\int_{\mathbb{R}^n}\exp\bigg(\frac{i}{\hbar}(\hbar a - y)\cdot p\bigg)\psi(x+y)\exp\bigg(\frac{ib\cdot y}{2} - \frac{y^2}{4\hbar}\bigg)dydp\\
    &=\frac{1}{(2\pi\hbar)^n}\int_{\mathbb{R}^n}\int_{\mathbb{R}^n} \psi(x+y)\exp\bigg(ia\cdot p + \frac{ib\cdot y}{2} - \frac{ip\cdot y}{\hbar} - \frac{y^2}{4\hbar}\bigg)dydp.
\end{align*}
This implies that
\begin{equation*}
\begin{split}
    &(\tilde{\mathcal{Q}}_\hbar^B(W_0(a,b))\psi)(x)\\
    &= \frac{1}{(\pi\hbar)^{n/2}(2\pi\hbar)^n}\int_{\mathbb{R}^n}\int_{\mathbb{R}^n}\int_{\mathbb{R}^n}\psi(y)\exp\bigg(ia\cdot p + ib\cdot q + \frac{ip\cdot x}{\hbar} - \frac{(x-q)^2}{2\hbar} - \frac{ip\cdot y}{\hbar} - \frac{(y-q)^2}{2\hbar}\bigg)dqdydp\\
    &\begin{split} = \frac{1}{(\pi\hbar)^{n/2}(2\pi\hbar)^n}\int_{\mathbb{R}^n}\int_{\mathbb{R}^n}\int_{\mathbb{R}^n}\psi(y)\exp\bigg(-\frac{q^2}{\hbar} + \big(ib + &\frac{x}{\hbar} + \frac{y}{\hbar}\big)\cdot q + ia\cdot p\\ &+ \frac{ip\cdot x}{\hbar} - \frac{x^2}{2\hbar} - \frac{ip\cdot y}{\hbar}- \frac{y^2}{2\hbar}\bigg)dqdydp\end{split}\\
    &= \frac{1}{(2\pi\hbar)^n}\int_{\mathbb{R}^n}\int_{\mathbb{R}^n}\psi(y)\exp\bigg(\frac{\hbar}{4}\big(ib + \frac{x}{\hbar} +\frac{y}{\hbar}\big)^2 + ia\cdot p + \frac{ip\cdot x}{\hbar} - \frac{x^2}{2\hbar} - \frac{ip\cdot y}{\hbar}- \frac{y^2}{2\hbar}\bigg)dydp\\
    &\begin{split}= \frac{1}{(2\pi\hbar)^n}\int_{\mathbb{R}^n}\int_{\mathbb{R}^n}\psi(y)\exp\bigg(-\frac{\hbar b^2}{4} + \frac{x^2}{4\hbar} + \frac{y^2}{4\hbar} + \frac{ib\cdot x}{2} + &\frac{ib\cdot y}{2} + \frac{x\cdot y}{2\hbar} + ia\cdot p\\
    &+ \frac{ip\cdot x}{\hbar} - \frac{x^2}{2\hbar} - \frac{ip\cdot y}{\hbar}- \frac{y^2}{2\hbar}\bigg)dydp\end{split}\\
    &= \frac{1}{(2\pi\hbar)^n}\int_{\mathbb{R}^n}\int_{\mathbb{R}^n}\psi(y)\exp\bigg(-\frac{\hbar b^2}{4} + \frac{ib\cdot x}{2} + \frac{ib\cdot y}{2} + \frac{x\cdot y}{2\hbar} + ia\cdot p + \frac{ip\cdot x}{\hbar} - \frac{x^2}{4\hbar} - \frac{ip\cdot y}{\hbar}- \frac{y^2}{4\hbar}\bigg)dydp\\
    &= \frac{1}{(2\pi\hbar)^n}\int_{\mathbb{R}^n}\int_{\mathbb{R}^n}\psi(y)\exp\bigg(-\frac{\hbar b^2}{4} + ib\cdot x + ia\cdot p + \frac{ib\cdot (y-x)}{2} - \frac{ip\cdot (y-x)}{\hbar} - \frac{(y-x)^2}{4\hbar}\bigg)dydp\\
    &= \exp\bigg(-\frac{\hbar b^2}{4} + ib\cdot x\bigg)\frac{1}{(2\pi\hbar)^n}\int_{\mathbb{R}^n}\int_{\mathbb{R}^n}\psi(x+y)\exp\bigg(ia\cdot p + \frac{ib\cdot y}{2} - \frac{ip\cdot y}{\hbar} - \frac{y^2}{4\hbar}\bigg)dydp\\
    &= \exp\bigg(-\frac{\hbar b^2}{4} + ib\cdot x\bigg)\exp\bigg(\frac{i\hbar a\cdot b}{2} - \frac{\hbar a^2}{4}\bigg)\psi(x+\hbar a)\\
    &= c(a,b)\exp\bigg(\frac{i\hbar a\cdot b}{2} + ib\cdot x\bigg)\psi(x+\hbar a)\\
    &= (\pi^S_\hbar(c\mathcal{Q}^W_\hbar(W_0(a,b)))\psi)(x),
    \end{split}
\end{equation*}
which is the desired result.
\end{proof}

\noindent The scalars $c_\hbar$ form what \citet{HoRi05} call \emph{quantization factors}, satisfying:
\begin{enumerate}[(a)]
\item $c_\hbar(a,b) \in\mathbb{R}_+$ for all $\hbar\in[0,1]$ and $a,b\in\mathbb{R}^n$;
\item $c_\hbar(0,0) = e^0 = 1$ and $c_0(a,b) = e^0 = 1$ for all $\hbar\in[0,1]$ and $a,b\in\mathbb{R}^n$; and
\item $\hbar\mapsto c_\hbar(a,b) = e^{-\frac{\hbar}{4}\norm{(a,b)}}$ is continuous for all $a,b\in\mathbb{R}^n$.
\end{enumerate}
This implies (by Thm. 4.4, p. 129 of \citet{HoRi05}) that the maps $c\mathcal{Q}^W_\hbar$ likewise define a strict quantization.  Thus, we can use the maps $c\mathcal{Q}_\hbar^W$ to provide a definition of the Berezin quantization on the minimal algebraic structure of the Weyl algebra.  Since Berezin quantization is positive, and hence continuous, we can extend these maps to unbounded operators defined from the Weyl algebra.

\subsection{Extension to Unbounded Operators}

Our goal is to use the quantization maps $c\mathcal{Q}_\hbar^W$ to analyze the classical limits of unbounded operators like $\Phi_\hbar(a,b)$.  To do so, we note that these operators can be constructed from the unitary generators $W_\hbar(a,b)$ of the Weyl algebra by the formal relation
\begin{align}
\label{eq:pos}
\Phi_\hbar(a,b) := -i\lim_{t\to 0}\frac{W_\hbar(ta,tb) - I}{t}.
\end{align}
This relation holds strictly in the Schr\"odinger representation when the limit is understood in the weak operator topology on $\mathcal{B}(L^2(\mathbb{R}^n))$.  But the limit does not in general converge in the abstract weak topology on $\mathcal{W}(\mathbb{R}^{2n},\hbar\sigma)$.  In the service of our goal of analyzing quantization in a representation manner, we seek a different abstract algebra with a natural topology in which these limits converge.  Then we will be able to use Eq. \ref{eq:pos} as a definition of $\Phi_\hbar(a,b)$ solely in terms of abstract algebraic structure.

To construct such an algebra, we will form the quotient algebra by a certain two-sided ideal.  Our ultimate goal is to find an algebra that allows only for states whose expectation values of Eq. \ref{eq:pos} converge.  It is known\cite{Fe18,Fe18a} that one can limit the collection of states of an algebra if one chooses to quotient by an ideal that is the annihilator of the set of states one wants to focus on.  More precisely, given a C*-algebra $\mathfrak{B}$ and a collection of functionals $V\subseteq \mathfrak{B}^*$, under certain conditions on $V$, one can construct a new C*-algebra $\mathfrak{A}$ whose dual space contains only the functionals in $V$, i.e., $\mathfrak{A}^*\cong V$.  To do so, first let $N(V)$ denote the annihilator of $V$ in $\mathfrak{B}$.  If $N(V)$ is a closed, two-sided ideal in $\mathfrak{B}$, then setting $\mathfrak{A} = \mathfrak{B}/N(V)$ produces a C*-algebra with the desired dual space.

This is relevant to the current circumstance if we focus on the states on the Weyl algebra for which the expectation values of Eq. \ref{eq:pos} converge.  To that end, we focus on the so-called \emph{regular} states and define
\[V_\hbar := \{\omega\in\mathcal{W}(\mathbb{R}^{2n},\hbar\sigma)^*\ |\ t\mapsto \omega(W_\hbar(ta,tb))\text{ is continuous for every }a,b\in\mathbb{R}^n\}.\]
However, in this case, $N(V_\hbar)$ is not a closed, two-sided ideal in $\mathcal{W}(\mathbb{R}^{2n},\hbar\sigma)$ because the latter algebra is simple.  Hence, we move to the bidual  $\mathcal{W}(\mathbb{R}^{2n},\hbar\sigma)^{**}$ and consider the weak* closure $\overline{V}_\hbar\subseteq \mathcal{W}(\mathbb{R}^{2n},\hbar\sigma)^{***}$ of $V_\hbar$, understood now as the regular functionals on the bidual.  It follows that the annihilator $N(\overline{V}_\hbar)$ in $\mathcal{W}(\mathbb{R}^{2n},\hbar\sigma)^{**}$ is now a closed, two-sided ideal.  Hence, we can complete the construction by defining a quotient C*-algebra $\mathfrak{A}_\hbar := \mathcal{W}(\mathbb{R}^{2n},\hbar\sigma)^{**}/N(\overline{V}_\hbar)$.

It follows that $\mathfrak{A}_\hbar^*\cong \overline{V}_\hbar$.  Moreover, the algebra $\mathfrak{A}_0$ is *-isomorphic to the algebra $B_R(\mathbb{R}^{2n})$ of bounded universally Radon measurable functions and the algebras $\mathfrak{A}_\hbar$ are *-isomorphic to $\mathcal{B}(L^2(\mathbb{R}^n))$ for each $\hbar>0$.\citep{Fe18a}  Thus, we have canonical projection (quotient) maps $p_0: AP(\mathbb{R}^{2n})^{**}\to B_R(\mathbb{R}^{2n})$ and $p_\hbar:\mathcal{W}(\mathbb{R}^{2n},\hbar\sigma)^{**}\to\mathcal{B}(L^2(\mathbb{R}^n)$.  And even further, we have $B_R(\mathbb{R}^{2n})\cong C_0(\mathbb{R}^{2n})^{**}$ and $\mathcal{B}(L^2(\mathbb{R}^n))\cong \mathcal{K}(L^2(\mathbb{R}^n))^{**}$ so that both algebras are W*-algebras carrying natural weak* topologies.  The families $t\mapsto p_\hbar(W_\hbar(ta,tb))$ are weak* continuous in $\mathfrak{A}_\hbar$ for all $\hbar\in[0,1]$, so the limit in Eq. \ref{eq:pos} is well-defined in the weak* topology.

We are now in a position to consider the functions $\Phi_0(a,b)$ in the domain of our quantization maps.  To do so, we continuously extend $c\mathcal{Q}^W_\hbar$ in the weak topology to a map $c\tilde{\mathcal{Q}}^W_\hbar:AP(\mathbb{R}^{2n})^{**}\to \mathcal{W}(\mathbb{R}^{2n},\hbar\sigma)^{**}$.  We have the following corollary of Prop. \ref{prop:schr}.
\begin{cor}
\label{cor:schr}
For any $f\in AP(\mathbb{R}^{2n})^{**}$, $p_\hbar\circ c\tilde{\mathcal{Q}}_\hbar^W(f) = \tilde{\mathcal{Q}}^B_\hbar\circ p_0(f)$.  In other words, the diagram in Fig. \ref{fig:2} commutes.

\begin{figure}[h]
  \centerline{
\xymatrix{ {AP(\mathbb{R}^{2n})^{**}} \ar[dd]_{p_0} \ar[rr]^{c\tilde{\mathcal{Q}}^W_\hbar} & & {\mathcal{W}(\mathbb{R}^{2n},\hbar\sigma)^{**}} \ar[dd]^{p_\hbar}\\
& & \\
{B_R(\mathbb{R}^{2n})} \ar[rr]_{\tilde{\mathcal{Q}}^B_\hbar} & & {\mathcal{B}(L^2(\mathbb{R}^n))}
}
}
  \caption{Commutative diagram for Cor. \ref{cor:schr}.}
  \label{fig:2}
\end{figure}
\end{cor}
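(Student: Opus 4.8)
The plan is to deduce the bidual identity from the Weyl-algebra identity of Prop.~\ref{prop:schr} by a density-and-continuity argument: both composite maps in the square of Fig.~\ref{fig:2} are weak* continuous, they agree on the canonically embedded copy of $AP(\mathbb{R}^{2n})$, and this copy is weak* dense in $AP(\mathbb{R}^{2n})^{**}$.

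First, I would check that all four arrows are weak*-to-weak* continuous. The extension $c\tilde{\mathcal{Q}}^W_\hbar$ is, by construction, the double adjoint $(c\mathcal{Q}^W_\hbar)^{**}$ of the norm-continuous map $c\mathcal{Q}^W_\hbar$ (norm continuity follows from Prop.~\ref{prop:schr}, which identifies it in the Schr\"odinger representation with the positive, hence contractive, Berezin map), and a double adjoint is automatically weak* continuous. Under the identifications $B_R(\mathbb{R}^{2n})\cong C_0(\mathbb{R}^{2n})^{**}$ and $\mathcal{B}(L^2(\mathbb{R}^n))\cong\mathcal{K}(L^2(\mathbb{R}^n))^{**}$ recorded above, the bottom map $\tilde{\mathcal{Q}}^B_\hbar$ is likewise the double adjoint $(\mathcal{Q}^B_\hbar)^{**}$ of Berezin quantization, hence weak* continuous. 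The projections $p_0$ and $p_\hbar$ are quotient maps of W*-algebras by weak*-closed two-sided ideals, so they are normal *-homomorphisms and in particular weak* continuous. Therefore both $p_\hbar\circ c\tilde{\mathcal{Q}}^W_\hbar$ and $\tilde{\mathcal{Q}}^B_\hbar\circ p_0$ are weak* continuous maps $AP(\mathbb{R}^{2n})^{**}\to\mathcal{B}(L^2(\mathbb{R}^n))$.

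Next, I would verify the two composites agree on the subalgebra $AP(\mathbb{R}^{2n})\hookrightarrow AP(\mathbb{R}^{2n})^{**}$. On this subalgebra $c\tilde{\mathcal{Q}}^W_\hbar$ restricts to $c\mathcal{Q}^W_\hbar$ and $p_0$ restricts to the inclusion $AP(\mathbb{R}^{2n})\hookrightarrow B_R(\mathbb{R}^{2n})$ (an almost periodic function is already a bounded universally Radon measurable function). The key identification is that the restriction of $p_\hbar$ to $\mathcal{W}(\mathbb{R}^{2n},\hbar\sigma)$ coincides, under $\mathfrak{A}_\hbar\cong\mathcal{B}(L^2(\mathbb{R}^n))$, with the Schr\"odinger representation $\pi^S_\hbar$: quotienting by the annihilator of the regular functionals realizes precisely the universal regular representation, which by Stone--von Neumann uniqueness has weak closure equal to the image of $\pi^S_\hbar$ \citep{Fe18a}. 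Granting this, for $f\in AP(\mathbb{R}^{2n})$ one has $p_\hbar(c\tilde{\mathcal{Q}}^W_\hbar(f)) = \pi^S_\hbar(c\mathcal{Q}^W_\hbar(f)) = \tilde{\mathcal{Q}}^B_\hbar(f) = \tilde{\mathcal{Q}}^B_\hbar(p_0(f))$, the middle equality being exactly Prop.~\ref{prop:schr}.

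Finally, I would close the argument with Goldstine's theorem, which makes the canonical image of $AP(\mathbb{R}^{2n})$ weak* dense in $AP(\mathbb{R}^{2n})^{**}$. Since two weak* continuous maps into the Hausdorff space $\mathcal{B}(L^2(\mathbb{R}^n))$ (with its weak* topology) that agree on a weak* dense set must agree everywhere---take a net in $AP(\mathbb{R}^{2n})$ converging weak* to a given $f$ and pass to the limit on both sides, using uniqueness of weak* limits---the identity holds for all $f\in AP(\mathbb{R}^{2n})^{**}$. The main obstacle I anticipate is not the density step but the bookkeeping in the previous paragraph: pinning down that $p_\hbar|_{\mathcal{W}}=\pi^S_\hbar$ and that the corollary's $\tilde{\mathcal{Q}}^B_\hbar$ is genuinely the double adjoint of Berezin quantization, so that its weak* continuity is automatic and its values on $AP(\mathbb{R}^{2n})$ match those used in Prop.~\ref{prop:schr}; both points rest on correctly threading the several W*-algebra identifications ($\mathfrak{A}_0\cong B_R\cong C_0^{**}$ and $\mathfrak{A}_\hbar\cong\mathcal{B}(L^2)\cong\mathcal{K}^{**}$) through the diagram.
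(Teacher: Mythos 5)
Your proof is correct and follows exactly the route the paper intends: the paper states this as an immediate corollary of Prop.~\ref{prop:schr}, leaving implicit precisely the argument you spell out---all four maps are weak* continuous (the vertical maps being normal quotient maps onto $B_R(\mathbb{R}^{2n})$ and $\mathcal{B}(L^2(\mathbb{R}^n))$, with $p_\hbar|_{\mathcal{W}}$ realized as $\pi^S_\hbar$ per the cited results of Feintzeig), the two composites agree on the canonical copy of $AP(\mathbb{R}^{2n})$ by Prop.~\ref{prop:schr}, and weak* density of that copy in the bidual forces agreement everywhere. Your explicit attention to the W*-identifications and to $p_\hbar|_{\mathcal{W}}=\pi^S_\hbar$ fills in the bookkeeping the paper suppresses, but it is the same argument.
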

\noindent This informs us that the map $p_\hbar\circ c\tilde{\mathcal{Q}}_\hbar^W$, which we emphasize can be defined in terms of abstract algebraic structure, is a positive quantization map equivalent to Berezin quantization on $\mathbb{R}^{2n}$.  Thus, $p_\hbar\circ c\tilde{\mathcal{Q}}_\hbar^W$ extends continuously to the entire map $\tilde{\mathcal{Q}}^B_\hbar: \tilde{C}_0(\mathbb{R}^{2n})\to\tilde{\mathcal{K}}(L^2(\mathbb{R}^n))$.  We can understand $\Phi_0(a,b)$ to be defined in the domain $\tilde{C}_0(\mathbb{R}^{2n})$ and $\Phi_\hbar(a,b)$ to be defined in the range $\tilde{\mathcal{K}}(L^2(\mathbb{R}^n))$ both via Eq. \ref{eq:pos}, where the limits are in the abstract weak* topologies.

Finally, we note that the conditions of a strict quantization extend to the unbounded operators $\Phi_\hbar(a,b)$.  The results mentioned now are familiar consequences of Eq. \ref{eq:pos} and the algebraic relations in the Weyl algebra. We establish them explicitly in the more general setting of the next section, but we state them here already.  First, the quantization map assigns $\tilde{\mathcal{Q}}_\hbar^B(\Phi_0(a,b)) = \Phi_\hbar(a,b)$.  Second, the canonical commutation relations are satisfied:
\[[\tilde{\mathcal{Q}}^B_\hbar(\Phi_0(a,b)),\tilde{\mathcal{Q}}^B_\hbar(\Phi_0(a',b'))] = i\hbar\sigma((a,b),(a',b'))I.\]
This implies that Dirac's condition is satisfied in the form
\[\lim_{\hbar\to 0}\norm{\frac{i}{\hbar}[\tilde{\mathcal{Q}}^B_\hbar(\Phi_0(a,b)),\tilde{\mathcal{Q}}^B_\hbar(\Phi_0(a',b'))] -\tilde{\mathcal{Q}}^B_\hbar(\{\Phi_0(a,b),\Phi_0(a',b')\})}_\hbar = 0,\]
with the use of the standard Poisson bracket on $\mathbb{R}^{2n}$.  Furthermore, von Neumann's condition is satisfied in the form
\[\lim_{\hbar\to 0}\norm{\tilde{\mathcal{Q}}_\hbar^B(\Phi_0(a,b))\tilde{\mathcal{Q}}_\hbar^B(\Phi_0(a',b')) - \tilde{\mathcal{Q}}_\hbar^B(\Phi_0(a,b)\Phi_0(a',b'))}_\hbar = 0.\]
Thus, there is a strong sense in which the functions $\Phi_0(a,b)$ can be understood as the classical limits of the operators $\Phi_\hbar(a,b)$.

\section{Generalization to Field Theories}
\label{sec:inf}

Suppose now that $E$ is an infinite dimensional vector space with a symplectic form $\sigma$.  This is the case when $E$ is the test function space for any free Bosonic field theory whose phase space $E'$ is a linear space.  Although the integral formulas defining Berezin quantization in the previous section are no longer meaningful in this context, we proceed to construct an analogous positive quantization, which can likewise be extended to unbounded operators.

\subsection{Positive Quantization}

We start with the Weyl quantization maps $\mathcal{Q}^W_\hbar$, which are well defined even in the infinite-dimensional setting, and we aim to define quantization factors in the spirit of the previous section.  We require a norm on $E$, which may be determined as follows.  Suppose we are given a \emph{complex structure} $J: E\to E$ compatible with $\sigma$---that is, a linear map satisfying
\begin{enumerate}[(i)]
    \item $\sigma(JF,JG) = \sigma(F,G)$;
    \item $\sigma(F,JF)\geq 0$; and
    \item $J^2 = -I$
\end{enumerate}
for all $F,G\in E$.  In general, there is not a unique such complex structure; we will see concrete examples below.  A complex structure can be used to define a complex inner product
\[\alpha_J(F,G) := \sigma(F,JG) + i\sigma(F,G)\]
for all $F,G\in E$.  This inner product $\alpha_J$ allows us to define quantization factors $c^J_\hbar: E\to \mathbb{R}_+$ by $c^J_\hbar(F) := e^{-\frac{\hbar}{4}\alpha_J(F,F)}$.  These quantization factors satisfy the same conditions (a)-(c) of the previous section.  Now, in analogy with the previous section, we define new quantization maps $\mathcal{Q}_\hbar^J: \Delta(E,0)\to \mathcal{W}(E,\hbar\sigma)$ by the linear extension of
\[\mathcal{Q}_\hbar^J(W_0(F)):= c^J_\hbar(F)\mathcal{Q}_\hbar^W(W_0(F)).\]
For any choice of complex structure $J$, this defines a strict quantization equivalent to $\mathcal{Q}^W_\hbar$ in the sense that (See Thm. 4.6, p. 131 of \citet{HoRi05})
\[\lim_{\hbar\to 0}\norm{\mathcal{Q}_\hbar^W(A) - \mathcal{Q}_\hbar^J(A)}_\hbar = 0\]
for all $A\in \Delta(E,0)$.  It follows that the strict quantizations defined for different choices of complex structure $J$ and $J'$ are also all equivalent in this same sense:
\[\lim_{\hbar\to 0}\norm{\mathcal{Q}_\hbar^J(A) - \mathcal{Q}_\hbar^{J'}(A)}_\hbar = 0\]
for all $A\in\Delta(E,0)$.

One can show that the quantization maps $\mathcal{Q}_\hbar^J$ possess some of the same virtues as the Berezin quantization maps of the previous section.

\begin{prop}
If $J$ is a complex structure compatible with the symplectic form $\sigma$, then the map $\mathcal{Q}_\hbar^J: \Delta(E,0)\to\mathcal{W}(E,\hbar\sigma)$ is positive.
\end{prop}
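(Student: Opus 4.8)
The plan is to verify positivity by testing against states and reducing everything to a statement about positive-definite kernels, thereby avoiding the coherent-state phase-space integrals of \S\ref{sec:finite}, which are unavailable when $E$ is infinite-dimensional. Recall that an element of $\mathcal{W}(E,\hbar\sigma)$ is positive precisely when it is self-adjoint and every state assigns it a non-negative value. Since $c^J_\hbar(F) = e^{-\frac{\hbar}{4}\sigma(F,JF)}$ is real and even in $F$, the map $\mathcal{Q}^J_\hbar$ is $*$-preserving and so sends self-adjoint elements to self-adjoint elements. It therefore suffices to show that for every state $\omega$ on $\mathcal{W}(E,\hbar\sigma)$ and every $f\in\Delta(E,0)$ with $f\geq 0$, one has $\omega(\mathcal{Q}^J_\hbar(f))\geq 0$; equivalently, I would show that the composite functional $\ell:=\omega\circ\mathcal{Q}^J_\hbar$ is a positive linear functional on the commutative C*-algebra $\mathcal{W}(E,0)\cong AP(E')$, for then $f\geq 0$ forces $\ell(f)\geq 0$.

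Because $\mathcal{W}(E,0)$ is generated by the characters $W_0(F)$ carrying the additive group law of $E$, Bochner's theorem for the underlying discrete abelian group identifies positive functionals on $\mathcal{W}(E,0)$ with positive-definite functions on $E$. Thus positivity of $\ell$ is equivalent to positive-definiteness of $\psi(F):=c^J_\hbar(F)\,\omega(W_\hbar(F))$, that is, to positive semidefiniteness of every matrix $N_{jk}:=\psi(F_k-F_j)$ arising from a finite family $F_1,\dots,F_m\in E$. The two ingredients I would combine are: first, that $\mathcal{E}_\omega(F):=\omega(W_\hbar(F))$ is $\hbar\sigma$-twisted positive-definite, i.e.\ $M_{jk}:=e^{\frac{i\hbar}{2}\sigma(F_j,F_k)}\mathcal{E}_\omega(F_k-F_j)$ is positive semidefinite (this is just $\omega(b^*b)\geq 0$ for $b=\sum_j c_j W_\hbar(F_j)$, using the Weyl relations); and second, the Schur product theorem, that the entrywise (Hadamard) product $\circ$ of positive semidefinite matrices is positive semidefinite.

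The key computation is to exhibit the factorization $N=M\circ T$, where
\[
T_{jk}:=e^{-\frac{i\hbar}{2}\sigma(F_j,F_k)}\,c^J_\hbar(F_k-F_j),
\]
and then to prove that $T$ is itself positive semidefinite. Writing $g(F,G):=\sigma(F,JG)=\operatorname{Re}\alpha_J(F,G)$, which is a genuine symmetric inner product by conditions (i)--(iii), one has $c^J_\hbar(F_k-F_j)=e^{-\frac{\hbar}{4}g(F_k-F_j,\,F_k-F_j)}$. Expanding the quadratic form and absorbing the purely $j$- and purely $k$-dependent factors into a diagonal congruence (which preserves positive semidefiniteness), $T$ reduces to the kernel $K_{jk}:=\exp\!\big(\tfrac{\hbar}{2}\alpha_J(F_k,F_j)\big)$. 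Here I would use that $[\alpha_J(F_k,F_j)]_{jk}$ is positive semidefinite, being (up to complex conjugation) a Gram matrix for the Hermitian inner product $\alpha_J$, together with the standard fact that the entrywise exponential of a positive semidefinite kernel is again positive semidefinite (expand $\exp$ as a power series and apply the Schur product theorem to each Schur power, all coefficients being non-negative since $\hbar\geq 0$). This shows $T\geq 0$, hence $N=M\circ T\geq 0$, which is what was required.

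The main obstacle is precisely this last step---verifying $T\geq 0$---since it is where the compatibility conditions on $J$ are genuinely used and where one must correctly disentangle the symplectic twist $e^{-\frac{i\hbar}{2}\sigma(F_j,F_k)}$ from the Gaussian damping $c^J_\hbar$, so that what survives is the manifestly positive-definite kernel $e^{\frac{\hbar}{2}\alpha_J(F_k,F_j)}$. Everything else---the reduction to states, the appeal to Bochner's theorem, and the Schur product theorem---is bookkeeping. Conceptually, this positivity reflects the fact that $c^J_\hbar(F)$ is the expectation value of $W_\hbar(F)$ in the quasi-free (Fock) state determined by $J$, so that $\ell$ is a kind of twisted convolution of $\omega$ against that state.
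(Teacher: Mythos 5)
Your proof is correct, and it takes a genuinely different route from the paper's. The paper argues directly at the level of algebra elements: it writes a positive $C\in\Delta(E,0)$ as $A^*A$ with $A=\sum_k z_kW_0(F_k)$, computes $\mathcal{Q}^J_\hbar(A^*A)$ explicitly, absorbs the diagonal Gaussians into new coefficients $y_k=e^{-\frac{\hbar}{4}\alpha_J(F_k,F_k)}z_k$, and then invokes the operator-valued generalization of the Schur product theorem due to \citet{SuSu16} applied to the entrywise exponential of the Gram matrix of $\alpha_J$, so that $\sum_{j,k}\overline{y}_jy_k\,b_{jk}\,W_\hbar(F_j)^*W_\hbar(F_k)$ is manifestly positive. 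You instead dualize: positivity is tested against states, Bochner's theorem for $E$ as a discrete group converts the problem into untwisted positive-definiteness of the scalar kernel $\psi=c^J_\hbar\cdot\omega(W_\hbar(\cdot))$, and your factorization $N=M\circ T$ lets you quote only the classical scalar Schur product theorem, with $M\geq 0$ coming from $\omega(b^*b)\geq 0$ and $T\geq 0$ from the Gram-matrix-plus-entrywise-exponential fact. The computational core is the same in both proofs---everything rests on the kernel $e^{\frac{\hbar}{2}\alpha_J(F_k,F_j)}$ being positive semidefinite, i.e., on $c^J_\hbar$ being the characteristic function of the quasi-free state determined by $J$, as you observe. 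Your route buys three things: (i) it needs no operator-valued Schur theorem, only the scalar one plus the elementary fact that states detect positivity; (ii) it establishes positivity for every $f\in\Delta(E,0)$ that is positive in $\mathcal{W}(E,0)$, not merely for elements of the form $A^*A$ with $A\in\Delta(E,0)$---a real gain, since the paper's opening claim that every positive element of $\Delta(E,0)$ factors this way inside $\Delta(E,0)$ is not obvious (non-negative trigonometric polynomials in several variables need not be squares, or even sums of squares, of trigonometric polynomials); and (iii) your signs come out right, whereas the paper's displayed computation contains a slip ($\alpha_J(F_j+F_k,F_j+F_k)$ where $\alpha_J(F_k-F_j,F_k-F_j)$ is meant), which matters because the entrywise exponential of the negative semidefinite matrix $-\frac{\hbar}{2}[\alpha_J(F_j,F_k)]$ is not positive semidefinite in general, while that of $+\frac{\hbar}{2}[\alpha_J(F_k,F_j)]$ is. The cost of your route is that the Bochner step silently uses the fact that the minimal regular norm on $\Delta(E,0)$ at $\hbar=0$ is the universal group C*-norm of the discretized $E$, so that a positive-definite function on $E$ really does induce a positive functional on all of $\mathcal{W}(E,0)\cong AP(E')$; this is true (the characters coming from $E'$ are dense in the dual of discrete $E$ because the pairing separates points), but it deserves an explicit sentence in a final write-up.
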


\begin{proof}
Suppose $C\in \Delta(E,0)$ is a positive element.  Then $C = A^*A$ for some $A = \sum_k z_k W_0(F_k)\in \Delta(E,0)$.  We have
\begin{align*}
\mathcal{Q}_\hbar^J(&A^*A) = \sum_{j,k} \overline{z}_j z_k \exp\bigg(-\frac{\hbar}{4}\alpha_J(F_j + F_k,F_j + F_k)\bigg) W_\hbar(F_k - F_j)\\
&=\sum_{j,k} \overline{z}_j z_k\exp\bigg(-\frac{\hbar}{4}\Big(\alpha_J(F_j,F_j) + \alpha_J(F_k,F_k) + 2\sigma(F_j,JF_k)\Big)-\frac{i\hbar}{2}\sigma(F_j,F_k)\bigg)W_\hbar(-F_j)W_\hbar(F_k).\\
\end{align*}
Letting $y_k = e^{-\frac{\hbar}{4}\alpha_J(F_k,F_k)}z_k$, it follows that
\begin{align*}
    \mathcal{Q}_\hbar^J(A^*A) &=
    \sum_j\sum_k \overline{y}_jy_k\exp\bigg(-\frac{\hbar}{2}\Big(\sigma(F_j,JF_k) + i\sigma(F_j,F_k)\Big)\bigg)W_\hbar(F_j)^*W_\hbar(F_k)\\
    &= \sum_j\sum_k \overline{y}_jy_k\exp\Big(-\frac{\hbar}{2}\alpha_J(F_j,F_k) \Big)W_\hbar(F_j)^*W_\hbar(F_k)
\end{align*}
Since $\alpha_J$ is a complex inner product, the matrix $a_{j,k}: =\alpha_J(F_j,F_k)$ is positive, and moreover, since entrywise exponentiation preserves positivity, the matrix $b_{j,k}:= \exp(a_{j,k})$ is also positive.  It then follows from a generalization of the Schur product theorem due to \citet{SuSu16} (Prop. 1.3) that $\mathcal{Q}_\hbar^J(A^*A)$ is a positive element in $\mathcal{W}(E,\hbar\sigma)$.
\end{proof}

The positivity of $\mathcal{Q}_\hbar^J$ implies its continuity in the norm and weak topologies (see Prop. 1.3.7 of \citeauthor{La98b}\cite{La98b}, p. 47), which means that it can be continuously extended to the completions of its domain and range.  As in the previous section, we want to use these extended quantization maps to analyze field operators of the form
\begin{align}
\label{eq:fields}
\Phi_\hbar(F):=-i\lim_{t\to0} \frac{W_\hbar(tF)-I}{t}.
\end{align}
However, these limits again do not converge in the weak topology.  So we must perform the construction of the previous section to arrive at a new algebra allowing for this definition.

To construct such an algebra, we again quotient out by a certain two-sided ideal given by the annihilator of a desired set of states.  We again focus on the regular states for which the expectation values of Eq. \ref{eq:fields} converge by defining the set of regular states as
\[V_\hbar := \{\omega\in\mathcal{W}(E,\hbar\sigma)^*\ |\ t\mapsto\omega(tF)\text{ is continuous for every }F\in E\}.\]
Just as before, $N(V_\hbar)$ is not a closed, two-sided ideal because $\mathcal{W}(E,\hbar\sigma)$ is simple.  Instead, we use the strategy of the previous section by passing to the bidual $\mathcal{W}(E,\hbar\sigma)^{**}$ and letting $\overline{V}_\hbar$ be the weak* closure of $V_\hbar$ in $\mathcal{W}(E,\hbar\sigma)^{***}$.  Then $N(\overline{V}_\hbar)$ is a closed, two sided ideal, so we can define a C*-algebra $\mathfrak{A}_\hbar:= \mathcal{W}(E,\hbar\sigma)^{**}/N(\overline{V}_\hbar)$ exactly as before.

However, since $E$ is now infinite-dimensional and so fails to be locally compact, the structure of these algebras $\mathfrak{A}_\hbar$ is not as tractable and we have much less information than in the previous section.  Still, we can show that the algebras $\mathfrak{A}_\hbar$ are W*-algebras with an appropriate weak* topology.

\begin{prop}
The algebras $\mathfrak{A}_\hbar$ are W*-algebras with preduals given by $(\mathfrak{A}_\hbar)_* = V_\hbar$.
\end{prop}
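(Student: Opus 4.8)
The plan is to realize $\mathfrak{A}_\hbar = \mathcal{W}(E,\hbar\sigma)^{**}/N(\overline{V}_\hbar)$ as the quotient of the enveloping W*-algebra $M := \mathcal{W}(E,\hbar\sigma)^{**}$, whose predual is $M_* = \mathcal{W}(E,\hbar\sigma)^*$, by a weak*-closed two-sided ideal, and then to read off the predual of the quotient by a bipolar argument; throughout, the relevant weak* topology on $M$ is $\sigma(M,M_*)$ and $V_\hbar\subseteq M_*$. First I would note that $N(\overline{V}_\hbar)$ is nothing but the pre-annihilator of $V_\hbar$ inside $M$: since the annihilator of a family of functionals depends only on its weak*-closed linear span, and $\overline{V}_\hbar$ is by construction the weak* closure of $V_\hbar$, we have $N(\overline{V}_\hbar) = \{x\in M : \omega(x) = 0 \text{ for all } \omega\in V_\hbar\}$. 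Each $\omega\in V_\hbar$ lies in the predual $M_*$ and is therefore weak*-continuous on $M$, so this set is an intersection of weak*-closed hyperplanes $\ker\omega$ and is itself weak*-closed. Together with the fact, already established above, that it is a norm-closed two-sided ideal, this shows $I := N(\overline{V}_\hbar)$ is a weak*-closed two-sided ideal of $M$.

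Next I would invoke duality. Let $W := \{\phi\in M_* : \phi|_I = 0\}$ be the pre-annihilator of $I$ in $M_*$, a norm-closed subspace. Because $I$ is a weak*-closed subspace of the dual space $M = (M_*)^*$, the standard quotient duality yields an isometric isomorphism $M/I \cong W^*$, where weak*-closedness of $I$ is exactly what guarantees that the pre-annihilator of $W$ returns $I$. Thus $\mathfrak{A}_\hbar = M/I$ is simultaneously a C*-algebra---being the quotient of a C*-algebra by a closed two-sided ideal---and a dual Banach space; by Sakai's theorem a C*-algebra that is a dual space is a W*-algebra, so $\mathfrak{A}_\hbar$ is a W*-algebra with predual $W$. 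Equivalently, one may observe that $I = zM$ for a central projection $z\in M$ and $M/I \cong (1-z)M$.

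It remains to identify $W$ with $V_\hbar$. By the bipolar theorem for the dual pair $(M_*, M)$, the pre-annihilator $W$ of $I = N(\overline{V}_\hbar)$ equals the $\sigma(M_*,M)$-weak closure of $V_\hbar$ in $M_*$. Since regularity of $t\mapsto\omega(W_\hbar(tF))$ is preserved under finite linear combinations, $V_\hbar$ is a linear subspace, so by Mazur's theorem its weak closure coincides with its norm closure. Finally $V_\hbar$ is norm-closed: if $\omega_n\to\omega$ in norm with each $\omega_n$ regular, then for each $F$ the estimate $\abs{\omega_n(W_\hbar(tF)) - \omega(W_\hbar(tF))} \le \norm{\omega_n - \omega}$ (using $\norm{W_\hbar(tF)} = 1$) shows that $t\mapsto\omega_n(W_\hbar(tF))$ converges to $t\mapsto\omega(W_\hbar(tF))$ uniformly, so the limit is continuous and $\omega\in V_\hbar$. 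Hence $W = V_\hbar$ and $(\mathfrak{A}_\hbar)_* = V_\hbar$.

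The main obstacle is the bookkeeping across the three dual levels $\mathcal{W}(E,\hbar\sigma)^*$, $\mathcal{W}(E,\hbar\sigma)^{**}$, and $\mathcal{W}(E,\hbar\sigma)^{***}$ in the last step: one must check that the pre-annihilator of the ideal lands back on $V_\hbar$ \emph{exactly}, rather than on some strictly larger weak*-closed hull, and this is precisely what forces the verification that $V_\hbar$ is already norm- (hence weakly-) closed. The remaining ingredients are formal duality together with standard von Neumann algebra structure theory.
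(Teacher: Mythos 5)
Your proof is correct, but it follows a genuinely different route from the paper's. The paper argues representation-theoretically: it forms the direct sum $\pi = \bigoplus_\omega \pi_\omega$ of GNS representations of the regular states, invokes Thm.\ 10.1.12 of Kadison--Ringrose to produce a central projection $P$ in the enveloping von Neumann algebra with $\overline{\pi_U(\mathcal{W})}P \cong \overline{\pi(\mathcal{W})} \cong \mathfrak{A}_\hbar$, gets the W*-property from their Prop.\ 5.5.6, and cites an external result (Prop.\ 5 of the reference denoted Ho97) for the identification of the predual with $V_\hbar$. You instead stay entirely at the level of Banach-space duality: you observe that $N(\overline{V}_\hbar)$ is just the pre-annihilator of $V_\hbar$ in $M = \mathcal{W}(E,\hbar\sigma)^{**}$, hence weak*-closed as an intersection of kernels of weak*-continuous functionals; you then apply the quotient duality $M/I \cong ({}^\perp I)^*$ together with Sakai's theorem to get the W*-structure, and identify ${}^\perp I$ with $V_\hbar$ via the bipolar theorem, Mazur's theorem, and an explicit verification that $V_\hbar$ is norm-closed (uniform convergence of $t\mapsto\omega_n(W_\hbar(tF))$, using $\norm{W_\hbar(tF)}=1$). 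Both arguments take as input the fact, asserted in the construction preceding the proposition, that $N(\overline{V}_\hbar)$ is a two-sided ideal (equivalently, that $V_\hbar$ is invariant under the module actions of $M$ on $M_*$); you flag this dependence honestly. What your approach buys is self-containedness: the only nontrivial import is Sakai's theorem, and the norm-closedness step makes visible exactly where the definition of regularity is used, a point the paper delegates to the literature. What the paper's approach buys is a concrete spatial realization of $\mathfrak{A}_\hbar$ as a corner of the universal enveloping von Neumann algebra, i.e.\ as the weak closure of the regular representations' direct sum---the picture that underwrites the earlier identification of $\mathfrak{A}_\hbar$ with $\mathcal{B}(L^2(\mathbb{R}^n))$ in the finite-dimensional case and connects the construction to the standard theory of folia.
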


\begin{proof}
First, let $\pi_U$ denote the universal representation of $\mathcal{W}(E,\hbar\sigma)$.  We will consider the direct sum representation $\pi := \bigoplus_{\omega\in I}\pi_\omega$ for $I = V_\hbar\cap \mathcal{S}(\mathcal{W}(E,\hbar\sigma))$, where $\pi_\omega$ is the GNS representation for the state $\omega$ and $\mathcal{S}(\mathcal{W}(E,\hbar\sigma))$ denotes the state space of the Weyl algebra.  It follows from Thm. 10.1.12 of \citet[][]{KaRi97} (p. 719) that there is a projection $P$ in the center of $\overline{\pi_U(\mathcal{W}(E,\hbar\sigma)}$, where the closure is in the weak operator topology, such that $\overline{\pi_U(\mathcal{W}(E,\hbar\sigma)}P$ is *-isomorphic to $\overline{\pi(\mathcal{W}(E,\hbar\sigma)}$, the latter of which is *-isomorphic to $\mathfrak{A}_\hbar$.  By Prop. 5.5.6 of \citet[][]{KaRi97} (p. 335), the algebra $\overline{\pi_U(\mathcal{W}(E,\hbar\sigma)}P$ is a W*-algebra, which implies that $\mathfrak{A}_\hbar$ is a W*-algebra.  Moreover, by Prop. 5 of \citet[][]{Ho97} (p. 15) it follows that $(\mathfrak{A}_\hbar)_* = V_\hbar$.
\end{proof}

\noindent This implies that $\mathcal{Q}_\hbar^J$ extends continuously to a map whose codomain is the weak* completion $\tilde{\mathfrak{A}}_\hbar$, which we now denote $\tilde{Q}_\hbar^J: \tilde{\mathfrak{A}}_0\to\tilde{\mathfrak{A}}_\hbar$.  The field operators are well-defined in these completed algebras via Eq. \ref{eq:fields} with the limit now understood in the weak* topology.  Now we can use the maps $\tilde{Q}_\hbar^J$ to analyze the classical limits of unbounded field operators.

\subsection{Extension to Unbounded Operators}

First, we note that the familiar facts about the field operators $\Phi_\hbar(F)$ follow from what has been said so far.  We present proofs here to emphasize the fact that these statements can be both expressed and derived in the bare algebraic setting we have outlined.

\begin{lemma}
For all $F\in E$, $\Phi_\hbar(F)$ is self-adjoint.
\end{lemma}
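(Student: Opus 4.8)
The plan is to show directly that the abstract involution fixes $\Phi_\hbar(F)$, i.e.\ that $\Phi_\hbar(F)^* = \Phi_\hbar(F)$, by pushing the $*$-operation through the weak* limit in the defining Eq.~\ref{eq:fields}. The two ingredients that make this work are the Weyl relation $W_\hbar(tF)^* = W_\hbar(-tF)$ and the weak*-continuity of the involution on $\tilde{\mathfrak{A}}_\hbar$; only the latter requires care, so I would establish it first.

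Since the preceding proposition identifies $\mathfrak{A}_\hbar$ as a W*-algebra with predual $(\mathfrak{A}_\hbar)_* = V_\hbar$, it suffices to check that the predual is stable under the conjugation $\omega\mapsto\omega^*$ given by $\omega^*(x) := \overline{\omega(x^*)}$. For $\omega\in V_\hbar$ one computes $\omega^*(W_\hbar(tF)) = \overline{\omega(W_\hbar(-tF))}$, which is continuous in $t$ precisely because $s\mapsto\omega(W_\hbar(sF))$ is; hence $\omega^*\in V_\hbar$. This shows that $x\mapsto x^*$ is $\sigma(\mathfrak{A}_\hbar,V_\hbar)$-continuous and therefore extends continuously to the weak* completion $\tilde{\mathfrak{A}}_\hbar$ in which $\Phi_\hbar(F)$ is defined.

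With continuity of the involution in hand, I would apply $*$ to Eq.~\ref{eq:fields} and use conjugate-linearity together with weak* continuity to interchange $*$ with the limit, obtaining
\[
\Phi_\hbar(F)^* = i\lim_{t\to 0}\frac{W_\hbar(tF)^* - I}{t} = i\lim_{t\to 0}\frac{W_\hbar(-tF) - I}{t}.
\]
The substitution $s = -t$ then returns $-i\lim_{s\to 0}\frac{W_\hbar(sF) - I}{s} = \Phi_\hbar(F)$, which is the claim.

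The only genuine obstacle is the weak*-continuity of the involution on the completion, so I would make certain the predual-stability check above is airtight; everything else is a formal manipulation of the defining limit. I do not expect to need any information beyond the Weyl relations and the identification $(\mathfrak{A}_\hbar)_* = V_\hbar$ established in the preceding proposition.
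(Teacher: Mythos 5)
Your proof is correct and follows essentially the same route as the paper's: apply the involution to the defining limit, use $W_\hbar(tF)^* = W_\hbar(-tF)$, and substitute $s = -t$. The paper's proof is exactly this formal computation; your preliminary check that $V_\hbar$ is stable under $\omega\mapsto\omega^*$, so that the involution is weak*-continuous and can be interchanged with the limit, makes explicit a step the paper leaves implicit.
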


\begin{proof}
For any $F\in E$, we have
\begin{align*}
    (\Phi_\hbar(F))^* &= \bigg(-i\lim_{t\to 0} \frac{W_\hbar(tF) - I}{t}\bigg)^*
    = i\lim_{t\to 0}\frac{W_\hbar(-tF) - I}{t}
    = - i\lim_{s\to 0}\frac{W_\hbar(sF)-I}{s} = \Phi_\hbar(F).
\end{align*}
In the third line, we make the replacement $s = -t$.
\end{proof}

\begin{lemma}
\label{lem:fields}
For all $F\in E$, $\tilde{\mathcal{Q}}_\hbar^J(\Phi_0(F)) = \Phi_\hbar(F)$.  In other words, the diagram in Fig. \ref{fig:3} commutes.

\begin{figure}[h]
\centerline{
\xymatrix{ & {E} \ar[dl]_{\Phi_0} \ar[dr]^{\Phi_\hbar} & \\
\mathcal{W}_\hbar(E,0) \ar[rr]_{\tilde{\mathcal{Q}}_\hbar^J} & & \mathcal{W}_\hbar(E,\hbar\sigma) 
}
}
  \caption{Commutative diagram for Lemma \ref{lem:fields}.}
  \label{fig:3}
\end{figure}
\end{lemma}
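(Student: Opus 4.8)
The plan is to exploit the weak*-continuity of the extended quantization map $\tilde{\mathcal{Q}}_\hbar^J$ in order to interchange it with the weak* limit defining $\Phi_0(F)$, and then to carry out an elementary computation with the quantization factors. Since the difference quotients $\frac{W_0(tF) - I}{t}$ lie in $\Delta(E,0)$ for every $t\neq 0$, the extension $\tilde{\mathcal{Q}}_\hbar^J$ agrees with $\mathcal{Q}_\hbar^J$ on them, so first I would write
\[
\tilde{\mathcal{Q}}_\hbar^J(\Phi_0(F)) = \tilde{\mathcal{Q}}_\hbar^J\bigg(-i\lim_{t\to 0}\frac{W_0(tF) - I}{t}\bigg) = -i\lim_{t\to 0}\frac{\mathcal{Q}_\hbar^J(W_0(tF)) - I}{t},
\]
where passing the map inside the limit is justified precisely by the weak*-continuity established in the preceding proposition, and where I use that $\tilde{\mathcal{Q}}_\hbar^J(I) = \mathcal{Q}_\hbar^J(W_0(0)) = c^J_\hbar(0)W_\hbar(0) = I$.

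Next I would evaluate the map on the generators. By definition $\mathcal{Q}_\hbar^J(W_0(tF)) = c^J_\hbar(tF)W_\hbar(tF)$, and since $\sigma$ is bilinear one has $c^J_\hbar(tF) = e^{-\frac{\hbar}{4}\alpha_J(tF,tF)} = e^{-\frac{\hbar t^2}{4}\alpha_J(F,F)}$ for real $t$. I would then split the difference quotient as
\[
\frac{c^J_\hbar(tF)W_\hbar(tF) - I}{t} = \frac{c^J_\hbar(tF) - 1}{t}W_\hbar(tF) + \frac{W_\hbar(tF) - I}{t}.
\]
The second term, after multiplication by $-i$ and taking the weak* limit, is exactly $\Phi_\hbar(F)$ by Eq. \ref{eq:fields}. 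For the first term, the scalar prefactor satisfies $\frac{c^J_\hbar(tF)-1}{t} = -\frac{\hbar t}{4}\alpha_J(F,F) + O(t^3)\to 0$ as $t\to 0$, while $\norm{W_\hbar(tF)}_\hbar = 1$ since $W_\hbar(tF)$ is unitary; hence the first term vanishes in norm, and a fortiori in the weak* topology. Combining the two terms yields $\tilde{\mathcal{Q}}_\hbar^J(\Phi_0(F)) = \Phi_\hbar(F)$, as claimed.

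The only delicate point I anticipate is the first step: justifying that the continuous extension $\tilde{\mathcal{Q}}_\hbar^J$ commutes with the weak* limit. This is where the construction of the preceding subsection does the real work, for one needs that $\tilde{\mathcal{Q}}_\hbar^J$ is weak*-to-weak* continuous between the W*-algebras $\tilde{\mathfrak{A}}_0$ and $\tilde{\mathfrak{A}}_\hbar$, so that it carries the weak*-convergent net of difference quotients to a weak*-convergent net with the expected limit. Once this interchange is licensed, the remainder is the routine scalar estimate above, which shows that the quantization factor $c^J_\hbar$ contributes only a term of order $t$ to the difference quotient and therefore drops out in the limit.
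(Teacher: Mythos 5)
Your proposal is correct and follows essentially the same route as the paper's proof: push $\tilde{\mathcal{Q}}_\hbar^J$ through the weak* limit, split the difference quotient into the quantization-factor term plus the bare Weyl difference quotient, and observe that the factor term vanishes while the other recovers $\Phi_\hbar(F)$ via Eq.~\ref{eq:fields}. Your norm estimate $\abs{(c^J_\hbar(tF)-1)/t}\,\norm{W_\hbar(tF)}_\hbar \to 0$ is, if anything, a slightly cleaner justification of the vanishing term than the paper's splitting of the limit into a product of limits, and your explicit flagging of the weak*-continuity needed to interchange $\tilde{\mathcal{Q}}_\hbar^J$ with the limit makes visible a step the paper leaves implicit.
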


\begin{proof}
For any $F\in E$, we have
\begin{align*}
    \tilde{\mathcal{Q}}_\hbar^J(\Phi_0(F)) &= \tilde{\mathcal{Q}}_\hbar^J\bigg(-i\lim_{t\to 0}\frac{W_0(tF)-I}{t}\bigg)\\
    &= -i\lim_{t\to 0}\frac{e^{-\frac{\hbar}{4}\alpha_J(tF,tF)}W_\hbar(tF)-I}{t}\\
    &= -i\lim_{t\to 0}\frac{e^{-\frac{\hbar}{4}\alpha_J(tF,tF)}W_\hbar(tF) - W_\hbar(tF)}{t} - i\lim_{t\to 0}\frac{W_\hbar(tF)-I}{t}\\
    &= -i\lim_{t\to 0}\frac{(e^{-\frac{\hbar}{4}t^2\alpha_J(F,F)}-1)W_\hbar(tF)}{t} + \Phi_\hbar(F)\\
    &= -i\bigg(\lim_{t\to 0}\frac{e^{-\frac{\hbar}{4}t^2\alpha_J(F,F)}-1}{t}\bigg)\Big(\lim_{t\to 0}W_\hbar(tF)\Big) + \Phi_\hbar(F)\\
    &= -i(0)(I) + \Phi_\hbar(F) = \Phi_\hbar(F). \qedhere
\end{align*}
\end{proof}

\begin{lemma}
\label{lem:comm}
For all $F,G\in E$ and all $n\in\mathbb{N}$, $[\Phi_\hbar(F),\Phi_\hbar(G)^n] = in\hbar\sigma(F,G)\Phi_\hbar(G)^{n-1}$
\end{lemma}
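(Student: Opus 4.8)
The plan is to proceed by induction on $n$, isolating all of the analytic work in the base case $n=1$, which is just the canonical commutation relation $[\Phi_\hbar(F),\Phi_\hbar(G)] = i\hbar\sigma(F,G)I$ (recall $\Phi_\hbar(G)^0 = I$, so the right-hand side reads $i\cdot 1\cdot\hbar\sigma(F,G)\Phi_\hbar(G)^0$). Granting this base case, the inductive step is purely algebraic: writing $\Phi_\hbar(G)^n = \Phi_\hbar(G)\,\Phi_\hbar(G)^{n-1}$ and using the Leibniz identity for commutators,
\[
[\Phi_\hbar(F),\Phi_\hbar(G)^n] = [\Phi_\hbar(F),\Phi_\hbar(G)]\,\Phi_\hbar(G)^{n-1} + \Phi_\hbar(G)\,[\Phi_\hbar(F),\Phi_\hbar(G)^{n-1}],
\]
I would substitute the base case into the first term and the inductive hypothesis $[\Phi_\hbar(F),\Phi_\hbar(G)^{n-1}] = i(n-1)\hbar\sigma(F,G)\Phi_\hbar(G)^{n-2}$ into the second. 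Since the scalar multiple of $I$ commutes with $\Phi_\hbar(G)$, the two terms combine to $in\hbar\sigma(F,G)\Phi_\hbar(G)^{n-1}$, closing the induction.

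For the base case I would work directly from the Weyl relations and the definition (Eq. \ref{eq:fields}) of $\Phi_\hbar$ as a weak* limit. The cleanest route is first to establish the conjugation identity
\[
W_\hbar(tF)\,\Phi_\hbar(G)\,W_\hbar(tF)^* = \Phi_\hbar(G) - \hbar t\,\sigma(F,G)\,I,
\]
which follows by starting from $W_\hbar(tF)W_\hbar(sG)W_\hbar(tF)^* = e^{-i\hbar ts\sigma(F,G)}W_\hbar(sG)$ (a direct consequence of the Weyl relation $W_\hbar(F)W_\hbar(G) = e^{-\frac{i\hbar}{2}\sigma(F,G)}W_\hbar(F+G)$) and applying $-i\partial_s|_{s=0}$, exactly as in the proof of Lemma \ref{lem:fields}. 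Differentiating the conjugation identity once more, now in $t$ at $t=0$, and using $\partial_t|_{t=0}W_\hbar(tF) = i\Phi_\hbar(F)$ together with $\partial_t|_{t=0}W_\hbar(tF)^* = -i\Phi_\hbar(F)$, the left-hand side produces $i[\Phi_\hbar(F),\Phi_\hbar(G)]$ while the right-hand side gives $-\hbar\sigma(F,G)I$, yielding $[\Phi_\hbar(F),\Phi_\hbar(G)] = i\hbar\sigma(F,G)I$.

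I note that one can also bypass the induction entirely by conjugating the $n$-th power at once: since $W_\hbar(tF)$ is unitary, $W_\hbar(tF)\Phi_\hbar(G)^n W_\hbar(tF)^* = \big(\Phi_\hbar(G) - \hbar t\sigma(F,G)I\big)^n$, and differentiating in $t$ at $t=0$ gives $i[\Phi_\hbar(F),\Phi_\hbar(G)^n]$ on the left and $-n\hbar\sigma(F,G)\Phi_\hbar(G)^{n-1}$ on the right. I would most likely present the induction, however, since it keeps each differentiation at the level of a single factor and makes the algebraic bookkeeping transparent.

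The main obstacle is analytic rather than algebraic. All of these limits and derivatives live in the weak* topology of the completed (partial *-)algebra $\tilde{\mathfrak{A}}_\hbar$, where multiplication is only separately, not jointly, weak*-continuous, and where $\Phi_\hbar(G)$ and its powers are unbounded. I therefore need to justify (i) that the products $\Phi_\hbar(G)^n$ and the commutators in the statement are well-defined on a common dense domain, and (ii) that differentiating the defining weak* limit commutes with left and right multiplication by the fixed unitaries $W_\hbar(tF)$. Point (ii) is the benign part, since multiplication by a fixed element is separately weak*-continuous, so the single-variable differentiations of $t\mapsto W_\hbar(tF)\,X\,W_\hbar(tF)^*$ are unproblematic; the care is needed in setting up (i) and in confirming the interchange, which I expect to control by evaluating on the regular functionals in $V_\hbar$ that survive the quotient, on which the relevant one-parameter maps are smooth.
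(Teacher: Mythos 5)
Your overall architecture matches the paper's: both proofs reduce the lemma to the canonical commutation relation $[\Phi_\hbar(F),\Phi_\hbar(G)] = i\hbar\sigma(F,G)I$ and then close with a purely algebraic induction (your Leibniz-rule step is the paper's expansion of $\Phi_\hbar(F)\Phi_\hbar(G)^{k+1}$ written in the opposite order). Where you genuinely diverge is the base case. The paper computes the commutator directly: it writes both field operators as weak* limits, converts the product of limits into iterated limits $\lim_{s\to 0}\lim_{t\to 0}$, and uses the Weyl relations to collapse the difference $(W_\hbar(tG)-I)(W_\hbar(sF)-I)-(W_\hbar(sF)-I)(W_\hbar(tG)-I)$ into a scalar multiple of $W_\hbar(sF+tG)$, whose iterated limit is then evaluated by splitting off the scalar factor. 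You instead derive the conjugation identity $W_\hbar(tF)\,\Phi_\hbar(G)\,W_\hbar(tF)^* = \Phi_\hbar(G)-\hbar t\,\sigma(F,G)I$ by applying $-i\partial_s|_{s=0}$ to the Weyl relation, and then differentiate in $t$. Your signs and constants all check out, and your route has two virtues: the conjugation identity is of independent interest (it is the Weyl-unitary implementation of phase-space translations on the fields), and, as you note, it delivers the whole lemma in one stroke by conjugating $\Phi_\hbar(G)^n$, with no induction at all. The paper's computation, for its part, keeps every manipulation at the level of differences of bounded Weyl elements until the very last limit, which is why it can lean directly on the scalar-factor trick.

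One caveat: you have the location of the analytic risk backwards. The step you call benign --- differentiating $t\mapsto W_\hbar(tF)\,X\,W_\hbar(tF)^*$ at $t=0$ by the product rule --- is precisely where separate weak*-continuity is insufficient: after decomposing the difference quotient you face terms such as $\frac{1}{t}\bigl(W_\hbar(tF)-I\bigr)\,\Phi_\hbar(G)\,W_\hbar(tF)^*$, in which two factors vary with $t$ simultaneously, and a product of weak*-convergent nets need not converge to the product of the limits. (The derivation of the conjugation identity itself, by contrast, really is benign, since it only multiplies by the fixed unitaries $W_\hbar(tF)$ and $W_\hbar(tF)^*$.) This does not sink your proof relative to the paper's own standard of rigor: the paper's identification of products of limits with iterated limits likewise requires multiplying a convergent net against the fixed unbounded element $\Phi_\hbar(G)$, which goes beyond separate weak*-continuity for bounded elements. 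Both arguments ultimately rest on the partial *-algebra structure the paper cites, or, as you rightly suggest, on verification against the regular functionals in $V_\hbar$, in whose GNS representations $\Phi_\hbar(F)$ is the Stone generator of a strongly continuous unitary group and all of these interchanges are standard on a common dense domain.
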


\begin{proof}
First, we compute
\begin{equation*}
\begin{split}
    [\Phi_\hbar(F),&\Phi_\hbar(G)] \\ &\begin{split} =  \bigg(-i\lim_{s\to 0}\frac{W_\hbar(sF)-I}{s}\bigg)\bigg(-i\lim_{t\to 0}&\frac{W_\hbar(tG)-I}{t}\bigg)\\ - &\bigg(-i\lim_{t\to 0}\frac{W_\hbar(tG)-I}{t}\bigg)\bigg(-i\lim_{s\to 0}\frac{W_\hbar(sF)-I}{s}\bigg)
    \end{split}\\
    &= \lim_{s\to 0}\lim_{t\to 0}\frac{1}{st}\Big((W_\hbar(tG)-I)(W_\hbar(sF)-I) - (W_\hbar(sF)-I)(W_\hbar(tG)-I)\Big)\\
    &= \lim_{s\to 0}\lim_{t\to 0}\frac{1}{st}\Big(e^{-\frac{i\hbar}{2}\sigma(tG,sF)} - e^{-\frac{i\hbar}{2}\sigma(sF,tG)}\Big)W_\hbar(sF + tG)\\
    &= \bigg(\lim_{s\to 0}\lim_{t\to 0}\frac{e^{-\frac{i\hbar}{2}st\sigma(G,F)} - e^{-\frac{i\hbar}{2}st\sigma(F,G)}}{st}\bigg)\bigg(\lim_{s\to 0}\lim_{t\to 0}W_\hbar(sF + tG)\bigg)\\
    &= i\hbar\sigma(F,G) I.
    \end{split}
\end{equation*}

Next we proceed by induction.  Suppose $[\Phi_\hbar(F),\Phi_\hbar(G)^k] = ik\hbar\sigma(f,g)(\Phi_\hbar(G))^{k-1}$ for some $k\in\mathbb{N}$.  Then
\begin{align*}
    \Phi_\hbar(F)\Phi_\hbar(G)^{k+1} &= (\Phi_\hbar(F) \Phi_\hbar(G)^k)\Phi_\hbar(G)\\
    &= \Big(ik\hbar\sigma(F,G)\Phi_\hbar(G)^{k-1} + \Phi_\hbar(G)^k\Phi_\hbar(F)\Big)\Phi_\hbar(G)\\
    &=ik\hbar\sigma(F,G)\Phi_\hbar(G)^k + \Phi_\hbar(G)^k\Big(i\hbar\sigma(F,G)I + \Phi_\hbar(G)\Phi_\hbar(F)\Big)\\
    &= i(k+1)\hbar\sigma(F,G)\Phi_\hbar(G)^k + \Phi_\hbar(G)^{k+1}\Phi_\hbar(F),
\end{align*}
which implies $[\Phi_\hbar(F),\Phi_\hbar(G)^{k+1}] = i(k+1)\hbar\sigma(F,G)\Phi_\hbar(G)^k$.
\end{proof}

\begin{lemma}
For all $F\in E$, $\tilde{\mathcal{Q}}_\hbar^J(\Phi_0(F)^2) = \Phi_\hbar(F)^2 + \frac{\hbar}{2}\alpha_J(F,F) I$.
\end{lemma}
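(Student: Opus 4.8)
The plan is to follow the computation in Lemma~\ref{lem:comm}: express $\Phi_0(F)^2$ as an iterated weak* limit of Weyl elements, then push the linear, weak*-continuous map $\tilde{\mathcal{Q}}_\hbar^J$ through the limit. First I would write $\Phi_0(F)^2 = -\lim_{s\to 0}\lim_{t\to 0}\frac{1}{st}\big(W_0((s+t)F) - W_0(sF) - W_0(tF) + I\big)$, using Eq.~\ref{eq:fields} at $\hbar = 0$ together with the commutative product $W_0(sF)W_0(tF) = W_0((s+t)F)$, where the phase drops out because $\sigma(F,F) = 0$. Applying $\tilde{\mathcal{Q}}_\hbar^J$ and using $\tilde{\mathcal{Q}}_\hbar^J(W_0(rF)) = e^{-\frac{\hbar}{4}r^2\alpha_J(F,F)}W_\hbar(rF)$---here $\alpha_J(rF,rF) = r^2\alpha_J(F,F)$ and $\alpha_J(F,F) = \sigma(F,JF)$ is real since $\sigma(F,F) = 0$---linearity and weak* continuity let me move the map inside the limits, replacing each $W_0(rF)$ by $a_r W_\hbar(rF)$ with $a_r := e^{-\frac{\hbar}{4}r^2\alpha_J(F,F)}$.

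Next I would isolate $\Phi_\hbar(F)^2$. Because $\sigma(F,F) = 0$, the Weyl relation likewise gives $W_\hbar(sF)W_\hbar(tF) = W_\hbar((s+t)F)$, so $r\mapsto W_\hbar(rF)$ is a one-parameter group and $\Phi_\hbar(F)^2 = -\lim_{s\to 0}\lim_{t\to 0}\frac{1}{st}\big(W_\hbar((s+t)F) - W_\hbar(sF) - W_\hbar(tF) + I\big)$. Subtracting this from the previous expression, the difference $\tilde{\mathcal{Q}}_\hbar^J(\Phi_0(F)^2) - \Phi_\hbar(F)^2$ becomes the iterated limit of $-\frac{1}{st}$ times $(a_{s+t}-1)W_\hbar((s+t)F) - (a_s-1)W_\hbar(sF) - (a_t-1)W_\hbar(tF)$. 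Writing each $W_\hbar(rF) = I + (W_\hbar(rF) - I)$ splits this into a scalar part, carrying $\big((a_{s+t}-1) - (a_s-1) - (a_t-1)\big)I$, and cross terms carrying factors $(a_r-1)(W_\hbar(rF) - I)$.

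The scalar part is the main calculation: since $a_r - 1 = -\frac{\hbar}{4}r^2\alpha_J(F,F) + O(r^4)$ and $(s+t)^2 - s^2 - t^2 = 2st$, the mixed second difference $\frac{1}{st}\big((a_{s+t}-1) - (a_s-1) - (a_t-1)\big)$ converges to $-\frac{\hbar}{2}\alpha_J(F,F)$, so after the overall sign this part contributes exactly $\frac{\hbar}{2}\alpha_J(F,F)I$. The hard part will be the topological bookkeeping. I must justify that $\tilde{\mathcal{Q}}_\hbar^J$ commutes with the iterated weak* limits---guaranteed by its weak* continuity together with the fact that $\tilde{\mathfrak{A}}_\hbar$ is constructed precisely so that these regular limits converge---and that the cross terms vanish. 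The latter is the decisive estimate: because $a_r - 1 = O(r^2)$ while $\frac{W_\hbar(rF) - I}{r}$ converges weak*, each product $\frac{1}{st}(a_r - 1)(W_\hbar(rF) - I)$ carries an extra power of the limiting parameter and drops out in the iterated limit. Conceptually the whole computation is the statement that $\tilde{\mathcal{Q}}_\hbar^J(\Phi_0(F)^2) = -g''(0)$ for $g(r) := a_r W_\hbar(rF)$, evaluated by the product rule. Once these points are secured, only the scalar part survives and the identity follows.
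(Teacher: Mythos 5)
Your setup matches the paper's up to the point of forming the difference, and your scalar computation is correct; but the step you yourself call the ``decisive estimate'' is false as stated. After subtracting the bare expansion of $\Phi_\hbar(F)^2$, your cross terms are $\frac{1}{st}(a_r-1)(W_\hbar(rF)-I)$ for $r\in\{s+t,\,s,\,t\}$, and you claim each one separately drops out of the iterated limit $\lim_{s\to 0}\lim_{t\to 0}$. That is true only for $r=t$. For $r=s$ the numerator $(a_s-1)(W_\hbar(sF)-I)$ is independent of $t$ and nonzero, so $\frac{1}{st}(a_s-1)(W_\hbar(sF)-I)$ diverges like $1/t$ in the inner limit; likewise the $r=s+t$ term has numerator converging weak* to the same nonzero quantity and also blows up. The ``extra power'' those two terms carry is a power of the \emph{outer} variable $s$, which cannot compensate the $1/t$ inside the inner limit. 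The divergences cancel only against each other: you must pair them, e.g.\ write $(a_{s+t}-1)(W_\hbar((s+t)F)-I)-(a_s-1)(W_\hbar(sF)-I) = (a_{s+t}-a_s)(W_\hbar((s+t)F)-I) + (a_s-1)W_\hbar(sF)(W_\hbar(tF)-I)$ using $W_\hbar((s+t)F)=W_\hbar(sF)W_\hbar(tF)$ (valid since $\sigma(F,F)=0$), note $a_{s+t}-a_s=O(st)$, and then take the limits using separate weak*-continuity of multiplication. With that repair your argument goes through, but as written the step fails.

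The paper sidesteps this issue entirely: it compares $\tilde{\mathcal{Q}}_\hbar^J(\Phi_0(F)^2)$ not with the bare expansion of $\Phi_\hbar(F)^2$ but with $\big(\tilde{\mathcal{Q}}_\hbar^J(\Phi_0(F))\big)^2$ expanded \emph{with its quantization factors}. There the product contributes $a_sa_t\,W_\hbar(sF)W_\hbar(tF) = a_sa_t\,W_\hbar((s+t)F)$, so the $W_\hbar(sF)$, $W_\hbar(tF)$, and $I$ terms cancel identically, and the difference reduces to the single term $\frac{1}{st}\big(a_sa_t-a_{s+t}\big)W_\hbar((s+t)F)$ --- a scalar limit times a weak*-convergent factor, which gives $\frac{\hbar}{2}\alpha_J(F,F)I$ at once; Lemma~\ref{lem:fields} then identifies $\big(\tilde{\mathcal{Q}}_\hbar^J(\Phi_0(F))\big)^2$ with $\Phi_\hbar(F)^2$. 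So either keep your decomposition and supply the paired-cancellation argument above, or adopt the paper's comparison, in which case the problematic cross terms never arise.
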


\begin{proof}
First, we compute
\begin{align*}
    \tilde{\mathcal{Q}}_\hbar^J(\Phi_0(F)^2) &= \tilde{\mathcal{Q}}_\hbar^J\bigg((-i)^2\lim_{s\to 0}\lim_{t\to 0}\frac{(W_0(sF) - I)(W_0(tF) - I)}{st}\bigg)\\
    &=-\tilde{\mathcal{Q}}_\hbar^J\bigg(\lim_{s\to 0}\lim_{t\to 0}\frac{W_0((s+t)F) - W_0(sF) - W_0(tF) + I}{st}\bigg)\\
    &= -\lim_{s\to 0}\lim_{t\to 0}\frac{e^{-\frac{\hbar}{4}(s+t)^2\alpha_J(F,F)}W_\hbar((s+t)F) - e^{-\frac{\hbar}{4}s^2\alpha_J(F,F)}W_\hbar(sF) - e^{-\frac{\hbar}{4}t^2\alpha_J(F,F)}W_\hbar(tF) + I}{st}.
\end{align*}

Compare this to the value
\begin{align*}
    \tilde{\mathcal{Q}}_\hbar^J(\Phi_0(&F))^2 = (-i)^2\bigg(\lim_{t\to 0}\frac{\mathcal{Q}_\hbar^J(W_0(tF)) - I}{t}\bigg)^2\\
    &= - \lim_{s\to 0}\lim_{t\to 0}\frac{\Big(e^{-\frac{\hbar}{4}s^2\alpha_J(F,F)}W_\hbar(sF) - I\Big)\Big(e^{-\frac{\hbar}{4}t^2\alpha_J(F,F)}W_\hbar(tF) - I\Big)}{st}\\
    &= -\lim_{s\to 0}\lim_{t\to 0}\frac{\Big(e^{-\frac{\hbar}{4}(s^2+t^2)\alpha_J(F,F)}W_\hbar((s+t)F) - e^{-\frac{\hbar}{4}s^2\alpha_J(F,F)}W_\hbar(sF) -  e^{-\frac{\hbar}{4}t^2\alpha_J(F,F)}W_\hbar(tF) + I\Big)}{st}.
\end{align*}

This gives the identity
\begin{align*}
    \tilde{\mathcal{Q}}_\hbar^J(\Phi_0(F)^2) - \tilde{\mathcal{Q}}_\hbar(\Phi_0(F))^2 &= - \bigg(\lim_{s\to 0}\lim_{t\to 0}\frac{ e^{-\frac{\hbar}{4}(s+t)^2\alpha_J(F,F)} - e^{-\frac{\hbar}{4}(s^2+t^2)\alpha_J(F,F)}}{st}W_\hbar((s+t)F)\bigg)\\
    &= \bigg(\lim_{s\to 0}\lim_{t\to 0}\frac{e^{-\frac{\hbar}{4}(s^2+t^2)\alpha_J(F,F)} - e^{-\frac{\hbar}{4}(s+t)^2\alpha_J(F,F)}}{st}\bigg)\bigg(\lim_{s\to 0}\lim_{t\to 0} W_\hbar((s+t)F)\bigg)\\
    &= \frac{\hbar}{2}\alpha_J(F,F) I.
\end{align*}
Thus, $\tilde{\mathcal{Q}}_\hbar^J(\Phi_0(F)^2) = \tilde{\mathcal{Q}}_\hbar^J(\Phi_0(F))^2 + \frac{\hbar}{2}\alpha_J(F,F)I = \Phi_\hbar(F)^2 + \frac{\hbar}{2}\alpha_J(F,F)I$.
\end{proof}

We would like to extend the conditions of a strict quantization to even unbounded operators such as $\Phi_\hbar(F)$.  However, since we have extended the quantization map in the weak topology, the resulting notion of approximation in the classical limit is significantly weaker than the norm approximations in a strict quantization.  We do at least have a notion of approximation pointwise on each state, as follows.  Fix some choice of $H\in[0,1]$ and an arbitrary functional $\omega_H\in V_{H}$.  We construct the ``constant" section of linear functionals $\{\omega_\hbar\in V_\hbar\}_{\hbar\in[0,1]}$ through the point $\omega_H$ as the continuous extension of
\[\omega_\hbar: \mathcal{Q}_\hbar^J(A)\mapsto \omega_H(\mathcal{Q}_{H}^J(A))\in\mathbb{C}\]
for each $\hbar\in[0,1]$ and each $A\in\Delta(E,0)$.  Then for any $A,B\in\tilde{\mathfrak{A}}_0$ and any $\epsilon>0$, there is an $\hbar'\in (0,1]$ such that for all $\hbar<\hbar'$,
\[\Big| \omega_\hbar\Big(\tilde{\mathcal{Q}}_\hbar^J(A)\tilde{\mathcal{Q}}_\hbar^J(B) - \tilde{\mathcal{Q}}_\hbar^J(AB)\Big)\Big|<\epsilon\]
when $AB$ and $\tilde{\mathcal{Q}}_\hbar^J(A)\tilde{\mathcal{Q}}^J_\hbar(B)$ are well-defined.  This approximation is of course much weaker than one would like.  However, we show next that the preliminary lemmas just stated imply that the classical limits of field operators in particular satisfy a stronger approximation given by Dirac's condition and von Neumann's condition for a strict quantization.  This follows because although the field operators are unbounded and the norm is not defined on them, the relevant differences of operators are bounded and so the conditions are meaningful exactly as stated.  In what follows, we understand the Poisson bracket to be defined as in \citet[][Eq. 2.15, p. 11]{HoRiSc08}; cf. the Peierls bracket as defined in \citet{FrRe15} and \citeauthor{Re16}.\citep{Re16}  

\begin{prop}
\label{prop:lims}
For all $F,G\in E$,
\begin{enumerate}
    \item $\lim_{\hbar\to 0}\norm{\frac{i}{\hbar}[\Phi_\hbar(F),W_\hbar(G)] - \tilde{\mathcal{Q}}^J_\hbar(\{\Phi_0(F),W_0(G)\})}_\hbar = 0$.
    \item $\lim_{\hbar\to 0}\norm{\Phi_\hbar(F)W_\hbar(G) - \tilde{\mathcal{Q}}_\hbar^J(\Phi_0(F)W_0(G))}_\hbar = 0$
    \item $\lim_{\hbar\to 0}\norm{\frac{i}{\hbar}[\Phi_\hbar(F),\Phi_\hbar(G)] - \tilde{\mathcal{Q}}^J_\hbar(\{\Phi_0(F),\Phi_0(G)\})}_\hbar = 0$.
    \item $\lim_{\hbar\to 0}\norm{\Phi_\hbar(F)\Phi_\hbar(G) - \tilde{\mathcal{Q}}_\hbar^J(\Phi_0(F)\Phi_0(G))}_\hbar = 0$
\end{enumerate}
\end{prop}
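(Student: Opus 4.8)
The plan is to reduce all four statements to direct computations with the defining weak* limits in Eq.~\ref{eq:fields}, the Weyl relations, and the lemmas already established, exploiting throughout that although $\Phi_\hbar(F)$ is unbounded, each difference placed inside a norm becomes—once the unbounded contributions cancel—a bounded element of $\tilde{\mathfrak{A}}_\hbar$ on which $\norm{\cdot}_\hbar$ is actually defined. The two Dirac-type statements (1) and (3) I would handle by computing commutators, and the two von Neumann-type statements (2) and (4) by computing products.

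For the Dirac conditions I would first record the relevant classical brackets. Expanding $\Phi_0(F)=-i\lim_{s\to0}(W_0(sF)-I)/s$ and using bilinearity of the Poisson bracket with $\{W_0(F),W_0(G)\}=\sigma(F,G)W_0(F+G)$ gives $\{\Phi_0(F),W_0(G)\}=-i\sigma(F,G)W_0(G)$ and $\{\Phi_0(F),\Phi_0(G)\}=-\sigma(F,G)I$. Statement (3) is then immediate: Lemma~\ref{lem:comm} with $n=1$ gives $[\Phi_\hbar(F),\Phi_\hbar(G)]=i\hbar\sigma(F,G)I$, so $\frac{i}{\hbar}[\Phi_\hbar(F),\Phi_\hbar(G)]=-\sigma(F,G)I=\tilde{\mathcal{Q}}^J_\hbar(\{\Phi_0(F),\Phi_0(G)\})$ and the difference vanishes identically for every $\hbar$. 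For statement (1) I would compute $[\Phi_\hbar(F),W_\hbar(G)]$ from the defining limit and the Weyl relations, where the phases combine into $e^{-\frac{i\hbar}{2}s\sigma(F,G)}-e^{\frac{i\hbar}{2}s\sigma(F,G)}$, whose $s$-derivative at $0$ yields $[\Phi_\hbar(F),W_\hbar(G)]=-\hbar\sigma(F,G)W_\hbar(G)$. Hence $\frac{i}{\hbar}[\Phi_\hbar(F),W_\hbar(G)]=-i\sigma(F,G)W_\hbar(G)$, while $\tilde{\mathcal{Q}}^J_\hbar(\{\Phi_0(F),W_0(G)\})=-i\sigma(F,G)c^J_\hbar(G)W_\hbar(G)$, so the difference is the bounded element $-i\sigma(F,G)(1-c^J_\hbar(G))W_\hbar(G)$ of norm $\abs{\sigma(F,G)}(1-c^J_\hbar(G))\to0$.

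For the von Neumann conditions the key fact is that $\tilde{\mathcal{Q}}^J_\hbar$ of a product differs from the corresponding operator product only by a bounded scalar term of order $\hbar$. For statement (4) I would avoid a fresh double-limit computation by polarizing: since $\Phi_\hbar$ and $\Phi_0$ are linear (readily from Eq.~\ref{eq:fields}) and the classical algebra is commutative, $\Phi_0(F)\Phi_0(G)=\tfrac14(\Phi_0(F+G)^2-\Phi_0(F-G)^2)$, so applying the preceding lemma $\tilde{\mathcal{Q}}^J_\hbar(\Phi_0(H)^2)=\Phi_\hbar(H)^2+\frac{\hbar}{2}\alpha_J(H,H)I$ at $H=F\pm G$ and using $\alpha_J(F+G,F+G)-\alpha_J(F-G,F-G)=4\sigma(F,JG)$ gives $\tilde{\mathcal{Q}}^J_\hbar(\Phi_0(F)\Phi_0(G))=\tfrac12(\Phi_\hbar(F)\Phi_\hbar(G)+\Phi_\hbar(G)\Phi_\hbar(F))+\frac{\hbar}{2}\sigma(F,JG)I$. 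Comparing with $\Phi_\hbar(F)\Phi_\hbar(G)=\tfrac12(\Phi_\hbar(F)\Phi_\hbar(G)+\Phi_\hbar(G)\Phi_\hbar(F))+\tfrac12 i\hbar\sigma(F,G)I$ (Lemma~\ref{lem:comm}), the unbounded symmetric parts cancel and the difference is the bounded multiple of the identity $-\frac{\hbar}{2}\overline{\alpha_J(F,G)}I$, of norm $\frac{\hbar}{2}\abs{\alpha_J(F,G)}\to0$. Statement (2) I would attack directly: writing $\Phi_0(F)W_0(G)=-i\lim_{s\to0}(W_0(sF+G)-W_0(G))/s$, passing the weak*-continuous $\tilde{\mathcal{Q}}^J_\hbar$ through the limit, and differentiating $c^J_\hbar(sF+G)W_\hbar(sF+G)$ at $s=0$ yields $\tilde{\mathcal{Q}}^J_\hbar(\Phi_0(F)W_0(G))=c^J_\hbar(G)\Phi_\hbar(F)W_\hbar(G)+\frac{i\hbar}{2}c^J_\hbar(G)\overline{\alpha_J(F,G)}W_\hbar(G)$.

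The main obstacle I expect is statement (2), and specifically the bookkeeping that forces the difference to be bounded. Unlike the pure field operators—where $c^J_\hbar(tF)\to1$ quadratically, so no factor survives in $\tilde{\mathcal{Q}}^J_\hbar(\Phi_0(F))=\Phi_\hbar(F)$—the product carries $c^J_\hbar(G)$ evaluated away from the origin, so the unbounded piece of $\tilde{\mathcal{Q}}^J_\hbar(\Phi_0(F)W_0(G))$ appears as $c^J_\hbar(G)\Phi_\hbar(F)W_\hbar(G)$. This unbounded term therefore cancels precisely against $\Phi_\hbar(F)\,\tilde{\mathcal{Q}}^J_\hbar(W_0(G))=c^J_\hbar(G)\Phi_\hbar(F)W_\hbar(G)$, leaving the bounded remainder $-\frac{i\hbar}{2}c^J_\hbar(G)\overline{\alpha_J(F,G)}W_\hbar(G)$, whose norm is $\frac{\hbar}{2}c^J_\hbar(G)\abs{\alpha_J(F,G)}\to0$; I would read the first factor of statement (2) in exactly this quantized sense so that the norm is meaningful. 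The delicate points are thus justifying the term-by-term manipulation of weak* limits in the presence of unbounded operators and verifying that, after the unbounded contributions are removed, what remains genuinely lies in $\tilde{\mathfrak{A}}_\hbar$, so that the C*-norm $\norm{\cdot}_\hbar$ applies as stated.
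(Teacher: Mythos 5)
Your items (1) and (3) follow the paper's own proof: the same commutator computations, the same classical brackets (your $-i\sigma(F,G)W_0(G)$ is the paper's $i\sigma(G,F)W_0(G)$), and the same bounded remainders. Item (4) is a genuinely different route: where the paper recomputes $\tilde{\mathcal{Q}}^J_\hbar(\Phi_0(F)\Phi_0(G))$ from scratch as an iterated limit of quantized Weyl elements, you polarize and feed $H=F\pm G$ into the already-proved identity $\tilde{\mathcal{Q}}^J_\hbar(\Phi_0(H)^2)=\Phi_\hbar(H)^2+\frac{\hbar}{2}\alpha_J(H,H)I$, then compare via Lemma \ref{lem:comm}. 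This avoids any fresh interchange of double weak* limits, and your remainder $-\frac{\hbar}{2}\overline{\alpha_J(F,G)}I$ is in fact the correct one: the paper's computation expands the factor $c^J_\hbar(tF+sG)$ with cross term $2st\,\alpha_J(F,G)$, whereas Hermitian symmetry of $\alpha_J$ makes the cross term $2st\,\sigma(F,JG)$ (the imaginary parts cancel), so the paper's stated remainder $-\frac{\hbar}{2}\sigma(F,JG)I$ is off by $\frac{i\hbar}{2}\sigma(F,G)I$. Either way the remainder is $O(\hbar)$ times the identity, so the conclusion of (4) is unaffected.

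On item (2) you have put your finger on a genuine defect in the paper, and your bookkeeping is the correct one. Since $\tilde{\mathcal{Q}}^J_\hbar$ is the weak*-continuous extension, its value on $\Phi_0(F)W_0(G)=-i\lim_{t\to0}\frac{1}{t}\big(W_0(tF+G)-W_0(G)\big)$ is forced, and it is exactly your $c^J_\hbar(G)\Phi_\hbar(F)W_\hbar(G)+\frac{i\hbar}{2}c^J_\hbar(G)\overline{\alpha_J(F,G)}W_\hbar(G)$. Hence the difference in the literal statement (2) equals $\big(1-c^J_\hbar(G)\big)\Phi_\hbar(F)W_\hbar(G)$ plus a bounded term; for $\hbar>0$ and $F,G\neq 0$ this is not a bounded element, so $\norm{\cdot}_\hbar$ is not even defined on it and the limit cannot hold as written. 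The paper's proof conceals this by an invalid decomposition: it replaces the subtracted constant $e^{-\frac{\hbar}{4}\alpha_J(G,G)}$ by $1$ inside the scalar difference quotient (a quotient that actually diverges as $t\to 0$), and then tacitly cancels $e^{-\frac{\hbar}{4}\alpha_J(G,G)}\Phi_\hbar(F)W_\hbar(G)$ against $\Phi_\hbar(F)W_\hbar(G)$. Your repair---reading the first factor as $\Phi_\hbar(F)\,\mathcal{Q}^J_\hbar(W_0(G))=c^J_\hbar(G)\Phi_\hbar(F)W_\hbar(G)$, so the unbounded parts cancel identically and the remainder has norm $\frac{\hbar}{2}c^J_\hbar(G)\abs{\alpha_J(F,G)}\to 0$---is the minimal correct reformulation, and it is precisely what the paper's cancellation tacitly presupposes. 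So you have proved (1), (3), (4) as stated and a corrected form of (2); the literal (2) is not provable, and identifying that is a contribution, not a gap in your argument. The one caveat you share with the paper is the term-by-term manipulation of weak* limits of products of unbounded elements, which you flag explicitly and which the paper also uses without further justification.
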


\begin{proof}

\begin{enumerate}
    \item First, we have $\{\Phi_0(F),W_0(G)\} = i\sigma(G,F)W_0(G)$, which implies \[\tilde{\mathcal{Q}}_\hbar^J(\{\Phi_0(F),W_0(G)\} = i\sigma(G,F)e^{-\frac{\hbar}{4}\alpha_J(G,G)}W_\hbar(G).\]
    Further,
    \begin{align*}
        \Phi_\hbar(F)W_\hbar(G) &= \bigg(-i\lim_{t\to 0}\frac{W_\hbar(tF)-I}{t}\bigg)W_\hbar(G)\\
        &= W_\hbar(G)\bigg(-i\lim_{t\to 0}\frac{e^{i\hbar\sigma(G,tF)}W_\hbar(tF)-I}{t}\bigg)\\
        &=W_\hbar(G)\bigg(\Big(-i\lim_{t\to 0}\frac{e^{i\hbar\sigma(G,tF)}-1}{t}\Big)I -i\lim_{t\to 0}\frac{W_\hbar(tF)-I}{t}\bigg)\\
        &= W_\hbar(G)((-i)(i)\hbar\sigma(G,F)I + \Phi_\hbar(F))\\
        &= W_\hbar(G)\Phi_\hbar(F) + \hbar\sigma(G,F)W_\hbar(G),
    \end{align*}
which implies
\[\frac{i}{\hbar}[\Phi_\hbar(F),W_\hbar(G)] = i\sigma(G,F)W_\hbar(G)\]
and hence
\begin{align*}
    \lim_{\hbar\to 0}\norm{\frac{i}{\hbar}[\Phi_\hbar(F),W_\hbar(G)] &- \tilde{\mathcal{Q}}^J_\hbar(\{\Phi_0(F),W_0(G)\})}_\hbar\\
    &= \lim_{\hbar\to 0}\norm{i\sigma(G,F) - i\sigma(G,F)e^{-\frac{\hbar}{4}\alpha_J(G,G)}W_\hbar(G)}_\hbar\\
    &= \lim_{\hbar\to 0}\abs{i\sigma(G,F)(1-e^{-\frac{\hbar}{4}\alpha_J(G,G)})} = 0.
\end{align*}
\item We have
\[\Phi_0(F)W_0(G) = -i\lim_{t\to 0}\frac{W_0(tF+G)-W_0(G)}{t}\]
so that
\begin{equation*}
\begin{split}    \tilde{\mathcal{Q}}_\hbar^J(\Phi_0(F)W_0(G)) &= -i\lim_{t\to 0}\frac{e^{-\frac{\hbar}{4}\alpha_J(tF+G,tF+G)}W_\hbar(tF+G) - e^{-\frac{\hbar}{4}\alpha_J(G,G)}W_\hbar(G)}{t}\\ &=\bigg(-i\lim_{t\to 0}\frac{e^{-\frac{\hbar}{4}\alpha_J(tF+G,tF+G)}e^{\frac{i\hbar}{2}\sigma(tF,G)}W_\hbar(tF) - e^{-\frac{\hbar}{4}\alpha_J(G,G)}I}{t}\bigg)W_\hbar(G)\\
     &\begin{split}= \bigg(\Big(-i\lim_{t\to 0}&\frac{e^{-\frac{\hbar}{4}\alpha_J(tF+G,tF+G)}e^{\frac{i\hbar}{2}\sigma(tF,G)}-1}{t}\Big)I \\&- i e^{-\frac{\hbar}{4}\alpha_J(G,G)}\Big(\lim_{t\to0}\frac{W_\hbar(tF)-I}{t}\Big)\bigg)W_\hbar(G)\end{split} \\
     &=\Big(\frac{i\hbar}{2}\sigma(F,JG)I + e^{-\frac{\hbar}{4}\alpha_J(G,G)}\Phi_\hbar(F)\Big)W_\hbar(G).
\end{split}
\end{equation*}
It follows that
\begin{equation*}
    \begin{split}
        \lim_{\hbar\to 0}\norm{\Phi_\hbar(F)W_\hbar(G) - \tilde{\mathcal{Q}}_\hbar^J(\Phi_0(F)W_0(G))}_\hbar &= \lim_{\hbar\to 0} \abs{-\frac{i\hbar}{2}\sigma(F,JG)} = 0.
    \end{split}
\end{equation*}

\item This follows immediately from Lemma \ref{lem:comm} together with the fact that \[\{\Phi_0(F),\Phi_0(G)\} = -\sigma(F,G).\]

\item We have
\begin{align*}
   &\tilde{\mathcal{Q}}_\hbar^J(\Phi_0(F)\Phi_0(G))\\ &= (-i)^2\tilde{\mathcal{Q}}_\hbar^J\bigg(\lim_{t\to0}\lim_{s\to 0}\frac{W_0(tF+sG) - W_0(tF) - W_0(sG) + I}{st}\bigg)\\
    &=(-i)^2\lim_{t\to 0}\lim_{s\to 0}\frac{e^{-\frac{\hbar}{4}\alpha_J(tF+sG,tF+sG)}W_\hbar(tF+sG) - e^{-\frac{\hbar}{4}\alpha_J(tF,tF)}W_\hbar(tF) - e^{-\frac{\hbar}{4}\alpha_J(sG,sG)}W_\hbar(sG)+I}{st}
\end{align*}
and
\begin{equation*}
\begin{split}
    &\tilde{\mathcal{Q}}^J_\hbar(\Phi_0(F))\tilde{\mathcal{Q}}_\hbar^J(\Phi_0(G))\\
    &= (-i)^2\bigg(\lim_{t\to 0}\frac{e^{-\frac{\hbar}{4}\alpha_J(tF,tF)}W_\hbar(tF)-I}{t}\bigg)\bigg(\lim_{s\to 0}\frac{e^{-\frac{\hbar}{4}\alpha_J(sG,sG)}W_\hbar(sG)-I}{s}\bigg)\\
    &\begin{split}= (-i)^2\lim_{t\to 0}\lim_{s\to 0}\frac{1}{st}\bigg(e^{-\frac{\hbar}{4}(\alpha_J(tF,tF) + \alpha_J(sG,sG))}&e^{-\frac{i\hbar}{2}\sigma(tF,sG)}W_\hbar(tF+sG) \\&- e^{-\frac{\hbar}{4}\alpha_J(tF,tF)}W_\hbar(tF) - e^{-\frac{\hbar}{4}\alpha_J(sG,sG)}W_\hbar(sG)+I\bigg),\end{split}
    \end{split}
\end{equation*}
which implies
\begin{equation*}
    \begin{split}
        \tilde{\mathcal{Q}}_\hbar^J(\Phi_0(F))&\tilde{\mathcal{Q}}_\hbar^J(\Phi_0(G)) - \tilde{\mathcal{Q}}_\hbar^J(\Phi_0(F)\Phi_0(G))\\ &= (-i)^2\lim_{t\to 0}\lim_{s\to 0} \frac{1}{st}e^{-\frac{\hbar}{4}(\alpha_J(tF,tF) + \alpha_J(sG,sG))}\bigg(e^{-\frac{i\hbar}{2}st\sigma(F,G)} - e^{-\frac{\hbar}{2}st\alpha_J(F,G)}\bigg)W_\hbar(tF+sG)\\
        &=\lim_{t\to 0}\lim_{s\to 0}e^{-\frac{i\hbar}{2}st\sigma(F,G)}\bigg(\frac{e^{-\frac{\hbar}{2}st\sigma(F,JG)}-1}{st}\bigg)I\\
        &=-\frac{\hbar}{2}\sigma(F,JG)
    \end{split}
\end{equation*}
and hence
\begin{align*}
    \lim_{\hbar\to 0}\norm{\Phi_\hbar(F)\Phi_\hbar(G) - \tilde{\mathcal{Q}}_\hbar^J(\Phi_0(F)\Phi_0(G))}_\hbar &= \lim_{\hbar\to 0}\abs{-\frac{\hbar}{2}\sigma(F,JG)} = 0. \qedhere
\end{align*}
\end{enumerate}

\end{proof}

\noindent This establishes a strong sense in which $\Phi_0(F)$ is the classical limit of $\Phi_\hbar(F)$.

Suppose further that we are given a complex structure $J_0$ compatible with $\sigma$, which may be distinct from the complex structure $J$ used to define the quantization map.  In the next section, we will consider two such possible complex structures. It is important to note that facts about the classical limits of quantities defined in the quantum theory via a complex structure do not depend on which complex structure is used in the definition of the quantization map.  We can use a complex structure to define $J_0$-creation and $J_0$-annihilation operators abstractly by
\begin{align*}
a_\hbar^{J_0}(F):= \frac{1}{\sqrt{2}}\Big(\Phi_\hbar(F) + i\Phi_\hbar(J_0F)\Big) &&
\big(a_\hbar^{J_0}(F)\big)^*:= \frac{1}{\sqrt{2}}\Big(\Phi_\hbar(F) - i\Phi_\hbar(J_0F)\Big).
\end{align*}
Notice that it follows immediately from Lemma \ref{lem:fields} that $\mathcal{Q}_\hbar^J(a_0^{J_0}(F)) = a_\hbar^{J_0}(F)$ even when $J$ and $J_0$ are distinct.  Similarly, the $J_0$-creation and $J_0$-annihilation operators can be used to abstractly define the $J_0$-number operators
\begin{align*}
    N_\hbar^{J_0}(F) := \big(a_\hbar^{J_0}(F)\big)^*a_\hbar^{J_0}(F).
\end{align*}
Although $\mathcal{Q}_\hbar^J(N_0^{J_0}(F))\neq N_\hbar^{J_0}(F)$, we can still show a sense in which $N_0^{J_0}(F)$ is the classical limit of $N_\hbar^{J_0}(F)$.

\begin{prop} For all $F\in E$, $\lim_{\hbar\to 0}\norm{\tilde{\mathcal{Q}}_\hbar^J(N_0^{J_0}(F)) - N^{J_0}_\hbar(F)}_\hbar = 0$.
\end{prop}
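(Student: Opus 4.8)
The plan is to expand both the quantum number operator $N_\hbar^{J_0}(F)$ and its classical counterpart $N_0^{J_0}(F)$ into field operators, and then to exploit the preliminary lemmas so that the genuinely unbounded contributions cancel, leaving only a scalar multiple of the identity whose norm is manifestly $O(\hbar)$. First I would substitute the definitions of the creation and annihilation operators into $N_\hbar^{J_0}(F) = \big(a_\hbar^{J_0}(F)\big)^*a_\hbar^{J_0}(F)$ and expand, obtaining
\[N_\hbar^{J_0}(F) = \frac{1}{2}\Big(\Phi_\hbar(F)^2 + \Phi_\hbar(J_0 F)^2 + i[\Phi_\hbar(F),\Phi_\hbar(J_0 F)]\Big).\]
Invoking Lemma \ref{lem:comm} in the case $n=1$ evaluates the commutator as $i\hbar\sigma(F,J_0 F) I$, so that
\[N_\hbar^{J_0}(F) = \frac{1}{2}\Big(\Phi_\hbar(F)^2 + \Phi_\hbar(J_0 F)^2\Big) - \frac{\hbar}{2}\sigma(F,J_0 F) I.\]

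Next I would compute the classical quantity. Since $\mathcal{W}(E,0)$ is commutative the analogous commutator vanishes, giving $N_0^{J_0}(F) = \tfrac{1}{2}\big(\Phi_0(F)^2 + \Phi_0(J_0 F)^2\big)$. Applying $\tilde{\mathcal{Q}}_\hbar^J$ by linearity and then the lemma $\tilde{\mathcal{Q}}_\hbar^J(\Phi_0(G)^2) = \Phi_\hbar(G)^2 + \tfrac{\hbar}{2}\alpha_J(G,G) I$ twice, once with $G=F$ and once with $G=J_0 F$, yields
\[\tilde{\mathcal{Q}}_\hbar^J(N_0^{J_0}(F)) = \frac{1}{2}\Big(\Phi_\hbar(F)^2 + \Phi_\hbar(J_0 F)^2\Big) + \frac{\hbar}{4}\Big(\alpha_J(F,F) + \alpha_J(J_0 F,J_0 F)\Big) I.\]
Subtracting the two expressions, the $\Phi_\hbar(F)^2 + \Phi_\hbar(J_0 F)^2$ terms cancel exactly, and the difference reduces to the scalar
\[\tilde{\mathcal{Q}}_\hbar^J(N_0^{J_0}(F)) - N_\hbar^{J_0}(F) = \frac{\hbar}{4}\Big(\alpha_J(F,F) + \alpha_J(J_0 F,J_0 F) + 2\sigma(F,J_0 F)\Big) I.\]
Because the coefficient in parentheses is a fixed real number independent of $\hbar$ (using $\sigma(F,F)=0$ to identify $\alpha_{J_0}(F,F) = \sigma(F,J_0 F)$), the norm of this element is simply $\hbar/4$ times that constant, which tends to $0$.

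The point I would emphasize---and the only conceptual subtlety---is that, although each of $N_\hbar^{J_0}(F)$ and $\tilde{\mathcal{Q}}_\hbar^J(N_0^{J_0}(F))$ is an unbounded element on which the norm is undefined, their difference is a bounded multiple of the identity, so the expression $\norm{\,\cdot\,}_\hbar$ in the statement is meaningful exactly as written. This cancellation of the unbounded parts is the crux of the argument; once it is in place, the whole claim reduces to tracking the scalar quantization-factor corrections supplied by the two preliminary lemmas, all of which are $O(\hbar)$. I do not expect any genuine obstacle beyond correctly bookkeeping the $\alpha_J$ versus $\sigma(\cdot,J_0\cdot)$ terms, which is purely algebraic.
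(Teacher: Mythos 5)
Your proposal is correct and follows essentially the same route as the paper's own proof: expand $N_\hbar^{J_0}(F)$ via the creation/annihilation operators, evaluate the commutator with Lemma~\ref{lem:comm}, apply the lemma $\tilde{\mathcal{Q}}_\hbar^J(\Phi_0(G)^2) = \Phi_\hbar(G)^2 + \tfrac{\hbar}{2}\alpha_J(G,G)I$ to the classical side, and observe that the unbounded parts cancel, leaving a scalar multiple of $I$ of order $\hbar$. The only (cosmetic) difference is that you keep the term $\alpha_J(F,F)+\alpha_J(J_0F,J_0F)$ unsimplified, whereas the paper collapses it to $2\alpha_J(F,F)$; your version is, if anything, the more careful bookkeeping when $J\neq J_0$.
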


\begin{proof}
Note that
\[N_\hbar^{J_0}(F) = \frac{1}{2}\Big(\Phi_\hbar(F)^2 + \Phi_\hbar(J_0F)^2 + i[\Phi_\hbar(F),\Phi_\hbar(J_0F)]\Big)\]
and
\begin{align*}
    \tilde{\mathcal{Q}}_\hbar^J(N_0^{J_0}(F)) &= \frac{1}{2}\tilde{\mathcal{Q}}_\hbar^J\Big(\Phi_0(F)^2 + \Phi_0(J_0F)^2\Big)\\
    &= \frac{1}{2}\Big(\Phi_\hbar(F)^2 + \Phi_\hbar(J_0F)^2 + \frac{\hbar}{2}(\alpha_J(F,F) + \alpha_J(J_0F,J_0F))\Big)\\
    &= \frac{1}{2}\Big(\Phi_\hbar(F)^2 + \Phi_\hbar(J_0F)^2 + \hbar\alpha_J(F,F)\Big)
\end{align*}
so since $[\Phi_\hbar(F),\Phi_\hbar(J_0F)] = i\hbar\alpha_{J_0}(F,F)$, it follows that
\begin{align*}
    \lim_{\hbar\to 0}\norm{\tilde{\mathcal{Q}}_\hbar^J(N_0^{J_0}(F)) - N_\hbar^{J_0}(F)}_\hbar &= \lim_{\hbar\to 0}\abs{\frac{\hbar}{2}(\alpha_J(F,F)+\alpha_{J_0}(F,F))} = 0. \qedhere
\end{align*}
\end{proof}

Furthermore, one can show that Dirac's condition and von Neumann's condition hold for some combinations of creation (or annihilation) operators and number operators.

\begin{prop}
For any $F,G\in E$,
\begin{enumerate}
\item $\lim_{\hbar\to 0}\norm{\frac{i}{\hbar}[a_\hbar^{J_0}(F),W_\hbar(G)] - \tilde{\mathcal{Q}}_\hbar^J(\{a_0^{J_0}(F),W_0(G)\})}_\hbar = 0$.
\item $\lim_{\hbar\to 0} \norm{a_\hbar^{J_0}(F)W_\hbar(G) - \tilde{\mathcal{Q}}^J_\hbar(a_0^{J_0}(F)W_0(G))}_\hbar = 0$.
\item $\lim_{\hbar\to 0}\norm{\frac{i}{\hbar}[a_\hbar^{J_0}(F),\Phi_\hbar(G)] - \tilde{\mathcal{Q}}_\hbar^J(\{a_0^{J_0}(F),\Phi_0(G)\})}_\hbar = 0$.
\item $\lim_{\hbar\to 0} \norm{a_\hbar^{J_0}(F)\Phi_\hbar(G) - \tilde{\mathcal{Q}}^J_\hbar(a_0^{J_0}(F)\Phi_0(G))}_\hbar = 0$.
\item $\lim_{\hbar\to 0}\norm{\frac{i}{\hbar}\Big[\big(a_\hbar^{J_0}(F)\big)^*,a^{J_0}_\hbar(G)\Big] - \tilde{\mathcal{Q}}_\hbar^J\Big(\Big\{\big(a_0^{J_0}(F)\big)^*,a^{J_0}_0(G)\Big\}\Big)}_\hbar = 0$.
\item $\lim_{\hbar\to 0} \norm{\big(a_\hbar^{J_0}(F)\big)^*a^{J_0}_\hbar(G) - \tilde{\mathcal{Q}}^J_\hbar\Big(\big(a_0^{J_0}(F)\big)^*a^{J_0}_0(G)\Big)}_\hbar = 0$.
\end{enumerate}
\end{prop}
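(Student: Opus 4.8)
The plan is to exploit the fact that the $J_0$-creation and $J_0$-annihilation operators are complex linear combinations of the field operators, so that each of the six statements reduces directly to the corresponding part of Proposition \ref{prop:lims}. Since $a_\hbar^{J_0}(F) = \frac{1}{\sqrt{2}}(\Phi_\hbar(F) + i\Phi_\hbar(J_0F))$ and $(a_\hbar^{J_0}(F))^* = \frac{1}{\sqrt{2}}(\Phi_\hbar(F) - i\Phi_\hbar(J_0F))$, and since $\tilde{\mathcal{Q}}_\hbar^J$ and the Poisson bracket are linear while the commutator and product are bilinear, every expression appearing in the six limits expands into a fixed finite combination of the analogous expressions built from field operators alone. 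The bound then comes from the triangle inequality for $\norm{\cdot}_\hbar$ together with Proposition \ref{prop:lims}.

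For items (1)--(4), which pair a single $a_\hbar^{J_0}(F)$ with $W_\hbar(G)$ or $\Phi_\hbar(G)$, I would first substitute the defining linear combination for $a_\hbar^{J_0}(F)$ and distribute the commutator (resp.\ product) across the summands $\Phi_\hbar(F)$ and $\Phi_\hbar(J_0F)$. On the classical side, linearity of the Poisson bracket and of $\tilde{\mathcal{Q}}_\hbar^J$ splits $\{a_0^{J_0}(F), W_0(G)\}$ and $a_0^{J_0}(F)W_0(G)$ the same way. The triangle inequality then bounds each quantity by $\frac{1}{\sqrt{2}}$ times a sum of two terms, each of which is exactly a quantity shown to vanish in the corresponding part of Proposition \ref{prop:lims}, now evaluated at the arguments $F$ and $J_0F$; letting $\hbar \to 0$ gives the result.

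For items (5)--(6), which involve a product of two such operators, I would expand both $(a_\hbar^{J_0}(F))^*$ and $a_\hbar^{J_0}(G)$ into their field-operator parts, producing a sum of four terms indexed by the pairs from $\{F, J_0F\} \times \{G, J_0G\}$. Linearity lets $\tilde{\mathcal{Q}}_\hbar^J((a_0^{J_0}(F))^* a_0^{J_0}(G))$ split identically, since on the commutative classical side the four products of functions may be formed in any order. The difference then decomposes, via the triangle inequality, into four commutator differences of the form $\frac{i}{\hbar}[\Phi_\hbar(F'),\Phi_\hbar(G')] - \tilde{\mathcal{Q}}_\hbar^J(\{\Phi_0(F'),\Phi_0(G')\})$ for item (5) and four product differences $\Phi_\hbar(F')\Phi_\hbar(G') - \tilde{\mathcal{Q}}_\hbar^J(\Phi_0(F')\Phi_0(G'))$ for item (6), which vanish by parts (3) and (4) of Proposition \ref{prop:lims}. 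Recording the classical brackets $\{(a_0^{J_0}(F))^*, a_0^{J_0}(G)\}$ by bilinear expansion using $\{\Phi_0(F'),\Phi_0(G')\} = -\sigma(F',G')$ is then pure bookkeeping.

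The only genuine subtlety, and the step I would be most careful about, is not the combinatorics but the fact that all of these objects are unbounded operators living in the weak* completion $\tilde{\mathfrak{A}}_\hbar$. I would need to confirm that the distributive and linearity manipulations are legitimate there, and, crucially, that every difference whose $\norm{\cdot}_\hbar$ is being estimated is in fact a bounded element---as emphasized in the discussion preceding Proposition \ref{prop:lims}---so that the norm statements are meaningful as stated. Since each constituent difference inherits boundedness from Proposition \ref{prop:lims}, and finite linear combinations of bounded elements remain bounded, this holds throughout; once it is verified, the six limits follow immediately from the triangle inequality and Proposition \ref{prop:lims}.
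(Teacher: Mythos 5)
Your proposal is correct and matches the paper's own argument, which disposes of all six limits in one line by citing Proposition \ref{prop:lims} together with the linearity of the creation and annihilation operators in the field operators; your expansion via $a_\hbar^{J_0}(F) = \frac{1}{\sqrt{2}}(\Phi_\hbar(F) + i\Phi_\hbar(J_0F))$, bilinearity, and the triangle inequality is exactly the elaboration the paper leaves implicit. Your added check that each estimated difference is a bounded element (inherited from Proposition \ref{prop:lims} and preserved under finite linear combinations) is a worthwhile point the paper glosses over.
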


\begin{proof}
This follows immediately from Prop.
\ref{prop:lims} along with the linearity of the creation and annihilation operators with respect to the field operators.
\end{proof}

\begin{prop}
For all $F,G\in E$
\begin{enumerate}
    \item $\lim_{\hbar\to 0}\norm{\frac{i}{\hbar}[N_\hbar^{J_0}(F),W_\hbar(G)] - \tilde{\mathcal{Q}}_\hbar^J(\{N_0^{J_0}(F),W_0(G)\})}_\hbar = 0$.

    \item $\lim_{\hbar\to 0}\norm{\frac{i}{\hbar}[N_\hbar^{J_0}(F),\Phi_\hbar(G)] - \tilde{\mathcal{Q}}_\hbar^{J_0}(\{N_0^{J_0}(F),\Phi_0(G)\})}_\hbar = 0$.

    \item $\lim_{\hbar\to 0}\norm{\frac{i}{\hbar}[N_\hbar^{J_0}(F),a_\hbar^{J_0}(G)] - \tilde{\mathcal{Q}}_\hbar^J(\{N_0^{J_0}(F),a_0^{J_0}(G)\})}_\hbar = 0.$

    \item $\lim_{\hbar\to 0}\norm{\frac{i}{\hbar}[N_\hbar^{J_0}(F),N_\hbar^{J_0}(G)] - \tilde{\mathcal{Q}}_\hbar^J(\{N_0^{J_0}(F),N_0^{J_0}(G)\})}_\hbar = 0.$

\end{enumerate}
\end{prop}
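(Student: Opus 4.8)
The plan is to reduce every item to Proposition~\ref{prop:lims} and Lemma~\ref{lem:comm} by exploiting the fact that $N_\hbar^{J_0}(F)$ is quadratic in the field operators. First I would record the expansion
$N_\hbar^{J_0}(F) = \frac{1}{2}\big(\Phi_\hbar(F)^2 + \Phi_\hbar(J_0 F)^2\big) - \frac{\hbar}{2}\alpha_{J_0}(F,F) I$,
which follows from the definition of $N_\hbar^{J_0}$ together with $[\Phi_\hbar(F), \Phi_\hbar(J_0F)] = i\hbar\alpha_{J_0}(F,F)I$ (the $n=1$ case of Lemma~\ref{lem:comm}). Since the additive scalar is central, it drops out of every commutator, so each bracket $[N_\hbar^{J_0}(F), X]$ reduces to $\frac{1}{2}[\Phi_\hbar(F)^2, X] + \frac{1}{2}[\Phi_\hbar(J_0F)^2, X]$. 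On the classical side the Poisson bracket is a derivation, so $\{N_0^{J_0}(F), X\}$ reduces in parallel to $\Phi_0(F)\{\Phi_0(F), X\} + \Phi_0(J_0F)\{\Phi_0(J_0F), X\}$. This matched pair of Leibniz reductions is the backbone of the argument.

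For items (2) and (3) the reduction terminates immediately. The $n=2$ case of Lemma~\ref{lem:comm} gives $[\Phi_\hbar(F)^2, \Phi_\hbar(G)] = 2i\hbar\sigma(F,G)\Phi_\hbar(F)$, so $\tfrac{i}{\hbar}[N_\hbar^{J_0}(F),\Phi_\hbar(G)]$ is exactly the real-linear combination $-\sigma(F,G)\Phi_\hbar(F) - \sigma(J_0F,G)\Phi_\hbar(J_0F)$; using $\{\Phi_0(F),\Phi_0(G)\} = -\sigma(F,G)$ and the fact that $\tilde{\mathcal{Q}}_\hbar^J$ fixes field operators (Lemma~\ref{lem:fields}), the quantized bracket equals the same combination, so the difference is identically zero. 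Item (3) then follows by writing $a_\hbar^{J_0}(G)$ as a complex-linear combination of $\Phi_\hbar(G)$ and $\Phi_\hbar(J_0G)$ and invoking linearity of both the commutator and the quantization map.

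Items (1) and (4) are where the unboundedness genuinely has to be controlled, and this is the part I expect to be the main obstacle. For (1) the inner Leibniz step produces field-times-Weyl products $\Phi_\hbar(F)W_\hbar(G)$, which are unbounded; I would rewrite $\tfrac{i}{\hbar}[N_\hbar^{J_0}(F),W_\hbar(G)]$ using $[\Phi_\hbar(F),W_\hbar(G)] = \hbar\sigma(G,F)W_\hbar(G)$ and compare it with $\tilde{\mathcal{Q}}_\hbar^J$ applied to $\{N_0^{J_0}(F),W_0(G)\} = i\sigma(G,F)\Phi_0(F)W_0(G) + i\sigma(G,J_0F)\Phi_0(J_0F)W_0(G)$. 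The point is that Proposition~\ref{prop:lims}(2) guarantees that $\tilde{\mathcal{Q}}_\hbar^J(\Phi_0(H)W_0(G))$ agrees with $\Phi_\hbar(H)W_\hbar(G)$ up to a bounded scalar of order $\hbar$, so the two unbounded field-Weyl contributions cancel and only a bounded $O(\hbar)$ remainder survives. For (4) I would use the $n=2$ case of Lemma~\ref{lem:comm} twice to obtain $[\Phi_\hbar(F)^2,\Phi_\hbar(G)^2] = 2i\hbar\sigma(F,G)\big(\Phi_\hbar(F)\Phi_\hbar(G) + \Phi_\hbar(G)\Phi_\hbar(F)\big)$, and match $\tfrac{i}{\hbar}$ times this against $\tilde{\mathcal{Q}}_\hbar^J$ of $\{\Phi_0(F)^2,\Phi_0(G)^2\} = -4\sigma(F,G)\Phi_0(F)\Phi_0(G)$, using Proposition~\ref{prop:lims}(4) to replace $\tilde{\mathcal{Q}}_\hbar^J(\Phi_0(F)\Phi_0(G))$ by $\Phi_\hbar(F)\Phi_\hbar(G)$ up to a bounded scalar.

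The delicate point throughout (1) and (4) is exactly that the individual operators are unbounded while the norm $\norm{\cdot}_\hbar$ is only defined on bounded elements; the statements are meaningful only because the leading unbounded pieces cancel. I therefore expect the crux to be the careful bookkeeping showing this cancellation---tracking the quantization factors $e^{-\frac{\hbar}{4}\alpha_J(\cdot,\cdot)}$ carried by the Weyl generators in (1), and the operator-ordering discrepancy between $\Phi_\hbar(F)\Phi_\hbar(G)$ and the symmetrized product in (4)---so that in each case the residual difference is a bounded quantity of order $\hbar$, whose norm then vanishes as $\hbar\to 0$. Once the cancellations are in place, the limits are immediate.
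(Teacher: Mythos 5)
Your proposal is correct and follows essentially the same route as the paper's proof: you expand $N_\hbar^{J_0}(F)$ quadratically in the field operators (with the central scalar $-\tfrac{\hbar}{2}\alpha_{J_0}(F,F)I$ dropping out of every commutator), apply the Leibniz rule on both the quantum and classical sides, and then reduce item by item to Lemma \ref{lem:comm}, Lemma \ref{lem:fields}, parts (2) and (4) of Prop.\ \ref{prop:lims}, and linearity of $a_\hbar^{J_0}$ in the fields. This matches the paper's argument, including the key point that the unbounded field--Weyl and field--field products cancel up to bounded $O(\hbar)$ ordering corrections.
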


\begin{proof}

\begin{enumerate}
    
    \item We have
    \[\{N_0^{J_0}(F),W_0(G)\} = i\big(\sigma(G,F)\Phi_0(F)W_0(G) + \sigma(G,J_0F)\Phi_0(J_0F)W_0(G)\big)\]
    and
    \begin{equation*}
        \begin{split}
        [N_\hbar^{J_0}(F),W_\hbar(&G)]= \frac{1}{2}[\Phi_\hbar(F)^2 + \Phi_\hbar(J_0F)^2,W_\hbar(G)]\\
        &\begin{split}=\frac{1}{2}\Big([\Phi_\hbar(F),W_\hbar(G)]&\Phi_\hbar(F) + \Phi_\hbar(F)[\Phi_\hbar(F),W_\hbar(G)]\\ &+ [\Phi_\hbar(J_0F),W_\hbar(G)]\Phi_\hbar(J_0F) + \Phi_\hbar(J_0F)[\Phi_\hbar(J_0F)),W_\hbar(G)]\Big)\end{split}\\
        &\begin{split}= \frac{\hbar}{2}\sigma(G,F)\Big(W_\hbar(G)&\Phi_\hbar(F) + \Phi_\hbar(F)W_\hbar(G)\Big)\\& + \frac{\hbar}{2}\sigma(G,J_0F)\Big(W_\hbar(G)\Phi_\hbar(J_0F) + \Phi_\hbar(J_0F)W_\hbar(G)\Big).
        \end{split}
        \end{split}
    \end{equation*}
        Then (2) of Prop. \ref{prop:lims} implies the result.

    \item We have
    \[\{N_0^{J_0}(F),\Phi_0(G)\} = \sigma(G,F)\Phi_0(F) + \sigma(G,J_0F)\Phi_0(J_0F)\]
    and
    \begin{equation*}
        \begin{split}
         [N_\hbar^{J_0}(F),\Phi_\hbar(G)]&=\frac{1}{2}[\Phi_\hbar(F)^2 + \Phi_\hbar(J_0F)^2,\Phi_\hbar(G)]\\
         &= i\hbar\Big(\sigma(F,G)\Phi_\hbar(F) + \sigma(J_0F,G)\Phi_\hbar(J_0F)\Big).
        \end{split}
    \end{equation*}
    Then Lemma \ref{lem:fields} implies the result.
    
    \item This follows from (2) and the linearity of $a_\hbar^{J_0}(G)$ with respect to the fields.
    
    \item We have
    \begin{align*}
        \{N_0^{J_0}(F),N_0^{J_0}(G)\} = \sigma(G,F)&\Phi_0(F)\Phi_0(G) + \sigma(J_0G,F)\Phi_0(F)\Phi_0(J_0G)\\ &+ \sigma(G,J_0F)\Phi_0(J_0F)\Phi_0(G) + \sigma(G,F)\Phi_0(J_0F)\Phi_0(J_0G)
    \end{align*}
    and
    \begin{equation*}
        \begin{split} [N_\hbar^{J_0}(F),N_\hbar^{J_0}(G)] &= \frac{1}{4}[\Phi_\hbar(F)^2 + \Phi_\hbar(J_0F)^2,\Phi_\hbar(G)^2 + \Phi_\hbar(J_0G)^2]\\
            &\begin{split}     =\frac{i\hbar}{2}&\sigma(F,G)\Big(\Phi_\hbar(F)\Phi_\hbar(G) + \Phi_\hbar(G)\Phi_\hbar(F)\Big)\\
            &+ \frac{i\hbar}{2}\sigma(F,J_0G)\Big(\Phi_\hbar(F)\Phi_\hbar(J_0G) + \Phi_\hbar(J_0G)\Phi_\hbar(F)\Big)\\ &\ +\frac{i\hbar}{2}\sigma(J_0F,G)\Big(\Phi_\hbar(J_0F)\Phi_\hbar(G) + \Phi_\hbar(G)\Phi_\hbar(J_0F)\Big)\\
            &\ \  +\frac{i\hbar}{2}\sigma(F,G)\Big(\Phi_\hbar(J_0F)\Phi_\hbar(J_0G) + \Phi_\hbar(J_0G)\Phi_\hbar(J_0F)\Big).
            \end{split}
        \end{split}
    \end{equation*}
    Then (4) of Prop. \ref{prop:lims} implies the result.\qedhere
\end{enumerate}
\end{proof}

These results are somewhat restricted.  For example, it is difficult to establish an analogue of von Neumann's condition for number operators because one encounters unbounded operators in the relevant differences.  Still, we take the foregoing to establish some sense in which $a_0^{J_0}(F)$, $(a_0^{J_0}(F))^*$, and $N_0^{J_0}(F)$ are classical limits of $a_\hbar^{J_0}(F)$, $(a_\hbar^{J_0}(F))^*$, and $N_\hbar^{J_0}(F)$.  We recognize, however, that it would be interesting to be able to strengthen the approximations involved in the classical limit for unbounded quantities.

\section{Classical Limit for the Klein-Gordon Field}
\label{sec:KG}

We now analyze the classical limits of number operators and Hamiltonians in the model of a real scalar field $\varphi$ on Minkowski spacetime $\mathbb{R}^4$ satisfying the Klein-Gordon equation
\[\Big(\frac{\partial^2}{\partial t^2} - \nabla^2\Big)\varphi = - m^2\varphi,\]
where $\nabla^2$ is the spatial Laplacian and $m> 0$.  We work with initial data on $\mathbb{R}^3$, defining $E:= C_c^\infty(\mathbb{R}^3)\oplus C_c^\infty(\mathbb{R}^3)$ as the space of pairs of test functions with the symplectic form
\[\sigma((f_1,g_1),(f_2,g_2)) : = \int_{\mathbb{R}^3} f_1g_2 - f_2g_1\]
for all $f_1,f_2,g_1,g_2\in C_c^\infty(\mathbb{R}^3)$.  The phase space $E'$ will be a topological dual to $E$ in some vector space topology such that $C^\infty(\mathbb{R}^3)\oplus C^\infty(\mathbb{R}^3)\subseteq E'$.  The space $E$ consists in pairs $(\pi,\varphi)$ of (possibly distributional) field configurations $\varphi$ and conjugate momenta $\pi:=\frac{\partial\varphi}{\partial t}$.  We will analyze the classical limits of two classes of number operators associated with the scalar field: ``Minkowski" number operators associated with an inertial observer and ``Rindler" number operators associated with an accelerating observer on the right Rindler wedge.

\subsection{Minkowski Number Operators}

To define the Minkowski number operators, we must specify a choice of complex structure.  To do so, we define an operator $\mu_M: C_c^\infty(\mathbb{R}^3)\to C_c^\infty(\mathbb{R}^3)$ by
\[\mu_M:= \big(m^2-\nabla^2\big)^{1/2}.\]
This operator $\mu_M$ is self-adjoint and a bijection (This follows, e.g., from Thm. IX.27, p. 54 of \citeauthor{ReSi75}\cite{ReSi75}).  We define a complex structure $J_M: E\to E$ by
\[J_M(f,g) := (-\mu_M^{-1}g,\mu_M f)\]
for all $f,g\in C_c^\infty(\mathbb{R}^3)$.  $J_M$ is the unique complex structure compatible with time translations with respect to the inertial timelike symmetries of Minkowski spacetime; see \citeauthor{Ka79}.\citep{Ka79,Ka85}  We define the Minkowski number operators $N^M_\hbar(F)$ as the number operators corresponding to this choice of complex structure, i.e., $N^M_\hbar(F):= N^{J_M}_\hbar(F)$ for each $F\in E$.  We use similar notation for $a^M_\hbar(F)$. Explicitly, we have
\begin{align*}
a^M_\hbar(F) &:= \Phi_\hbar(F) + i\Phi_\hbar(J_MF)\\
N_\hbar^M(F) &:= (a^M_\hbar(F))^*a^M_\hbar(F).
\end{align*}

\noindent These number operators are the usual ones defined in the Fock space representation of the Weyl algebra, when the inertial timelike symmetries of Minkowski spacetime are used in the frequency splitting procedure for ``second quantization". \citep{Wa94}

The results of the previous section establish a sense in which $N^M_0(F)$ is the classical limit of $N_\hbar^M(F)$.  We now analyze the contents of $N^M_0(F)$ in the classical field theory.  

We can use this setup to analyze $N_0^M(F)$ as a function on $E'$.  Recall that $\mathcal{W}(E,0)$ is *-isomorphic to the algebra $AP(E')$ of $\sigma(E',E)$-continuous almost periodic functions on $E'$.  In this setting, given test functions $(f,g)\in E$, the classical Weyl unitaries and fields have the form
\begin{align*}
    W_0(f,g)(\pi,\varphi) &= \exp\bigg\{i\int_{\mathbb{R}^3}(\pi f + \varphi g)\bigg\}\\
    \Phi_0(f,g)(\pi,\varphi)&= \int_{\mathbb{R}^3}(\pi f + \varphi g)
\end{align*}
for all field configurations and conjugate momenta $(\pi,\varphi)\in C^\infty(\mathbb{R}^3)\oplus C^\infty(\mathbb{R}^3)\subseteq E'$.   This also immediately determines the form of $N_0^M(f,g)$.

\begin{prop}
For any $(f,g)\in E$,
\[N_0^M(f,g)(\pi,\varphi) = \frac{1}{2}\bigg(\int_{\mathbb{R}^3}\pi f + \varphi g\bigg)^2 + \frac{1}{2}\bigg(\int_{\mathbb{R}^3}\varphi (\mu_M f) - \pi (\mu_M^{-1}g)\bigg)^2\]
for all $(\pi,\varphi)\in C^\infty(\mathbb{R}^3)\oplus C^\infty(\mathbb{R}^3)\subseteq E'$.
\end{prop}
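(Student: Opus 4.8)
The plan is to reduce the classical number operator to a sum of squares of field functions and then evaluate it at a point of $E'$ using the commutativity of $\mathcal{W}(E,0)\cong AP(E')$. The computational content is light once this reduction is made; the care lies in the interpretation of the abstract algebraic elements as honest functions on phase space.

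First I would expand the number operator algebraically. From the definitions of $a_\hbar^M(F)$ and $(a_\hbar^M(F))^*$ together with the self-adjointness of the field operators $\Phi_\hbar$, one has
\[N_\hbar^M(F) = \frac{1}{2}\Big(\Phi_\hbar(F)^2 + \Phi_\hbar(J_M F)^2 + i[\Phi_\hbar(F),\Phi_\hbar(J_M F)]\Big),\]
exactly as recorded in the proof that $N_0^M(F)$ is the classical limit of $N_\hbar^M(F)$. Setting $\hbar = 0$ and applying Lemma \ref{lem:comm} with $n=1$, the commutator $[\Phi_\hbar(F),\Phi_\hbar(J_M F)] = i\hbar\sigma(F,J_M F)I$ is proportional to $\hbar$ and hence vanishes, so the cross term drops out and
\[N_0^M(F) = \frac{1}{2}\Big(\Phi_0(F)^2 + \Phi_0(J_M F)^2\Big).\]

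Next I would convert this abstract element of $\tilde{\mathfrak{A}}_0$ into a function on $E'$. Since $\mathfrak{A}_0$ is commutative and identified with $AP(E')$, evaluation at a point $(\pi,\varphi)\in E'$ is a character, i.e.\ a multiplicative linear functional, which extends to the completion so that the value of a square is the square of the value. Hence
\[N_0^M(F)(\pi,\varphi) = \frac{1}{2}\big(\Phi_0(F)(\pi,\varphi)\big)^2 + \frac{1}{2}\big(\Phi_0(J_M F)(\pi,\varphi)\big)^2.\]
Writing $F = (f,g)$ and substituting the established formula $\Phi_0(f,g)(\pi,\varphi) = \int_{\mathbb{R}^3}(\pi f + \varphi g)$ yields the first term immediately. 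For the second term I would compute $J_M F = (-\mu_M^{-1}g,\mu_M f)$ from the definition of $J_M$ and substitute, obtaining $\Phi_0(J_M F)(\pi,\varphi) = \int_{\mathbb{R}^3}(\varphi(\mu_M f) - \pi(\mu_M^{-1}g))$, which is precisely the integrand appearing in the statement; squaring and halving then completes the identity.

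The main point requiring care — rather than a genuine obstacle — is the well-definedness underlying the evaluation step: one needs $J_M F\in E$ so that $\Phi_0(J_M F)$ is defined, which rests on $\mu_M$ and $\mu_M^{-1}$ preserving $C_c^\infty(\mathbb{R}^3)$, and one needs the evaluation functional at a smooth $(\pi,\varphi)$ to extend multiplicatively to the unbounded elements $\Phi_0(F)^2$ living in the weak* completion. The relevant integrals converge because $f,g$, and hence $\mu_M f$ and $\mu_M^{-1}g$, are compactly supported while $\pi,\varphi$ are smooth. I note that no integration by parts or self-adjointness of $\mu_M$ is needed here, since the target already presents the second term in exactly the form delivered by $\Phi_0(J_M F)$.
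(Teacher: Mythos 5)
Your proof is correct and is essentially the paper's own argument: the paper states this proposition without proof, treating it as immediate from the phase-space formula for $\Phi_0(f,g)$ once $N_0^M(F)$ is expanded (commutatively, at $\hbar=0$) as $\tfrac{1}{2}\big(\Phi_0(F)^2+\Phi_0(J_MF)^2\big)$ and $J_M(f,g)=(-\mu_M^{-1}g,\mu_M f)$ is substituted. Your write-up simply makes explicit the steps the paper leaves implicit, including the appropriate care about evaluation functionals extending multiplicatively to the weak* completion.
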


Furthermore, we can construct the \emph{classical Minkowski total number operator} $\overline{N}_0^M$ by letting $\{F_k\}$ be any $\alpha_{J_M}$-orthonormal basis for $E$ and defining
\begin{align*}
    \overline{N}_0^M:=\sum_k N_0^M(F_k).
\end{align*}
The following proposition provides an explicit form for the total number operator as a real-valued function on the phase space.  This establishes that the definition of the total number operator is independent of the chosen basis, which holds similarly for all total number operators and total Hamiltonians in the following sections.

\begin{prop}
\label{prop:Mnumber}
\begin{align*}
    \overline{N}_0^M(\pi,\varphi) = \frac{1}{2}\int_{\mathbb{R}^3}\pi(\mu_M^{-1}\pi) + \varphi(\mu_M\varphi)
\end{align*}
for any $(\pi,\varphi)\in C_c^\infty(\mathbb{R}^3)\oplus C_c^\infty(\mathbb{R}^3)\subseteq E'$.
\end{prop}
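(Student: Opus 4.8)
The plan is to recognize each summand $N_0^M(F_k)$ as half of the squared modulus of a single complex coefficient in an $\alpha_{J_M}$-orthonormal expansion, and then to collapse the sum by Parseval's identity. First I would rewrite the number operator in the commutative algebra $\mathcal{W}(E,0)$ as $N_0^M(F) = \frac{1}{2}\big(\Phi_0(F)^2 + \Phi_0(J_M F)^2\big)$, which follows from its definition $N_0^M(F)=(a_0^M(F))^*a_0^M(F)$ once one notes that the cross terms cancel because $\Phi_0(F)$ and $\Phi_0(J_MF)$ commute at $\hbar=0$; evaluated on $(\pi,\varphi)$ this reproduces the explicit integrals of the preceding proposition. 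Because the statement restricts to compactly supported data, I would represent the phase-space point $(\pi,\varphi)$ by the genuine element $P := (\varphi,-\pi)\in E$, so that $\Phi_0(F)(\pi,\varphi) = \sigma(P,F)$ and likewise $\Phi_0(J_M F)(\pi,\varphi) = \sigma(P, J_M F)$ for all $F\in E$.

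The key observation is then that, with $F=F_k$,
\[
N_0^M(F_k)(\pi,\varphi) = \tfrac{1}{2}\big(\sigma(P,F_k)^2 + \sigma(P,J_M F_k)^2\big) = \tfrac{1}{2}\abs{\alpha_{J_M}(P,F_k)}^2,
\]
using $\alpha_{J_M}(P,F) = \sigma(P,J_M F) + i\sigma(P,F)$ straight from the definition of the inner product. Summing over $k$ and invoking Parseval's identity for the $\alpha_{J_M}$-orthonormal basis $\{F_k\}$ gives $\overline{N}_0^M(\pi,\varphi) = \tfrac12\,\alpha_{J_M}(P,P)$. I would finish by computing this norm explicitly: since $\sigma(P,P)=0$ we have $\alpha_{J_M}(P,P) = \sigma(P,J_M P)$, and substituting $J_M P = J_M(\varphi,-\pi) = (\mu_M^{-1}\pi,\mu_M\varphi)$ together with the formula for $\sigma$ yields $\sigma(P,J_MP) = \int_{\mathbb{R}^3}\big(\varphi(\mu_M\varphi) + \pi(\mu_M^{-1}\pi)\big)$, which is the desired expression. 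As a byproduct, exhibiting the answer as a basis-independent norm makes the claimed independence of the chosen basis manifest.

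The main obstacle is justifying the use of Parseval's identity, since $E = C_c^\infty(\mathbb{R}^3)\oplus C_c^\infty(\mathbb{R}^3)$ is only a pre-Hilbert space under $\alpha_{J_M}$ and is not complete. I would address this by interpreting $\{F_k\}$ as an orthonormal basis of the completion (the one-particle Hilbert space) and by noting that the restriction to compactly supported $(\pi,\varphi)$ is precisely what guarantees $P\in E$, so that $P$ lies in that Hilbert space and admits a convergent orthonormal expansion to which Parseval applies. One should also verify that the sesquilinearity conventions make $\alpha_{J_M}$ genuinely Hermitian — concretely that $\alpha_{J_M}(J_MF,G) = -i\,\alpha_{J_M}(F,G)$ — which is a short consequence of the compatibility conditions (i)--(iii) on $J_M$; everything else is routine pointwise-product bookkeeping in the final integral, requiring no appeal to the self-adjointness of $\mu_M$ beyond the trivial symmetry of $\int \pi(\mu_M^{-1}\pi)$.
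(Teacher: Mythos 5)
Your proposal is correct and follows essentially the same route as the paper's proof: both identify $N_0^M(F_k)(\pi,\varphi)=\tfrac{1}{2}\abs{\alpha_{J_M}((\varphi,-\pi),F_k)}^2$ and then apply Parseval's identity (the paper's ``Pythagorean theorem'') in the Hilbert space completion of $(E,\alpha_{J_M})$, evaluating $\tfrac{1}{2}\alpha_{J_M}((\varphi,-\pi),(\varphi,-\pi))$ to get the stated integral. Your write-up simply makes explicit two steps the paper leaves implicit --- the reduction $N_0^M(F)=\tfrac12\big(\Phi_0(F)^2+\Phi_0(J_MF)^2\big)$ with $\Phi_0(F)(\pi,\varphi)=\sigma((\varphi,-\pi),F)$, and the completeness caveat for invoking Parseval.
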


\begin{proof}
The Pythagorean theorem for the Hilbert space completion of $E$ with inner product $\alpha_{J_M}$ implies that
\begin{align*}
    \overline{N}_0^M(\pi,\varphi) &= \sum_k N_0^M(F_k)(\pi,\varphi)\\ 
    &= \frac{1}{2}\sum_k\abs{\alpha_{J_M}\big((\varphi,-\pi),F_k\big)}^2\\
    &=\frac{1}{2}\alpha_{J_M}\big((\varphi,-\pi),(\varphi,-\pi)\big)\\
    &=\frac{1}{2}\int_{\mathbb{R}^3}\pi(\mu_M^{-1}\pi) + \varphi(\mu_M\varphi).  \qedhere
\end{align*}
\end{proof}

Finally, we can construct the \emph{classical Minkowski Hamiltonian} $H_0^M$.  Let $\{f_k\}$ be any orthonormal basis for $L^2(\mathbb{R}^3,\mathbb{R})$ and define
\[H^M_0: = \sum_k N_0^M(f_k,0) = \sum_k N_0^M(0,\mu_M f_k)\]
Notice that we take the sum over only one test function component and that we use an orthonormal basis with respect to the $L^2$ inner product rather than the inner product $\alpha_{J_M}$.  With this definition, the classical limit of the Minkowski Hamiltonian takes a familiar form as a real-valued function on phase space.

\begin{prop}
\label{prop:Menergy}
\[H_0^M(\pi,\varphi) =\frac{1}{2} \int_{\mathbb{R}^3} \pi^2 + m^2\varphi^2 + (\nabla \varphi)^2\]
for any $(\pi,\varphi)\in C_c^\infty(\mathbb{R}^3)\oplus C_c^\infty(\mathbb{R}^3)\subseteq E'$.
\end{prop}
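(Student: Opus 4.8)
The plan is to insert the explicit formula for the individual number operators $N_0^M(f_k,0)$ into the defining sum for $H_0^M$, recognize each summand as a squared $L^2$ inner product against the basis vector $f_k$, and then collapse the sum using completeness (Parseval's identity). This simultaneously evaluates the Hamiltonian and confirms that the result is independent of the chosen orthonormal basis.

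First I would apply the preceding proposition for $N_0^M(f,g)$ with $g=0$ to obtain
\[
N_0^M(f_k,0)(\pi,\varphi) = \frac{1}{2}\bigg(\int_{\mathbb{R}^3}\pi f_k\bigg)^2 + \frac{1}{2}\bigg(\int_{\mathbb{R}^3}\varphi(\mu_M f_k)\bigg)^2.
\]
Writing these integrals as real $L^2(\mathbb{R}^3,\mathbb{R})$ inner products, the first factor is $\inner{\pi}{f_k}$. For the second, since $\varphi\in C_c^\infty(\mathbb{R}^3)$ lies in the domain of the self-adjoint operator $\mu_M$, I would move $\mu_M$ onto $\varphi$ and write $\int_{\mathbb{R}^3}\varphi(\mu_M f_k) = \inner{\mu_M\varphi}{f_k}$, with $\mu_M\varphi\in L^2(\mathbb{R}^3,\mathbb{R})$.

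Next I would sum over $k$ and invoke completeness of $\{f_k\}$. Parseval's identity gives
\[
\sum_k\abs{\inner{\pi}{f_k}}^2 = \norm{\pi}_{L^2}^2 = \int_{\mathbb{R}^3}\pi^2, \qquad \sum_k\abs{\inner{\mu_M\varphi}{f_k}}^2 = \norm{\mu_M\varphi}_{L^2}^2,
\]
so that $H_0^M(\pi,\varphi) = \tfrac{1}{2}\int_{\mathbb{R}^3}\pi^2 + \tfrac{1}{2}\norm{\mu_M\varphi}_{L^2}^2$. It then remains to evaluate the second term: using $\mu_M^2 = m^2 - \nabla^2$ together with self-adjointness,
\[
\norm{\mu_M\varphi}_{L^2}^2 = \inner{\varphi}{\mu_M^2\varphi} = \int_{\mathbb{R}^3}m^2\varphi^2 - \varphi\nabla^2\varphi,
\]
and a single integration by parts (boundary terms vanishing by compact support of $\varphi$) turns $-\int_{\mathbb{R}^3}\varphi\nabla^2\varphi$ into $\int_{\mathbb{R}^3}(\nabla\varphi)^2$. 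Combining these yields the claimed expression.

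The main obstacle is not the algebra but the justification of transferring $\mu_M$ across the inner product in the second factor: a generic basis element $f_k$ need not lie in the domain of $\mu_M$, so the integral $\int_{\mathbb{R}^3}\varphi(\mu_M f_k)$ must be read as the pairing $\inner{\mu_M\varphi}{f_k}$ supplied by self-adjointness, which is legitimate precisely because $\varphi\in C_c^\infty(\mathbb{R}^3)$ is in the domain of $\mu_M$. Once this interpretation is fixed, both applications of Parseval are to genuine $L^2$ vectors, namely $\pi$ and $\mu_M\varphi$, and the basis-independence of $H_0^M$ asserted in the preceding remark follows at once.
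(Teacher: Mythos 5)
Your proof is correct and follows essentially the same route as the paper's: the Pythagorean/Parseval identity for the real Hilbert space $L^2(\mathbb{R}^3,\mathbb{R})$, self-adjointness of $\mu_M$ to move it onto $\varphi$, the identity $\mu_M^2 = m^2 - \nabla^2$, and a final integration by parts (divergence theorem) to convert $-\int\varphi\nabla^2\varphi$ into $\int(\nabla\varphi)^2$. Your added remark on reading $\int_{\mathbb{R}^3}\varphi(\mu_M f_k)$ as the pairing $\inner{\mu_M\varphi}{f_k}$ is a sensible clarification of a point the paper passes over silently, but it does not change the argument.
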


\begin{proof}
It follows from the Pythagorean theorem for real Hilbert spaces and the self-adjointness of $\mu_M$ that
\begin{align*}
    H_0^M(\pi,\varphi) &= \sum_k \frac{1}{2}\bigg(\int_{\mathbb{R}^3}\pi f_k\bigg)^2 + \bigg(\int_{\mathbb{R}^3}\varphi (\mu_M f_k)\bigg)^2\\
    &= \sum_k\frac{1}{2}\bigg(\int_{\mathbb{R}^3}\pi f_k\bigg)^2 + \bigg(\int_{\mathbb{R}^3} f_k (\mu_m\varphi)\bigg)^2\\
    &= \frac{1}{2}\int_{\mathbb{R}^3}\pi^2 + (\mu_M\varphi)^2\\
    &= \frac{1}{2}\int_{\mathbb{R}^3}\pi^2 + \varphi(\mu_M^2\varphi)\\
    &= \frac{1}{2}\int_{\mathbb{R}^3}\pi^2 + m^2\varphi^2 - \varphi(\nabla^2\varphi)\\
    &= \frac{1}{2}\int_{\mathbb{R}^3}\pi^2 + m^2\varphi^2 + (\nabla\varphi)^2.
\end{align*}
The final line is implied by the divergence theorem because
\begin{align*}
    \text{div}(\varphi\nabla\varphi) &= \varphi\nabla^2\varphi + (\nabla\varphi)^2. \qedhere
\end{align*}
\end{proof}

This shows that the classical limit of the Minkowski total number operator is equal to the classical total energy of the Klein-Gordon field for initial data of compact support.  This, of course, is the conserved quantity of the Klein-Gordon field corresponding to the inertial timelike symmetries of Minkowski spacetime.

\subsection{Rindler Number Operators}

To define the Rindler number operators, we specify a different choice of complex structure.  We work on the right Rindler wedge $\{(t,x,y,z)\in \mathbb{R}^4\ |\ x > \abs{t}\}$, and so we restrict attention to initial data with support in $R:= \{(x,y,z)\in\mathbb{R}^3\ |\ x>0\}$, and we restrict attention to pairs of test functions in $\overset{\triangleleft}{E}:= C_c^\infty(R)\oplus C_c^\infty(R)$.  For comparison with \citet{Ka85}, we work with functions of the form $e^x f$, and hence we identify each $f\in C^\infty(R)$ with $\overset{\triangleleft}f:= e^x f$.  We proceed as in the previous section by first defining an operator $\mu_R: C_c^\infty(R)\to C_c^\infty(R)$ by
\[\mu_R:= \bigg(e^{2x}\Big(m^2-\frac{\partial}{\partial y^2} - \frac{\partial}{\partial z^2}\Big) - \frac{\partial}{\partial x^2}\bigg)^{1/2}.\]
\citet[][p. 72]{Ka85} establishes that $\mu_R$ is positive and essentially self-adjoint on $\overset{\triangleleft}{E}$.  As in the previous section, we define a complex structure $J_R:\overset{\triangleleft}{E}\to \overset{\triangleleft}{E}$ by
\[J_R(f,\overset{\triangleleft}{g}):=(-\mu_R^{-1}\overset{\triangleleft}{g},\mu_R f)\]
for all $f,\overset{\triangleleft}{g}\in C_c^\infty(R)$.  $J_R$ is the unique complex structure compatible with time translatiosn with respect to the Lorentz boost timelike symmetries of Minkowski spacetime; see \citeauthor{Ka79}.\citep{Ka79,Ka85}  We define the Rindler number operators as the number operators corresponding to this choice of complex structure, i.e., $N_\hbar^R(f,\overset{\triangleleft}{g}):= N_\hbar^{J_R}(f,\overset{\triangleleft}{g})$ for each $(f,\overset{\triangleleft}{g})\in \overset{\triangleleft}{E}$.  We use a similar notation for $a_\hbar^R(f,\overset{\triangleleft}{g})$.  Explicitly, we have
\begin{align*}
    a_\hbar^R(F) &:= \Phi_\hbar(F) + i\Phi_\hbar(J_RF)\\
    N_\hbar^R(F) &:= (a_\hbar^R(F))^*a_\hbar^R(F).
\end{align*}
These number operators correspond to those in the Fock space determined by the one-particle structure for an observer in uniform acceleration, for whom Rindler coordinates on the right Rindler wedge form a natural reference frame for dynamics as described in \S4 of  \citet{Ka85} (see also \citet{KaWa91}).

The results of the previous section establish a sense in which $N_0^R(F)$ is the classical limit of $N_\hbar^R(F)$.  We now analyze the contents of $N^R_0(F)$ in the classical field theory.  That is, we again analyze $N_0^R(F)$ as a function on $\overset{\triangleleft}{E'}$ in the representation of $\mathcal{W}(\overset{\triangleleft}{E},0)$ as $AP(\overset{\triangleleft}{E'})$.  As before, the representation immediately determines the form of $N_0^R(f,\overset{\triangleleft}{g})$.

\begin{prop}
For any $(f,\overset{\triangleleft}{g})\in \overset{\triangleleft}{E}$,
\[N_0^R(f,\overset{\triangleleft}{g})(\overset{\triangleleft}{\pi},\varphi) = \frac{1}{2}\bigg(\int_{R}\overset{\triangleleft}{\pi} f + \varphi \overset{\triangleleft}{g}\bigg)^2 + \frac{1}{2}\bigg(\int_{R}\varphi (\mu_R f) - \overset{\triangleleft}{\pi} (\mu_R^{-1}\overset{\triangleleft}{g})\bigg)^2\]
for all $(\overset{\triangleleft}{\pi},\varphi)\in C^\infty(R)\oplus C^\infty(R)\subseteq \overset{\triangleleft}{E'}$.
\end{prop}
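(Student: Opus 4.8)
The plan is to follow the computation of the Minkowski case verbatim, since $N_0^R$ is defined from $J_R$ in exactly the way $N_0^M$ is defined from $J_M$. First I would recall that at $\hbar = 0$ the algebra $\mathcal{W}(\overset{\triangleleft}{E},0)$ is commutative and identified with $AP(\overset{\triangleleft}{E'})$, so its product is pointwise multiplication of functions on $\overset{\triangleleft}{E'}$ and its involution is pointwise complex conjugation. Applying the definition $N_0^R(F) = (a_0^R(F))^* a_0^R(F)$ together with the normalized definition $a_0^R(F) = \tfrac{1}{\sqrt{2}}(\Phi_0(F) + i\Phi_0(J_R F))$, and using that each classical field $\Phi_0(F)$ is a real-valued function, the cross terms cancel and one obtains the pointwise identity
\[
N_0^R(F)(\overset{\triangleleft}{\pi},\varphi) = \tfrac{1}{2}\Big(\Phi_0(F)(\overset{\triangleleft}{\pi},\varphi)^2 + \Phi_0(J_R F)(\overset{\triangleleft}{\pi},\varphi)^2\Big).
\]

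The second step is to substitute the explicit forms of the two classical fields. Writing $F = (f,\overset{\triangleleft}{g})$, the representation formula $\Phi_0(f,\overset{\triangleleft}{g})(\overset{\triangleleft}{\pi},\varphi) = \int_R (\overset{\triangleleft}{\pi} f + \varphi \overset{\triangleleft}{g})$ gives the first summand directly. For the second summand I would apply the definition $J_R(f,\overset{\triangleleft}{g}) = (-\mu_R^{-1}\overset{\triangleleft}{g}, \mu_R f)$ and feed the components into the same representation formula, so that the first slot $-\mu_R^{-1}\overset{\triangleleft}{g}$ pairs with $\overset{\triangleleft}{\pi}$ and the second slot $\mu_R f$ pairs with $\varphi$, yielding $\Phi_0(J_R F)(\overset{\triangleleft}{\pi},\varphi) = \int_R(\varphi(\mu_R f) - \overset{\triangleleft}{\pi}(\mu_R^{-1}\overset{\triangleleft}{g}))$. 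Inserting both expressions into the displayed identity reproduces the claimed formula.

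Since the argument is a pure substitution, I do not expect a serious obstacle; the only points requiring care are bookkeeping ones. I would verify that $\mu_R^{-1}\overset{\triangleleft}{g}$ and $\mu_R f$ remain in the test-function space $C_c^\infty(R)$, so that $J_R F \in \overset{\triangleleft}{E}$ and the representation formula for $\Phi_0$ indeed applies to it; this is precisely what is guaranteed by Kay's result that $\mu_R$ is positive and essentially self-adjoint on $\overset{\triangleleft}{E}$. I would also confirm that the overall factor $\tfrac{1}{2}$ traces back to the $\tfrac{1}{\sqrt{2}}$ normalization in the general definition of $a_0^{J_0}$, which fixes the constant in front of both squares.
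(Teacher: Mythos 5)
Your proof is correct and takes the same route as the paper: the paper states this proposition without a separate proof, regarding it as immediate from the representation of $\Phi_0$ as a function on $\overset{\triangleleft}{E'}$, which is exactly the substitution argument you spell out (commutativity at $\hbar=0$ kills the cross terms, then insert $J_R(f,\overset{\triangleleft}{g}) = (-\mu_R^{-1}\overset{\triangleleft}{g},\mu_R f)$ into the formula for $\Phi_0$). Your closing remark is also a correct reading of the paper's intent: the factor $\tfrac{1}{2}$ comes from the $\tfrac{1}{\sqrt{2}}$-normalized definition of $a_0^{J_0}$ in the general section, which the unnormalized expressions written in the Klein--Gordon/Rindler sections silently presuppose.
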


Furthermore, we can construct the \emph{classical Rindler total number operator} $\overline{N}_0^R$ by letting $\{F_k\}$ be any $\alpha_{J_R}$-orthonormal basis for $\overset{\triangleleft}{E}$ and defining
\begin{align*}
    \overline{N}_0^R := \sum_k N_0^R(F_k).
\end{align*}
The following proposition provides an explicit form for the Rindler total number operator as a real-valued function on phase space.

\begin{prop}
\begin{align*}
\overline{N}_0^R(\overset{\triangleleft}{\pi},\varphi) = \frac{1}{2} \int_{R} \overset{\triangleleft}{\pi}(\mu_R^{-1}\overset{\triangleleft}{\pi}) + \varphi(\mu_R \varphi)
\end{align*}
for all $(\overset{\triangleleft}{\pi},\varphi)\in C_c^\infty(R)\oplus C_c^\infty(R)\subseteq \overset{\triangleleft}{E'}$.
\end{prop}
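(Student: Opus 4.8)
The plan is to mirror the proof of Prop.~\ref{prop:Mnumber}, replacing the Minkowski data $(\mu_M,J_M,\alpha_{J_M})$ with the Rindler data $(\mu_R,J_R,\alpha_{J_R})$ throughout. The essential observation is that the explicit form of a single Rindler number operator given in the preceding proposition can be rewritten as a squared $\alpha_{J_R}$-norm. Writing $F=(f,\overset{\triangleleft}{g})$ and using $\sigma((\varphi,-\overset{\triangleleft}{\pi}),F)=\int_R(\overset{\triangleleft}{\pi}f+\varphi\overset{\triangleleft}{g})=\Phi_0(F)(\overset{\triangleleft}{\pi},\varphi)$ together with $\alpha_{J_R}(G,F)=\sigma(G,J_RF)+i\sigma(G,F)$, one finds
\[
N_0^R(F)(\overset{\triangleleft}{\pi},\varphi)=\tfrac{1}{2}\abs{\alpha_{J_R}\big((\varphi,-\overset{\triangleleft}{\pi}),F\big)}^2,
\]
since the real and imaginary parts of $\alpha_{J_R}((\varphi,-\overset{\triangleleft}{\pi}),F)$ reproduce exactly the two integrals appearing in the single-operator formula.

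First I would establish this rewriting by direct substitution of $J_R(f,\overset{\triangleleft}{g})=(-\mu_R^{-1}\overset{\triangleleft}{g},\mu_R f)$ into $\sigma(\cdot,J_R\cdot)$, checking that the real part equals $\int_R(\varphi(\mu_R f)-\overset{\triangleleft}{\pi}(\mu_R^{-1}\overset{\triangleleft}{g}))$ and the imaginary part equals $\int_R(\overset{\triangleleft}{\pi}f+\varphi\overset{\triangleleft}{g})$. Next, summing over the $\alpha_{J_R}$-orthonormal basis $\{F_k\}$ and applying the Pythagorean (Parseval) identity in the one-particle Hilbert space completion of $\overset{\triangleleft}{E}$ with inner product $\alpha_{J_R}$ gives
\[
\overline{N}_0^R(\overset{\triangleleft}{\pi},\varphi)=\tfrac{1}{2}\sum_k\abs{\alpha_{J_R}\big((\varphi,-\overset{\triangleleft}{\pi}),F_k\big)}^2=\tfrac{1}{2}\alpha_{J_R}\big((\varphi,-\overset{\triangleleft}{\pi}),(\varphi,-\overset{\triangleleft}{\pi})\big).
\]
Finally I would evaluate this diagonal value: since $\sigma(G,G)=0$ we have $\alpha_{J_R}(G,G)=\sigma(G,J_RG)$, and with $G=(\varphi,-\overset{\triangleleft}{\pi})$ one computes $J_RG=(\mu_R^{-1}\overset{\triangleleft}{\pi},\mu_R\varphi)$, so that $\sigma(G,J_RG)=\int_R(\varphi(\mu_R\varphi)+\overset{\triangleleft}{\pi}(\mu_R^{-1}\overset{\triangleleft}{\pi}))$, which yields the claimed expression.

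The main obstacle is justifying the Pythagorean step, i.e., that $(\varphi,-\overset{\triangleleft}{\pi})$ genuinely lies in the $\alpha_{J_R}$-completion of $\overset{\triangleleft}{E}$, so that Parseval converges to the full squared norm rather than to a strictly smaller value. This requires $\alpha_{J_R}((\varphi,-\overset{\triangleleft}{\pi}),(\varphi,-\overset{\triangleleft}{\pi}))=\int_R(\overset{\triangleleft}{\pi}(\mu_R^{-1}\overset{\triangleleft}{\pi})+\varphi(\mu_R\varphi))$ to be finite for compactly supported smooth data, which in turn needs $\overset{\triangleleft}{\pi}$ in the form domain of $\mu_R^{-1}$ and $\varphi$ in the form domain of $\mu_R$. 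Here the relevant input is Kay's result that $\mu_R$ is positive and essentially self-adjoint on $C_c^\infty(R)$, which guarantees that $\mu_R^{-1}$ is well-defined and that compactly supported smooth functions lie in the appropriate domains; the unbounded weight $e^{2x}$ appearing in $\mu_R$ must be handled with some care, but it does not cause divergence on compactly supported data. One also uses the symmetry of $\mu_R$ (and hence of $\mu_R^{-1}$) to present the final integrals in symmetric form, exactly as self-adjointness of $\mu_M$ was used in the Minkowski case. Unlike the Minkowski Hamiltonian, no further local simplification of $\mu_R^{\pm 1}$ is available, so the answer is naturally left in terms of $\mu_R$ and $\mu_R^{-1}$.
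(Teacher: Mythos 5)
Your proposal is correct and is exactly the argument the paper intends: the paper's own proof of this proposition just says it "follows from an analogous calculation to that in the proof of Prop.~\ref{prop:Mnumber}," and your rewriting of $N_0^R(F)(\overset{\triangleleft}{\pi},\varphi)$ as $\tfrac{1}{2}\abs{\alpha_{J_R}\big((\varphi,-\overset{\triangleleft}{\pi}),F\big)}^2$, followed by the Pythagorean identity in the $\alpha_{J_R}$-completion and the diagonal evaluation $\alpha_{J_R}(G,G)=\sigma(G,J_RG)$ with $G=(\varphi,-\overset{\triangleleft}{\pi})$, is precisely that calculation. Your worry about the Parseval step is resolved more simply than you suggest: for compactly supported smooth data, $(\varphi,-\overset{\triangleleft}{\pi})$ lies in $\overset{\triangleleft}{E}$ itself, not merely in its completion, so no form-domain considerations are needed.
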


\begin{proof}
This follows from an analogous calculation to that in the proof of Prop. \ref{prop:Mnumber}.
\end{proof}

Finally, we can construct the \emph{classical Rindler Hamiltonian} $H_0^R$.  Let $\{f_k\}$ be any orthonormal basis for $L^2(R,\mathbb{R})$ and define
\[H_0^R:= \sum_k N_0^R(f_k,0) = \sum_k N_0^R(0,\mu_R f_k).\]
With this definition, we have the following explicit form of $H_0^R$ as a real-valued function on phase space.

\begin{prop}
\[H_0^R(\overset{\triangleleft}{\pi},\varphi) = \frac{1}{2}\int_R(\overset{\triangleleft}{\pi})^2 + \varphi(\mu_R^2\varphi)\]
for all $(\overset{\triangleleft}{\pi},\varphi)\in C_c^\infty(R)\oplus C_c^\infty(R)\subseteq \overset{\triangleleft}{E'}$.
\end{prop}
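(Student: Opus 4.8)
The plan is to mirror the proof of Prop. \ref{prop:Menergy}, substituting $\mu_R$ for $\mu_M$ and the Rindler data for the Minkowski data, and to stop one step earlier since there is no need to expand $\mu_R^2$ into its explicit differential form. First I would insert the explicit expression for $N_0^R(f_k,0)$ from the proposition above into the definition $H_0^R = \sum_k N_0^R(f_k,0)$. Setting $f = f_k$ and $\overset{\triangleleft}{g} = 0$ in that formula eliminates the cross terms and yields
\[H_0^R(\overset{\triangleleft}{\pi},\varphi) = \frac{1}{2}\sum_k\bigg(\int_R \overset{\triangleleft}{\pi}\, f_k\bigg)^2 + \frac{1}{2}\sum_k\bigg(\int_R \varphi\,(\mu_R f_k)\bigg)^2.\]

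The key point, exactly as in the Minkowski case, is that $\{f_k\}$ is orthonormal with respect to the $L^2(R,\mathbb{R})$ inner product rather than $\alpha_{J_R}$, so each integral is a genuine real $L^2$ pairing. I would read $\int_R \overset{\triangleleft}{\pi}\, f_k$ as $\inner{\overset{\triangleleft}{\pi}}{f_k}$ in $L^2(R,\mathbb{R})$ and apply the Pythagorean theorem (Parseval's identity) for that real Hilbert space to collapse the first sum to $\int_R (\overset{\triangleleft}{\pi})^2$. For the second sum I would use the self-adjointness of $\mu_R$ to rewrite $\int_R \varphi\,(\mu_R f_k) = \int_R (\mu_R\varphi)\, f_k = \inner{\mu_R\varphi}{f_k}$, a move that is legitimate because $\varphi$ and each $f_k$ vanish identically near the edge $x=0$ of the open wedge $R$, so no boundary contributions arise. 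A second application of Parseval then collapses the second sum to $\int_R (\mu_R\varphi)^2$, and a final use of self-adjointness rewrites this as $\int_R \varphi\,(\mu_R^2\varphi)$, producing the claimed expression.

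The one point requiring care, which I expect to be the only real obstacle, is the self-adjointness manipulation together with the applicability of Parseval to the second sum. Since $\{f_k\}$ is merely an $L^2$ basis, the basis elements need not lie in the domain of $\mu_R$, so $\mu_R f_k$ is not literally meaningful in general; the resolution is to move $\mu_R$ onto the test function $\varphi$—which genuinely lies in the domain and for which $\mu_R\varphi$ is again a compactly supported test function, hence square-integrable—\emph{before} summing, so that only the well-defined quantities $\inner{\mu_R\varphi}{f_k}$ enter the Parseval step. The essential self-adjointness of $\mu_R$ on $\overset{\triangleleft}{E}$ established above justifies this transfer without boundary terms. Unlike the Minkowski computation, I would not expand $\mu_R^2$ into its explicit second-order differential form and integrate by parts, since the result is no simpler; stopping at $\frac{1}{2}\int_R (\overset{\triangleleft}{\pi})^2 + \varphi\,(\mu_R^2\varphi)$ already matches the statement.
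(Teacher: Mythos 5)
Your proposal is correct and is essentially the paper's own argument: the paper's proof simply says the result ``follows from an analogous calculation to that in the proof of Prop.~\ref{prop:Menergy} using the self-adjointness of $\mu_R$,'' and your write-up carries out exactly that calculation---Parseval for the real $L^2(R,\mathbb{R})$ basis, transfer of $\mu_R$ onto $\varphi$ by self-adjointness, Parseval again, stopping before any expansion of $\mu_R^2$. Your added care about applying self-adjointness before Parseval (so that only $\inner{\mu_R\varphi}{f_k}$ enters the sum) is a sensible refinement of a point the paper leaves implicit.
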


\begin{proof}
This follows from an analogous calculation to that in the proof of Prop. \ref{prop:Menergy} using the self-adjointness of $\mu_R$.
\end{proof}

This expression is the Rindler energy, which is the conserved quantity of the Klein-Gordon field associated with the timelike Lorentz boost symmetries of $R$, i.e., time translations in Rindler coordinates.\citep{Ka85}  Thus, the previous proposition shows that the classical limit of the Rindler total number operator is the Rindler energy.

\section{Classical Limit for the Maxwell field}
\label{sec:em}

In this section, we analyze the classical limit of the Minkowski number operator and Hamiltonian for an electromagnetic field on Minkowski spacetime satisfying the source-free Maxwell equations.  As before, we work with initial data on a surface $\mathbb{R}^3$, on which we assume the electromagnetic field to be decomposed into an electric field vector $E$ with components $E^k$ (for $k=1,2,3$) in some fixed coordinate system and a magnetic (co)vector potential $A$ with components $A_j$ (for $j=1,2,3$) in the Coulomb gauge (satisfying $\text{div} (A) = 0$).  In this formulation, the source-free Maxwell equations take the form
\begin{align*}
\text{div}(E) &= 0\\
\Big(\frac{\partial^2}{\partial t^2} - \nabla^2\Big)A_j &= 0 \ \ \text{(for  $j=1,2,3$)}.
\end{align*}
Thus, each component $A_j$ satisfies the mass zero Klein-Gordon equation.  We take the test function space to be 
\[V:= \{(f,g)\in T^{0,1}_c(\mathbb{R}^3) \oplus T^{1,0}_c(\mathbb{R}^3)\ |\ \text{div}(f) = \text{div}(g) = 0 \}\]
where $T^{0,1}_c(\mathbb{R}^3)$ is the space of smooth, compactly supported covector fields $f$, and $T^{1,0}_c(\mathbb{R}^3)$ is the space of smooth, compactly supported vector fields $g$.  We define the symplectic form on $V$ by
\begin{align*}
    \sigma(({f},{g}),(\tilde{f},\tilde{g})):=\int_{\mathbb{R}^3} \sum_k{f_k}\tilde{g}^k - \tilde{f}_k{g^k}.
\end{align*}
The phase space $V'$ will be the topological dual to $V$ in some vector space topology, consisting of pairs $(E,A)$ of (possibly distributional) field configurations $A$ and conjugate momenta $E^k := -\delta^{kj}(\frac{\partial A}{\partial t})_j$ (the Euclidean metric tensor $\delta^{kj}$ is defined by $\delta^{kj} = 1$ if $k=j$ and 0 otherwise).

We will need the following lemma, which establishes that when $E$ and $A$ are smooth field configurations, they can always be chosen to be divergence free.

\begin{lemma}
Suppose $\rho\in V'$ has the form
\[\rho(f,g) = \int_{\mathbb{R}^3}\sum_k f_kE^k + g^kA_k\]
for every $(f,g)\in V$ for some smooth fields $(E,A)\in T_c^{1,0}(\mathbb{R}^3)\oplus T_c^{0,1}(\mathbb{R}^3)\subseteq V'$.  Then there is a unique pair $(\hat{E},\hat{A})\in T_c^{1,0}(\mathbb{R}^3)\oplus T_c^{0,1}(\mathbb{R}^3)\subseteq V'$ with  $\text{\emph{div}}(\hat{E}) = \text{\emph{div}}(\hat{A}) = 0$ such that
\begin{align*}
    \rho(f,g) = \int_{\mathbb{R}^3}\sum_kf_k\hat{E}^k+g^k\hat{A}_k.
\end{align*}
for all $(f,g)\in V$.
\end{lemma}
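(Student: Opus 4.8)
The plan is to exploit the Helmholtz--Hodge decomposition of fields on $\mathbb{R}^3$ together with the defining feature of $V$---that every test function is divergence free---so that the longitudinal part of a field is annihilated in the pairing. First I would introduce the Leray projection $P := I - \nabla\Delta^{-1}\,\text{div}$, where $\Delta^{-1}$ denotes convolution with the Newtonian potential, and write $E = \hat{E} + \nabla\phi_E$ with $\hat{E} := PE$ divergence free and $\phi_E := \Delta^{-1}(\text{div}\,E)$, and similarly $A = \hat{A} + \nabla\phi_A$. For any $(f,g)\in V$ the covector $f$ is divergence free and compactly supported, so integration by parts gives
\[\int_{\mathbb{R}^3} f\cdot\nabla\phi_E = -\int_{\mathbb{R}^3}(\text{div}\,f)\,\phi_E = 0,\]
and likewise for $g$ paired against $\nabla\phi_A$; hence $\rho(f,g) = \int_{\mathbb{R}^3}\sum_k f_k\hat{E}^k + g^k\hat{A}_k$ for all $(f,g)\in V$, which gives existence.

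For uniqueness, suppose two divergence-free pairs $(\hat{E},\hat{A})$ and $(\hat{E}',\hat{A}')$ both represent $\rho$ on $V$. Putting $g=0$ shows $\int_{\mathbb{R}^3} f\cdot(\hat{E} - \hat{E}') = 0$ for every divergence-free test function $f$, and putting $f=0$ yields the companion identity for $\hat{A} - \hat{A}'$. Since the difference $\hat{E} - \hat{E}'$ is itself divergence free, pairing it against itself forces $\int_{\mathbb{R}^3}|\hat{E} - \hat{E}'|^2 = 0$, whence $\hat{E} = \hat{E}'$, and symmetrically $\hat{A} = \hat{A}'$.

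The step I expect to be the main obstacle is the function-space bookkeeping behind the asserted membership $(\hat{E},\hat{A})\in T_c^{1,0}(\mathbb{R}^3)\oplus T_c^{0,1}(\mathbb{R}^3)$. Since $\text{div}\,E\in C_c^\infty(\mathbb{R}^3)$ has vanishing integral by the divergence theorem, $\phi_E$ is smooth and $\nabla\phi_E$ decays rapidly, so $\hat{E}$ is certainly a smooth field; however, the Leray projection of a compactly supported field generically fails to have compact support, typically decaying only like $|x|^{-3}$. The crux is therefore to pin down the exact support and regularity claimed for $\hat{E}$ and $\hat{A}$---or else to read the conclusion at the level of the functionals they induce in $V'$, which is all the pairing actually uses---and, in tandem, to confirm that the difference $\hat{E} - \hat{E}'$ appearing in the uniqueness argument lies in the admissible class of test functions so that the self-pairing step is legitimate.
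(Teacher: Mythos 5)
Your argument is essentially the paper's: the paper likewise invokes the Helmholtz decomposition (cited there as the ``fundamental theorem of vector calculus'') to write $E^k = \hat{E}^k + \delta^{kj}\nabla_j\phi_E$ and $A_k = \hat{A}_k + \nabla_k\phi_A$, and then kills the gradient terms in the pairing by the divergence theorem together with $\mathrm{div}(f)=\mathrm{div}(g)=0$, exactly as you do via the Leray projection. You go beyond the paper in two respects, both to your credit. First, the paper offers no argument for the uniqueness assertion beyond the word ``uniquely'' attached to the decomposition; but uniqueness of the Helmholtz decomposition of a \emph{given} $(E,A)$ is not the statement at issue, which is that only one divergence-free pair can represent the functional $\rho$. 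Your self-pairing argument is the right one: the difference of two compactly supported, smooth, divergence-free representatives is itself an admissible test field (after lowering an index with $\delta$), so testing $\rho$ against it forces $\int_{\mathbb{R}^3}\abs{\hat{E}-\hat{E}'}^2 = 0$. Second, the obstacle you flag is a genuine gap---in the paper's own proof, not just yours. The divergence-free part of a compactly supported field is generically \emph{not} compactly supported: $\mathrm{div}(E)$ is compactly supported with zero integral, so $\nabla\phi_E$, and hence $\hat{E}=E-\nabla\phi_E$ outside the support of $E$, decays only like $\abs{x}^{-3}$; the asserted membership $(\hat{E},\hat{A})\in T_c^{1,0}(\mathbb{R}^3)\oplus T_c^{0,1}(\mathbb{R}^3)$ therefore does not follow, and the paper passes over this in silence. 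The lemma survives if ``compactly supported'' is relaxed to ``smooth, divergence free, and square integrable'' (the pairing still converges since $f,g$ have compact support); under that reading your uniqueness step needs one extra line---$\hat{E}-\hat{E}'$ is then an $L^2$ divergence-free field orthogonal to all compactly supported smooth divergence-free fields, which are dense in the space of $L^2$ divergence-free fields, so it vanishes---and with that emendation both your proof and the paper's go through.
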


\begin{proof}
The fundamental theorem of vector calculus implies that $E$ and $A$ can be decomposed uniquely into curl-free and divergence-free fields
\begin{align*}
    E^k &= \hat{E}^k + \delta^{kj}\nabla_j\phi_E\\
    A_k &= \hat{A}_k + \nabla_k\phi_A
\end{align*}
where $\text{div}(\hat{E}) = \text{div}(\hat{A}) = 0$ and $\phi_E,\phi_A$ are smooth scalar fields.

Now we have that for every $(f,g)\in V$,
\begin{align*}
    \rho(f,g) &= \int_{\mathbb{R}^3}\sum_k f_k E^k + g^kA_k\\
    &= \int_{\mathbb{R}^3}\sum_k f_k\hat{E}^k + f_k\delta^{kj}\nabla_j\phi_E + g^k\hat{A}_k + g^k\nabla_k\phi_A\\
    &= \int_{\mathbb{R}^3}\sum_k (f_k\hat{E}^k + g^k\hat{A}_k) - \phi_E\text{div}(f) - \phi_A\text{div}(g)\\
    &= \int_{\mathbb{R}^3} \sum_k f_k\hat{E}^k + g^k\hat{A}_k,
\end{align*}
as desired.  The second to last line is implied by the divergence theorem, while the last line is implied by the assumption that $f$ and $g$ are divergence free.
\end{proof}

In what follows, when we have linear functionals on $V$ determined by smooth, compactly supported field configurations $(E,A)\in T^{1,0}_c(\mathbb{R}^3)\oplus T_c^{0,1}(\mathbb{R}^3)\subseteq V'$, we will always choose $E$ and $A$ to be divergence free without further comment, as justified by the preceding lemma.  Notice this means that we impose both the Coulomb gauge and the first of the Maxwell equations through the kinematical structure of $V$.  The remaining Maxwell equations are encoded in the choice of complex structure.

To specify a complex structure, we define the operator $\mu_{EM}: T^{0,1}_c(\mathbb{R}^3)\to T^{1,0}_c(\mathbb{R}^3)$ by
\[(\mu_{EM}f)^j:= \delta^{jk}(-\nabla^2)^{1/2}f_k\]
for all $f\in T^{0,1}_c(\mathbb{R}^3)$.  Then we define the complex structure $J_{EM}: V\to V$ by
\[J_{EM}(f,g) = (-\mu_{EM}^{-1}g,\mu_{EM}f)\]
for all $(f,g)\in V$.  Again, $J_{EM}$ is the unique complex structure compatible with inertial time translations of the fields $(E,A)$ satisfying Maxwell's equations.  As in the previous sections, we define the electromagnetic number operators $N_\hbar^{EM}(F)$ as the number operators corresponding to this choice of complex structure, i.e., $N_\hbar^{EM}(F) := N_\hbar^{J_{EM}}(F)$ for each $F\in V$.

We again consider the representation of $\mathcal{W}(V,0)$ as $AP(V')$.  In this representation, we have the following form for $N_0^{EM}(f,g)$ as a real-valued function on phase space.

\begin{prop}
For any $(f,g)\in V$,
\[N_0^{EM}(f,g)(E,A) = \frac{1}{2}\bigg(\int_{\mathbb{R}^3}\sum_k E^kf_k + A_kg^k\bigg)^2 + \frac{1}{2}\bigg(\int_{\mathbb{R}^3}\sum_kA_k(\mu_{EM}f)^k - E^k(\mu_{EM}g)_k\bigg)^2\]
for all $(E,A)\in T^{1,0}(\mathbb{R}^3)\oplus T^{0,1}(\mathbb{R}^3)\subseteq V'$.
\end{prop}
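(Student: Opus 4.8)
The plan is to reduce the classical number operator to a sum of squares of field functionals, using the commutativity of the classical algebra, and then to read off each functional explicitly from the representation of $\mathcal{W}(V,0)$ as $AP(V')$ together with the definition of $J_{EM}$.

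First I would note that at $\hbar=0$ the Weyl algebra $\mathcal{W}(V,0)$ is commutative, being $*$-isomorphic to $AP(V')$, and that each $\Phi_0(F)$ is self-adjoint and hence a real-valued function on $V'$. Consequently the classical annihilation operator $a_0^{EM}(F)=\tfrac{1}{\sqrt{2}}\bigl(\Phi_0(F)+i\Phi_0(J_{EM}F)\bigr)$ is an honest complex-valued function whose adjoint is its pointwise complex conjugate, so that $N_0^{EM}(F)=(a_0^{EM}(F))^*a_0^{EM}(F)=|a_0^{EM}(F)|^2$. The cross terms $\pm i\,\Phi_0(F)\Phi_0(J_{EM}F)$ cancel by commutativity, leaving
\[
N_0^{EM}(F)=\tfrac{1}{2}\bigl(\Phi_0(F)^2+\Phi_0(J_{EM}F)^2\bigr),
\]
exactly as in the reduction already used for the Minkowski and Rindler number operators.

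Next I would evaluate the two field functionals pointwise. From the representation of the classical fields recorded above (now paired against phase-space points via the symplectic structure of $V$), one has $\Phi_0(f,g)(E,A)=\int_{\mathbb{R}^3}\sum_k\bigl(E^kf_k+A_kg^k\bigr)$ for smooth divergence-free $(E,A)$, which is precisely the first functional appearing in the statement. Substituting the definition $J_{EM}(f,g)=(-\mu_{EM}^{-1}g,\mu_{EM}f)$ into the same formula gives
\[
\Phi_0(J_{EM}(f,g))(E,A)=\int_{\mathbb{R}^3}\sum_k\bigl(A_k(\mu_{EM}f)^k-E^k(\mu_{EM}^{-1}g)_k\bigr),
\]
the second functional in the statement. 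Inserting both expressions into the reduced form of $N_0^{EM}(F)$ and distributing the factor $\tfrac12$ over each square then yields the claimed identity.

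The only point requiring care---more bookkeeping than genuine obstacle---is the interplay of the vector/covector index structure with the divergence-free constraints defining $V$. I would observe that $(-\nabla^2)^{1/2}$ commutes with $\mathrm{div}$, so $\mu_{EM}$ and $\mu_{EM}^{-1}$ preserve divergence-freeness; hence $J_{EM}$ maps $V$ into $V$ and each $\Phi_0(J_{EM}F)$ is a well-defined functional on the divergence-free phase space. Together with the preceding lemma, which permits us to take the representing fields $(E,A)$ divergence-free, this ensures every pairing above is unambiguous and the computation goes through verbatim.
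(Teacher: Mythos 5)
Your proof is correct and is essentially the argument the paper intends: the paper states this proposition without proof, presenting it as immediately determined by the representation of $\mathcal{W}(V,0)$ as $AP(V')$, which is exactly the reduction $N_0^{EM}(F)=\tfrac{1}{2}\big(\Phi_0(F)^2+\Phi_0(J_{EM}F)^2\big)$ by commutativity followed by pointwise evaluation that you carry out; your remark that $(-\nabla^2)^{1/2}$ commutes with $\text{div}$, so that $J_{EM}$ maps $V$ into $V$, makes explicit a point the paper leaves tacit. One caveat: the second functional you actually derive is $\int_{\mathbb{R}^3}\sum_k A_k(\mu_{EM}f)^k - E^k(\mu_{EM}^{-1}g)_k$, with $\mu_{EM}^{-1}$, whereas the statement as printed has $E^k(\mu_{EM}g)_k$. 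Your expression is the correct one: under the paper's definitions $\mu_{EM}$ acts only on covector fields while $g$ is a vector field, and only $\mu_{EM}^{-1}g$ carries the lower index the formula requires; it also parallels the Minkowski proposition, whose second functional involves $\pi(\mu_M^{-1}g)$. So the printed statement contains a typo, and you should flag it explicitly rather than silently identify your (correct) expression with "the second functional in the statement."
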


Furthermore, we can construct the \emph{classical total electromagnetic number operator} $\overline{N}_0^{EM}$ by letting $\{F_k\}$ be any $\alpha_{J_{EM}}$-orthonormal basis for $V$ and defining
\[\overline{N}_0^{EM} := \sum_k N_0^{EM}(F_k).\]
The following proposition provides an explicit form for the total number operator.

\begin{prop}
\[\overline{N}_0^{EM}(E,A) = \frac{1}{2}\int_{\mathbb{R}^3} \sum_k E^k(\mu_{EM}^{-1}E)_k + A_k(\mu_{EM}A)^k\]
for any $(E,A)\in T_c^{1,0}(\mathbb{R}^3)\oplus T_c^{0,1}(\mathbb{R}^3)\subseteq V'$.
\end{prop}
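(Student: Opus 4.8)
The plan is to follow the template of the proof of Prop.~\ref{prop:Mnumber}, recasting each summand $N_0^{EM}(F_k)$ as a squared $\alpha_{J_{EM}}$-inner product and then invoking the Pythagorean theorem (Parseval's identity) in the one-particle Hilbert space completion of $V$. The first step is to rewrite the phase-space functional $\Phi_0(F)$ in symplectic form. Given a point $(E,A)\in V'$ determined by smooth, compactly supported fields---which by the preceding lemma we may take to satisfy $\text{div}(E)=\text{div}(A)=0$---one checks directly from the explicit form of $\Phi_0$ and the definition of $\sigma$ that $\Phi_0(F)(E,A) = \sigma((A,-E),F)$ for every $F\in V$. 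The divergence-free reduction is essential here: it guarantees that $(A,-E)$ is itself an element of $V$, hence lies in the Hilbert completion, so that Parseval's identity against the orthonormal basis $\{F_k\}$ will be applicable.

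With this in hand, I would combine the identity $\Phi_0(F)(E,A)=\sigma((A,-E),F)$ with the definition $\alpha_{J_{EM}}(X,F)=\sigma(X,J_{EM}F)+i\sigma(X,F)$ to obtain $\alpha_{J_{EM}}((A,-E),F) = \Phi_0(J_{EM}F)(E,A) + i\Phi_0(F)(E,A)$. Since the preceding proposition gives $N_0^{EM}(F)(E,A) = \tfrac12\big(\Phi_0(F)(E,A)^2 + \Phi_0(J_{EM}F)(E,A)^2\big)$, this yields $N_0^{EM}(F)(E,A) = \tfrac12\abs{\alpha_{J_{EM}}((A,-E),F)}^2$. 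Summing over the $\alpha_{J_{EM}}$-orthonormal basis $\{F_k\}$ and applying the Pythagorean theorem in the completion then gives $\overline{N}_0^{EM}(E,A) = \tfrac12\,\alpha_{J_{EM}}((A,-E),(A,-E))$.

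The final step is to evaluate this inner product explicitly. The imaginary part $\sigma((A,-E),(A,-E))$ vanishes by antisymmetry of $\sigma$, so $\alpha_{J_{EM}}((A,-E),(A,-E)) = \sigma((A,-E),J_{EM}(A,-E))$. Using $J_{EM}(f,g)=(-\mu_{EM}^{-1}g,\mu_{EM}f)$ one computes $J_{EM}(A,-E) = (\mu_{EM}^{-1}E,\mu_{EM}A)$, and substituting into the definition of $\sigma$ produces $\int_{\mathbb{R}^3}\sum_k E^k(\mu_{EM}^{-1}E)_k + A_k(\mu_{EM}A)^k$, which is the claimed formula after the overall factor of $\tfrac12$. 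The main obstacle---beyond the bookkeeping of raised and lowered indices for the covector and vector fields---is justifying the two analytic facts underlying Parseval: that $\alpha_{J_{EM}}$ is a genuine Hermitian inner product (which rests on the componentwise self-adjointness and positivity of $\mu_{EM}=(-\nabla^2)^{1/2}$, already part of the verification that $J_{EM}$ is a compatible complex structure) and that $(A,-E)$ genuinely lies in the Hilbert completion, which is exactly where the divergence-free lemma does its work. Unlike the Hamiltonian propositions, no further integration-by-parts identity for $\mu_{EM}$ is needed here, since the target expression is already in the form produced by $\sigma$.
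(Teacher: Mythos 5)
Your proposal is correct and is essentially the paper's own argument: the paper proves this proposition by citing ``an analogous calculation to that in the proof of Prop.~\ref{prop:Mnumber}," and your write-up is exactly that calculation carried out for the Maxwell field---rewriting $N_0^{EM}(F)(E,A)=\tfrac12\abs{\alpha_{J_{EM}}((A,-E),F)}^2$, applying the Pythagorean (Parseval) identity over the $\alpha_{J_{EM}}$-orthonormal basis, and evaluating $\tfrac12\alpha_{J_{EM}}((A,-E),(A,-E))$ via $J_{EM}(A,-E)=(\mu_{EM}^{-1}E,\mu_{EM}A)$. Your explicit attention to the divergence-free lemma (so that $(A,-E)$ lies in $V$ and hence in the Hilbert completion) is a point the paper leaves implicit, but it is the same route.
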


\begin{proof}
This follows from an analogous calculation to that in the proof of Prop. \ref{prop:Mnumber}.
\end{proof}

Finally, we can construct the \emph{classical electromagnetic Hamiltonian} $H_0^{EM}$.  Let $\{\overset{k}{f}\}$ be any orthonormal basis for $T^{1,0}_c(\mathbb{R}^3)$ with the generalized $L^2$-inner product
\[\inner{f}{g} = \int_{\mathbb{R}^3} \sum_{l,m} \delta_{jl}{f^j}{g^l}.\]
Now we define
\[H_0^{EM}:=\sum_k N_0^{EM}(\overset{k}{f},0).\]
With this definition, the classical limit of the electromagnetic Hamiltonian also takes a familiar form as a real-valued function on phase space.

\begin{prop}
\[H_0^{EM}(E,A) = \frac{1}{2}\int_{\mathbb{R}^3}\sum_{j,k}\delta_{jk}E^jE^k + \delta^{jk}\text{\emph{curl}}(A)_j \text{\emph{curl}}(A)_k\]
for any $(E,A)\in T_c^{1,0}(\mathbb{R}^3)\oplus T_c^{0,1}(\mathbb{R}^3)\subseteq V'$.
\end{prop}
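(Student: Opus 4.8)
The plan is to follow the template of the proof of Prop.~\ref{prop:Menergy}, with the scalar operator $\mu_M$ replaced by the componentwise operator $\mu_{EM}$ and with the final vector-calculus step adapted to account for the curl. First I would substitute $g=0$ and $f=\overset{k}{f}$ into the preceding explicit formula for $N_0^{EM}(f,g)$, obtaining
\[
N_0^{EM}(\overset{k}{f},0)(E,A) = \frac{1}{2}\bigg(\int_{\mathbb{R}^3}\sum_j E^j(\overset{k}{f})_j\bigg)^2 + \frac{1}{2}\bigg(\int_{\mathbb{R}^3}\sum_j A_j(\mu_{EM}\overset{k}{f})^j\bigg)^2,
\]
where the covector $(\overset{k}{f})_j$ is obtained from the basis vector field $\overset{k}{f}$ by lowering with $\delta$. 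Summing over $k$, I would recognize each squared factor as the modulus squared of an inner product against $\overset{k}{f}$ in the real Hilbert space completion of $T^{1,0}_c(\mathbb{R}^3)$ under $\inner{\cdot}{\cdot}$, and apply the Pythagorean (Parseval) identity exactly as in Prop.~\ref{prop:Mnumber} and Prop.~\ref{prop:Menergy}.

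For the first term this immediately gives $\sum_k\abs{\inner{E}{\overset{k}{f}}}^2 = \inner{E}{E} = \int_{\mathbb{R}^3}\sum_{j,l}\delta_{jl}E^jE^l$, contributing $\frac{1}{2}\int_{\mathbb{R}^3}\sum_{j,k}\delta_{jk}E^jE^k$. For the second term I would use the self-adjointness of $(-\nabla^2)^{1/2}$ acting componentwise---equivalently, the self-adjointness of $\mu_{EM}$---to write $\int_{\mathbb{R}^3}\sum_j A_j(\mu_{EM}\overset{k}{f})^j = \inner{\mu_{EM}A}{\overset{k}{f}}$, so that Parseval yields $\sum_k\abs{\inner{\mu_{EM}A}{\overset{k}{f}}}^2 = \inner{\mu_{EM}A}{\mu_{EM}A}$. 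Moving one factor of $\mu_{EM}$ back by self-adjointness and using that $\mu_{EM}^2 = -\nabla^2$ componentwise, this equals $\int_{\mathbb{R}^3}\sum_j A_j(-\nabla^2 A)_j$.

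The key step---and the only one that genuinely departs from the scalar case---is the final identification of $\int_{\mathbb{R}^3}\sum_j A_j(-\nabla^2 A)_j$ with $\int_{\mathbb{R}^3}\sum_{j,k}\delta^{jk}\text{curl}(A)_j\text{curl}(A)_k$. Here I would invoke the standard vector identity $\text{curl}(\text{curl}(A)) = \nabla(\text{div}(A)) - \nabla^2 A$, which collapses to $\text{curl}(\text{curl}(A)) = -\nabla^2 A$ precisely because $A$ has been chosen divergence free (the Coulomb gauge built into $V$ together with the preceding lemma). Since curl is formally self-adjoint on compactly supported fields---$\int_{\mathbb{R}^3} A\cdot\text{curl}(B) = \int_{\mathbb{R}^3}\text{curl}(A)\cdot B$, obtained by applying the divergence theorem to $\text{div}(A\times B) = B\cdot\text{curl}(A) - A\cdot\text{curl}(B)$---a single integration by parts gives $\int_{\mathbb{R}^3} A\cdot\text{curl}(\text{curl}(A)) = \int_{\mathbb{R}^3}\text{curl}(A)\cdot\text{curl}(A)$, and assembling the two terms produces the claimed expression. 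I expect this last vector-calculus manipulation, and in particular the bookkeeping of the divergence-free condition that turns the componentwise Laplacian into $\text{curl}\,\text{curl}$, to be the main obstacle; the Parseval and self-adjointness steps are routine adaptations of Prop.~\ref{prop:Menergy}.
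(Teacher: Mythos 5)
Your proof is correct, and its first half coincides with the paper's: you substitute $(\overset{k}{f},0)$ into the explicit formula for $N_0^{EM}$, apply the Pythagorean/Parseval identity in the generalized $L^2$ inner product, and use the self-adjointness of $\mu_{EM}$ together with $\mu_{EM}^2=-\nabla^2$ (componentwise) to reduce everything to the identity
\[
\int_{\mathbb{R}^3}\sum_j A_j(-\nabla^2 A)_j \;=\; \int_{\mathbb{R}^3}\sum_{j,k}\delta^{jk}\text{curl}(A)_j\,\text{curl}(A)_k,
\]
which is exactly what the paper compresses into ``an analogous calculation to that in the proof of Prop.~\ref{prop:Menergy}.'' Where you genuinely depart from the paper is in establishing this identity. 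The paper proceeds in components: it expands $-\sum_j A_j\nabla^2A_j$, integrates by parts the off-diagonal terms into $(\partial_iA_j)^2$ with $i\neq j$, uses $\partial_1A_1+\partial_2A_2+\partial_3A_3=0$ to rewrite the diagonal terms, integrates by parts again to generate the cross terms $-2(\partial_1A_2)(\partial_2A_1)$, and reassembles the squares $(\partial_iA_j-\partial_jA_i)^2$. You instead quote the identity $\text{curl}(\text{curl}(A))=\nabla(\text{div}(A))-\nabla^2A$, which the divergence-free choice of $A$ collapses to $-\nabla^2A=\text{curl}(\text{curl}(A))$, and then use the formal self-adjointness of curl, $\int A\cdot\text{curl}(B)=\int\text{curl}(A)\cdot B$, obtained from $\text{div}(A\times B)=B\cdot\text{curl}(A)-A\cdot\text{curl}(B)$ and the divergence theorem (legitimate here because $A$, and hence $A\times\text{curl}(A)$, is compactly supported); taking $B=\text{curl}(A)$ finishes the argument. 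Both routes are valid. Yours is shorter and makes the role of the Coulomb gauge conceptually transparent---it is precisely what converts the componentwise Laplacian into $\text{curl}\,\text{curl}$---whereas the paper's componentwise computation is self-contained and invokes no named vector identities; in effect, the paper's two integrations by parts are the unpacking of your single application of curl self-adjointness plus the double-curl identity. (A minor point in your favor: your signs are consistent throughout, whereas the paper's first displayed equation in this proof omits the minus sign on $\nabla^2$ that its following sentence restores.)
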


\begin{proof}
An analogous calculation to that in the proof of Prop. \ref{prop:Menergy} yields

\[H_0^{EM}(E,A) = \frac{1}{2}\int_{\mathbb{R}^3}\sum_{j,k} \delta_{jk} E^jE^k + \delta^{jk}(\nabla^2 A_j)A_k,\]
so it suffices to show that $\int\sum_{j,k}\delta^{jk}(-\nabla^2 A_j)A_k = \int\sum_{j,k} \delta^{jk}\text{curl}(A)_j\text{curl}(A)_k$.

To this end, first note that since we choose $A$ to be divergence free, we have $0 = \partial_1A_1 + \partial_2A_2 + \partial_3A_3$ and hence,
\begin{align*}
    -A_1\partial_1^2A_1 &= A_1\partial_1(\partial_2A_2 + \partial_3A_3)\\
     -A_2\partial_2^2A_2 &= A_2\partial_2(\partial_1A_1 + \partial_3A_3)\\
      -A_3\partial_3^2A_3 &= A_3\partial_3(\partial_1A_1 + \partial_2A_2).
\end{align*}

From this, it follows that
\begin{align*}
    \int_{\mathbb{R}^3}\sum_{j,k}\delta^{jk}(-\nabla^2A_j)A_k &= \int_{\mathbb{R}^3}-A_1(\partial_1^2A_1 + \partial_2^2A_1 + \partial_3^2A_1) -A_2(\partial_1^2A_2 + \partial_2^2A_2 + \partial_3^2A_2)\\
    &\indent\indent - A_3(\partial_1^2A_3 + \partial_2^2A_3 + \partial_3^2A_3)
    \\
    &= \int_{\mathbb{R}^3} (\partial_1A_2)^2 + (\partial_2A_1)^2 + (\partial_2A_3)^2 + (\partial_3A_2)^2 + (\partial_1A_3)^2 + (\partial_3A_1)^2\\
    &\indent\indent -\int_{\mathbb{R}^3}A_1\partial_1^2A_1 + A_2\partial_2^2A_2 + A_3\partial_3^2A_3\\
    &=\int_{\mathbb{R}^3} (\partial_1A_2)^2 + (\partial_2A_1)^2 + (\partial_2A_3)^2 + (\partial_3A_2)^2 + (\partial_1A_3)^2 + (\partial_3A_1)^2\\
    &\indent\indent+\int_{\mathbb{R}^3}A_1\partial_1(\partial_2A_2 + \partial_3A_3)+A_2\partial_2(\partial_1A_1 + \partial_3A_3)+A_3\partial_3(\partial_1A_1 + \partial_2A_2)\\
    &=\int_{\mathbb{R}^3} (\partial_1A_2)^2 + (\partial_2A_1)^2 + (\partial_2A_3)^2 + (\partial_3A_2)^2 + (\partial_1A_3)^2 + (\partial_3A_1)^2\\
    &\indent\indent-2\int_{\mathbb{R}^3}(\partial_1A_2)(\partial_2A_1) + (\partial_2A_3)(\partial_3A_2) + (\partial_1A_3)(\partial_3A_1)\\
    &=\int_{\mathbb{R}^3}(\partial_1A_2-\partial_2A_1)^2 + (\partial_2A_3-\partial_3A_2)^2 + (\partial_1A_3-\partial_3A_1)^2\\
    &=\int_{\mathbb{R}^3}\sum_{j,k}\delta^{jk}\text{curl}(A)_j\text{curl}(A)_k,
\end{align*}
where we obtain the second and fourth equalities from integration by parts.  
\end{proof}

This shows that the classical limit of the electromagnetic Hamiltonian is the classical total energy of the electromagnetic field for initial data of compact support.  This, of course, is the conserved quantity of the electromagnetic field corresponding to the inertial timelike symmetries of Minkowski spacetime.

\section{Conclusion}
\label{sec:con}

In this paper, we have analyzed the classical limits of unbounded quantities and illustrated our methods for number operators and Hamiltonians in linear Bosonic quantum field theories.  Our strategy has stayed close to the framework of strict deformation quantization by (i) looking for norm approximations and (ii) treating physical magnitudes as elements of an abstract partial *-algebra rather than focusing on particular Hilbert space representations.  Using developments in the theory of algebras of unbounded operators, we considered continuous extensions of positive quantization maps.  We used these extensions to prove norm approximations in the classical limit for unbounded quantities including field operators, creation and annihilation operators, and number operators.  We then analyzed the classical limits of number operators and associated Hamiltonians for the Klein-Gordon field theory and the Maxwell field theory.  We established that the methods developed in this paper yield a unified approach to the classical limit for both the Minkowski number operators and the Rindler number operators for the Klein-Gordon field, which are unbounded operators that are typically understood to be affiliated with unitarily inequivalent representations of the Weyl algebra.  In both cases, the classical limits of the associated Hamiltonians are the classical conserved energy quantities associated with certain timelike symmetries, as expected.  Similarly, we established (as expected) that the classical limit of the Hamiltonian for the free Maxwell field is the classical energy of the electromagnetic field.  Thus, the methods developed here for taking classical limits of unbounded operators capture the intended use of the classical limit while extending its application beyond C*-algebras in strict quantization.

\section*{Acknowledgments}

Thanks to Adam Caulton, Charles Godfrey, and the audience of the conference ``Foundations of Quantum Field Theory" (Rotman Institute of Philosophy, 2019) for helpful comments and discussion.  BHF acknowledges support from the Royalty Research Fund at the University of Washington during the completion of this work as well as the National Science Foundation under Grant No. 1846560.

\bibliography{bibliography.bib}
\bibliographystyle{mla.bst}

\end{document}